\definecolor{TUMBlue}{RGB}{0,101,189} %
\definecolor{TUMBlueDark}{RGB}{0,82,147} %
\definecolor{TUMBlueLight}{RGB}{152,198,234} %
\definecolor{TUMBlueMedium}{RGB}{100,160,200} %
\definecolor{TUMOrange}{RGB}{227,114,34} %
\definecolor{TUMIvory}{RGB}{218,215,203} %
\definecolor{TUMGreen}{RGB}{162,173,0} %
\definecolor{TUMGray}{gray}{0.6} %
\definecolor{TUMGrayDark}{gray}{0.3} %
\definecolor{TUMGreenDark}{RGB}{0,124,48} %
\definecolor{TUMRed}{RGB}{196,7,27} %
\definecolor{plotColor1}{RGB}{0,101,189} %
\definecolor{plotColor2}{RGB}{0,124,48} %
\definecolor{plotColor3}{RGB}{196,7,27} %
\definecolor{plotColor4}{RGB}{227,114,34} %
\definecolor{plotColor5}{RGB}{0,82,147} %
\definecolor{plotColor6}{RGB}{162,173,0} %
\definecolor{plotColor7}{gray}{0.3} %
\definecolor{plotColorIA1}{RGB}{0,124,48} %
\definecolor{plotColorIA2}{RGB}{0,82,147} %
\definecolor{plotColorIA3}{RGB}{100,160,200} %
\definecolor{plotColorIA4}{RGB}{152,198,234} %
\definecolor{plotColorIA5}{gray}{0.3} %
\definecolor{plotColorIA6}{RGB}{227,114,34} %
\definecolor{plotColorIA7}{RGB}{196,7,27} %
\newtheorem{theorem}{Theorem}
\newtheorem*{theorem*}{Theorem}
\newtheorem{proposition}{Proposition}
\newtheorem{lemma}{Lemma}
\newtheorem{corollary}{Corollary}
\theoremstyle{definition}
\newtheorem{defn}{Definition}
\theoremstyle{remark}
\newtheorem*{remark}{Remark}
\newtheorem*{example}{Example}
\pgfplotsset{compat=newest}
\newif\ifcomment
\newif\ifhighlightChanges
\newcommand{\eqdef}{\vcentcolon=}
\newcommand{\x}{\ensuremath{\mathbf{x}}}
\newcommand{\X}{\ensuremath{\mathbf{X}}}
\newcommand{\Y}{\ensuremath{\mathbf{Y}}}
\renewcommand{\d}{\ensuremath{\mathbf{d}}}
\renewcommand{\i}{\ensuremath{\mathbf{i}}}
\renewcommand{\j}{\ensuremath{\mathbf{j}}}
\newcommand{\y}{\ensuremath{\mathbf{y}}}
\renewcommand{\c}{\ensuremath{\mathbf{c}}}
\newcommand{\w}{\ensuremath{\mathbf{w}}}
\renewcommand{\v}{\ensuremath{\mathbf{v}}}
\renewcommand{\a}{\ensuremath{\mathbf{a}}}
\newcommand{\0}{\ensuremath{\mathbf{0}}}
\newcommand{\F}{\ensuremath{\mathbb{F}}}
\newcommand{\cF}{\ensuremath{\mathcal{F}}}
\newcommand{\C}{\ensuremath{\mathcal{C}}}
\newcommand{\A}{\ensuremath{\mathfrak{A}}}
\renewcommand{\L}{\ensuremath{\mathcal{L}}}
\newcommand{\Z}{\ensuremath{\mathbb{Z}}}
\renewcommand{\epsilon}{\varepsilon}
\newcommand{\Mod}[1]{\ (\mathrm{mod}\ #1)}
\newcommand{\Mods}[1]{\ (\mathrm{mod}^*\ #1)}
\newcommand{\Modsp}[2]{\ (\mathrm{mod}_{#2}^*\ #1)}
\newcommand{\Flead}{F_{\mathrm{lead}}}
\newcommand{\Fdrop}{F_{\mathrm{drop}}}
\newcommand{\rB}{r^{\mathrm{B}}}
\newcommand{\NB}{N^{\mathrm{B}}}
\newcommand{\NP}{N^{\mathrm{P}}}
\newcommand{\rP}{r^{\mathrm{P}}}
\DeclareMathOperator{\pair}{Pair}
\newcolumntype{C}{>{$}c<{$}} %
\begin{document}
\title{Lifted Reed-Solomon Codes\\ and Lifted Multiplicity Codes}
\author{%
    \IEEEauthorblockN{Lukas~Holzbaur, Rina~Polyanskaya, Nikita~Polyanskii, Ilya~Vorobyev, and Eitan~Yaakobi}\\
	\thanks{
      The results on lifted RS codes have partially been presented at the IEEE International Symposium on Information Theory (ISIT) 2020~\cite{holzbaur2020lifted} and parts of the results on lifted multiplicity codes have been presented at the IEEE Information Theory Workshop (ITW) 2020~\cite{holzbaur2021lifted}.

      L. Holzbaur's work was supported by the Technical University of Munich -- Institute for Advanced Study, funded by the German Excellence Initiative and European Union 7th Framework Programme under Grant Agreement No. 291763 and the German Research Foundation (Deutsche Forschungsgemeinschaft, DFG) under Grant No. WA3907/1-1. N. Polyanskii's work was supported by a grant from the Russian Science Foundation (grant no. 19-71-00137).    E. Yaakobi's work was supported in part by the Israel Science Foundation  under Grant No. 1817/18 and by the Technion Hiroshi Fujiwara Cyber Security Research Center and the Israel National Cyber Directorate.

      L.~Holzbaur is with the Institute for Communications Engineering, Technical University of Munich, Germany. R.~Polyanskaya is with the Institute for Information Transmission Problems, Russian Academy of Sciences, Russia. N.~Polyanskii is with the  Center for Computational and Data-Intensive Science and Engineering, Skolkovo Institute of Science and Technology, Russia, and the Institute for Communications Engineering, Technical University of Munich, Germany. I.~Vorobyev is with the  Center for Computational and Data-Intensive Science and Engineering, Skolkovo Institute of Science and Technology, Russia. E.~Yaakobi is with the Computer Science Department, Technion --- Israel Institute of Technology, Israel.

Emails: lukas.holzbaur@tum.de, rina.polianskaia@gmail.com, nikita.polyansky@gmail.com, vorobyev.i.v@yandex.ru, yaakobi@cs.technion.ac.il
    }}

  \IEEEoverridecommandlockouts
  \maketitle

\begin{abstract}

Lifted Reed-Solomon and multiplicity codes are classes of codes, constructed from specific sets of $m$-variate polynomials. These codes allow for the design of high-rate codes that can recover every codeword or information symbol from many disjoint sets. Recently, the underlying approaches have been combined for the bi-variate case to construct lifted multiplicity codes, a generalization of lifted codes that can offer further rate improvements. We continue the study of these codes by first establishing new lower bounds on the rate of lifted Reed-Solomon codes for any number of variables $m$, which improve upon the known bounds for any $m\ge 4$. Next, we use these results to provide lower bounds on the rate and distance of lifted multiplicity codes obtained from polynomials in an arbitrary number of variables, which improve upon the known results for any $m\ge 3$.
 Specifically, we investigate a subcode of a lifted multiplicity code formed by the linear span of $m$-variate monomials whose restriction to an arbitrary line in $\F_q^m$ is equivalent to a low-degree univariate polynomial. We find the tight asymptotic behavior of the fraction of such monomials when the number of variables $m$ is fixed and the alphabet size $q=2^\ell$ is large.

Using these results, we give a new explicit construction of batch codes utilizing lifted Reed-Solomon codes. For some parameter regimes, these codes have a better trade-off between parameters than previously known batch codes. Further, we show that lifted multiplicity codes have a better trade-off between redundancy and the number of disjoint recovering sets for every codeword or information symbol than previously known constructions, thereby providing the best known PIR  codes for some parameter regimes. Additionally, we present a new local self-correction algorithm for lifted multiplicity codes.
\vspace{-5pt}
\end{abstract}
\begin{IEEEkeywords}
   Lifted Reed-Solomon Codes, lifted multiplicity codes, batch codes,  PIR codes
\end{IEEEkeywords}

\section{Introduction}
The concepts of \emph{locality} and \emph{availability} of codes have been subject to intensive studies.  Informally, the locality of a code refers to the number of codeword symbols that need to be accessed in order to recover a single codeword or information symbol and availability is the number of such (disjoint) recovery sets. These properties are of interest in a variety of applications, such as load balancing in distributed data storage, cryptography, and low-complexity error correction/detection. Several different notions related to these parameters have been considered in the literature, including, but not limited to, locally recoverable codes (LRCs) \cite{huang2013pyramid,GHSY12}, locally decodable/correctable codes (LDCs/LCCs)~\cite{katz2000efficiency,yekhanin2012locally}, relaxed LCCs~\cite{gur2018relaxed} and LDCs~\cite{ben2006robust}, batch codes~\cite{ishai2004batch}, PIR codes~\cite{fazeli2015pir}, and codes with the disjoint repair group property (DRGP)~\cite{li2019lifted}.

Reed-Muller (RM) codes are a popular class of codes that can provide strong locality and availability properties, as already exploited in the early majority-logic decoding algorithms \cite{reed1954class}.
These codes are defined as the evaluation of multi-variate polynomials up to a specific degree in all points of a multidimensional space. Their restriction to the evaluation points that fall on one line in this evaluation space can readily be seen to be equivalent to the evaluation of a univariate polynomial in the variable over the one-dimensional space spanned by this line. If the degree of this univariate polynomial is low, these positions form a codeword of a (non-trivial) Reed-Solomon (RS) code, another well-studied class of evaluation codes. This principle can be exploited to show locality and availability properties of the RM code, which have been subject to extensive study (see, e.g., \cite{arora2003improved,alon2005testing,rubinfeld1996robust}).  However, the obvious drawback of RM codes with nice local recovery properties is their rather low rate of $R\leq 1/2$.

To overcome this issue of low rate, the concept of \emph{lifted RS codes} was introduced in~\cite{guo2013new}. Instead of evaluating only multi-variate polynomials of a limited degree, as in RM codes, these codes consist of the evaluation of all polynomials that are equivalent to the evaluation of a low-degree univariate polynomial when restricted to a line. Using this concept of lifting, which first appeared in \cite{ben2011symmetric} in the context of LDPC codes, \cite{guo2013new} presents constructions of codes from multi-variate polynomials along with good bounds on the redundancy for the bi-variate case. These codes are of considerably higher rate than RM codes, while, broadly speaking, preserving the locality properties of the RM code. The main highlight of these codes is the construction of high-rate high-error LCCs. As a conceptual result, it was shown in~\cite{guo2013new} that any polynomial producing a codeword of the lifted RS code can be decomposed to a linear combination of \textit{good} monomials whose restriction to lines are low-degree. Thus, the code rate is equal to the \textit{fraction} of good monomials.
We remark that the distance properties of these codes follow from the fact that each symbol has many disjoint recovering sets and, thus, the relative distance of lifted RS codes is similar to the one of RM codes.

\emph{Multiplicity codes} \cite{kopparty2014high} are another recently introduced class of codes with good locality properties based on RM codes. Here, each codeword symbol not only consists of the evaluation of a degree-restricted multi-variate polynomial, but it also contains the evaluation of all the derivatives of this polynomial up to some order. Similar to the concept of lifting, this generalization provides codes with significantly better rate than RM codes, while providing good locality properties. In particular, it was proved~\cite{kopparty2014high} that multiplicity codes represent a family of high-rate LCCs that have very efficient local decoding algorithms. The analysis of the rate of multiplicity codes is rather straightforward, whereas distance properties are implied by a bound on the number of points that a low-degree polynomial can vanish on with high multiplicity.

As both lifted RS codes and multiplicity codes are based on generalizations of RM codes, it is a natural question whether these techniques can be combined to further improve the parameters of the respective codes. Some progress in the study of these \emph{lifted multiplicity codes} has recently been made in \cite{wu2015revisiting,li2019lifted}. In \cite{wu2015revisiting}, the authors show asymptotic results for any number of variables. Paper \cite{li2019lifted} is devoted to improving the existing bounds on the required redundancy in the bi-variate case.

\subsection{Our contribution}
In this work we continue the study of lifted RS codes and lifted multiplicity codes by generalizing the results on the bi-variate case of \cite{li2019lifted,guo2013new} to an arbitrary number of variables. Since lifted RS codes represent a specific class of lifted multiplicity codes, when derivatives are not taken into account, we focus on the description of lifted multiplicity codes in the following. Essentially, we investigate the same class of codes as defined in~\cite{li2019lifted,wu2015revisiting}. Informally, the $[m,s,d,q]$ lifted multiplicity code consists of the evaluation (together with the derivatives up to the $s$th order) of polynomials from $\F_q[X_1,\ldots,X_m]$ whose restriction to a line agrees with some polynomial of degree less than $d$ on its first $s-1$ derivatives. Note that the condition $d<qs$ guarantees~\cite{wu2015revisiting,li2019lifted} that the all-zero codeword is produced only by the zero polynomial and, therefore, we fix $d=qs-r$ for some integer $r$.

Following a standard approach, we consider a subcode of a lifted multiplicity code which is formed by the linear span of \textit{good} monomials whose restriction to a line is equivalent to a low-degree polynomial. To count bad monomials, we first make use of the result for lifted RS codes $(s=1)$ derived in Section~\ref{ss::lifted RS codes} and then extend it for larger $s$. Roughly speaking, we prove that there exists a one-to-$\binom{s+m-1}{m-1}$ correspondence between bad monomials for lifted RS codes and groups of bad monomials for lifted multiplicity codes.  This enables us to find the exact asymptotic order of the number of bad monomials when $q$ is large (for more details, see Section~\ref{ss::rate and distance of lifted multiplicity codes}). Unfortunately, unlike lifted RS codes, there is no nice structural result saying that a good polynomial of a lifted multiplicity code can be decomposed into a linear combination of good monomials (for a counterexample see Appendix~\ref{ss::equivalenceLiftedRS}). However, the fraction of good monomials serves as a lower bound on the rate of a lifted multiplicity code. Compared to prior works, our estimate for lifted RS codes is consistent with~\cite{guo2013new} for $m=2$, with~\cite{polyanskii2019lifted} for $m=3$, and better than the result of~\cite{guo2013new} for any $m>2$. As for lifted multiplicity codes, our estimate is consistent with~\cite{li2019lifted} for $m=2$ and better than the result of~\cite{wu2015revisiting} for any $m\ge 2$.

Let $\binom{m}{\ge b}$ denote the number of ways to choose an (unordered) set of at least $b$ elements from a fixed set of size $m$. Our main contribution is summarized in the following statement.
\begin{theorem*}[Parameters of lifted multiplicity code]\ \\
\textbf{Code rate:} For powers of two $q$ and $s<q$ and a positive integer $r<q$, the rate of the $[m,s,qs-r,q]$ lifted multiplicity code is
\begin{align*}
1 - O_m\left(s^{-1}(q/r)^{\log \lambda_m - m}\right)\quad \text{as } q\to\infty,
\end{align*}
where $\lambda_m$ is the largest eigenvalue of the matrix $A_m$ defined as
\begingroup
\setlength{\arraycolsep}{2pt}
\begin{align*}
\left(\begin{array}{cccccc}
\binom{m}{\ge 1} & \binom{m}{ 0} & 0 & 0 & \dots & 0 \\
\binom{m}{\ge 3} & \binom{m}{2} &\binom{m}{1} & \binom{m}{0} & \dots & 0 \\
\vdots & \vdots & \vdots & \vdots & \ddots & \vdots \\
\binom{m}{\ge 2j+1} & \binom{m}{ 2j} & \binom{m}{ 2j-1} & \binom{m}{ 2j-2} & \dots & \binom{m}{ 2j-m+2}
\\
\vdots & \vdots & \vdots & \vdots & \ddots & \vdots \\
\binom{m}{\ge 2m-1} & \binom{m}{2m-2} & \binom{m}{ 2m-3} & \binom{m}{ 2m-4} & \dots & \binom{m}{ m}
\end{array}\right).
\end{align*}
\endgroup
\textbf{Distance:} For $r,s<q$, the relative distance $\Delta$ of the $[m,s,qs-r,q]$ lifted multiplicity code is
\begin{align*}
\Delta\geq \Delta_{min}\coloneqq\left \lceil\frac{r-s+1}{s}\right\rceil\frac{q-s}{q^2}.
\end{align*}
For $s=o(r)$, $\Delta_{min}= \frac{r}{qs}(1+o(1))$.\\
\textbf{Availability:} Each symbol of a codeword of the $[m,s,qs-s,q]$ lifted multiplicity code can be reconstructed in $\lfloor q/s \rfloor^{m-1}$ different ways, each of which involves a disjoint set of coordinates of the codeword with cardinality $s^{m-1}(q-1)$.\\
\textbf{Local self-correction:} For $s^{m-2}=o(\log q)$ and $r<q$, let $\y$ be a noisy version of a codeword $\c$ of the $[m,s,qs-r,q]$ lifted multiplicity code such that the relative distance $\Delta(\y,\c)< \alpha\Delta_{min}$ with $0<\alpha<1/4$. Then for any $i\in[q^m]$, there exists a randomized algorithm $\A$ that makes at most $(q-1)s^{m-1}$ queries to $\y$ and reconstructs $c_i$ correctly with probability at least $1-2\alpha + o(1)$.

\end{theorem*}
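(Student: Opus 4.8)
The statement packages four essentially independent claims — lower bounds on rate and distance, an availability guarantee, and a local self‑correction algorithm — of which only the first is genuinely delicate; the rest follow from the defining property that the restriction of a codeword polynomial to any line agrees, on Hasse derivatives of order less than $s$, with a univariate polynomial of degree less than $qs-r$. \emph{Rate.} Since the linear span of the good $m$‑variate monomials is a subcode of the $[m,s,qs-r,q]$ code, the rate is at least the fraction of good monomials, so it suffices to estimate the number of bad monomials — from above for the bound and from below for the claimed tightness. The plan is: (i) invoke the $1$‑to‑$\binom{s+m-1}{m-1}$ correspondence sketched above, matching each bad monomial of the lifted Reed--Solomon code with a block of bad monomials of the lifted multiplicity code, which reduces the count to essentially the $s=1$ count carried out in Section~\ref{ss::lifted RS codes}; (ii) recall that for $s=1$ a monomial $X_1^{d_1}\cdots X_m^{d_m}$ is good exactly when its restriction to every line has degree below the threshold, a condition that — because $q=2^\ell$ — becomes a purely digitwise condition on the binary expansions of $d_1,\dots,d_m$ via Lucas' congruence for binomials modulo $2$; (iii) encode this digitwise condition as a linear recursion over the $\ell$ binary positions whose transfer matrix is exactly $A_m$, the rows of which count the admissible distributions of $m$ exponent‑digits against a carry pattern (producing the entries $\binom{m}{\ge 2j+1}$ in the first column and the shifted binomials elsewhere); (iv) deduce that the number of bad monomials is $\Theta_m(\lambda_m^{\ell})=\Theta_m(q^{\log\lambda_m})$ with $\lambda_m$ the Perron eigenvalue of $A_m$, and fold in the normalization by $q^m$ together with the dependence on $r,s$ to reach $1-O_m(s^{-1}(q/r)^{\log\lambda_m-m})$.

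\emph{Distance.} Fix a nonzero codeword $\c$, produced by some $f$, and a point $\p$ with nonzero symbol, so $\partial^{\mathbf e}f(\p)\neq0$ for some $\mathbf e$ with $|\mathbf e|<s$. For every $\v\in\F_q^{m-1}$ with $\sum_{|\mathbf e'|=|\mathbf e|}\partial^{\mathbf e'}f(\p)(\v,1)^{\mathbf e'}\neq0$ — and Schwartz--Zippel applied to this polynomial in $\v$ of degree at most $s-1$ leaves at least $q^{m-1}-(s-1)q^{m-2}=q^{m-2}(q-s+1)$ such directions — the restriction $f(\p+t(\v,1))$ agrees, on orders less than $s$, with a univariate $g$ of degree $<qs-r$ whose coefficient of $t^{|\mathbf e|}$ is nonzero, so $g\neq0$ and $g$ does not vanish to order $s$ at $t=0$; since $g$ vanishes to order $\ge s$ at no more than $\lfloor(qs-r-1)/s\rfloor$ of the $q$ points of the line, at least $\lceil(r+1)/s\rceil-1=\lceil(r-s+1)/s\rceil$ of the points other than $\p$ carry a nonzero symbol. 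As these lines are pairwise disjoint away from $\p$, the codeword has at least $q^{m-2}(q-s+1)\lceil(r-s+1)/s\rceil$ nonzero symbols, and dividing by $q^m$ gives $\Delta\ge\Delta_{min}$; the $s=o(r)$ form is then immediate.

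\emph{Availability.} For the $[m,s,qs-s,q]$ code one has $d=s(q-1)$. To recover the symbol at $\p$, pick a direction $(\v,1)$: the restriction $f(\p+t(\v,1))$ agrees on orders less than $s$ with a polynomial $g_\v$ of degree $<s(q-1)$, and its $s(q-1)$ Hasse‑derivative values at the $q-1$ parameters $t\in\F_q\setminus\{0\}$ form a Hermite interpolation problem determining $g_\v$, hence the numbers $\sum_{|\mathbf e|=i}\partial^{\mathbf e}f(\p)(\v,1)^{\mathbf e}$ for $i<s$. Letting $\v$ range over a product grid $B_1\times\cdots\times B_{m-1}$ with each $B_j\subset\F_q$ of size $s$, these linear relations decouple by the order $i$ and, after grouping monomials in $v_1,\dots,v_{m-1}$, form a tensor product of $s\times s$ Vandermonde systems, which is invertible; thus the $s^{m-1}$ directions of one grid already determine every $\partial^{\mathbf e}f(\p)$ with $|\mathbf e|<s$, i.e. the whole symbol. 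Partitioning each copy of $\F_q$ into $\lfloor q/s\rfloor$ disjoint blocks of size $s$ yields $\lfloor q/s\rfloor^{m-1}$ such grids; since two lines through $\p$ with distinct directions of the form $(\v,1)$ meet only in $\p$, the corresponding recovering sets are pairwise disjoint with exactly $s^{m-1}(q-1)$ coordinates each.

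\emph{Local self-correction.} The algorithm $\A$ samples a recovering set $S$ from the availability construction uniformly at random, queries the $(q-1)s^{m-1}$ entries $\y|_S$, runs an error‑correcting decoder for the local code $\C|_S$ — an interleaving of $s^{m-1}$ univariate codes of degree $<qs-r$, hence of relative distance about $r/(qs)\approx\Delta_{min}$ and correction radius about $\tfrac12\Delta_{min}$ — and outputs $c_i$ from the decoded restriction. As the recovering sets essentially partition the coordinates, the expected fraction of corrupted entries in a random $S$ is at most $\Delta(\y,\c)<\alpha\Delta_{min}$, so by Markov's inequality the local decoder fails with probability at most $2\alpha$, up to $o(1)$ slack from ceilings and from the recovering sets not covering all coordinates; conditioned on the complement the output is correct, giving success with probability $1-2\alpha+o(1)$ in the posited range $0<\alpha<1/4$. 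The hypothesis $s^{m-2}=o(\log q)$ is what keeps the interleaved local code's relative distance and the concentration of the random‑$S$ error fraction strong enough for this Markov step, and keeps $(q-1)s^{m-1}$ a nontrivial query count. Across the whole theorem I expect steps (iii)--(iv) of the rate bound — pinning down $A_m$ exactly and proving that its Perron eigenvalue governs the number of bad monomials with matching upper and lower bounds, after funnelling the lifted Reed--Solomon count through the $\binom{s+m-1}{m-1}$‑to‑one correspondence — to be the main obstacle; the distance, availability, and local‑correction arguments are routine once the line‑restriction and interpolation machinery is in place.
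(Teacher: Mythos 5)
Your treatment of the rate, distance, and availability is essentially the paper's route: the rate bound via the fraction of good monomials and the digit-wise transfer-matrix $A_m$ (Propositions~\ref{prop::code cardinality} and~\ref{pr::recurrent formula}, Lemma~\ref{lem:: number of bad (qs-r,s) monomials}), the distance via a nonzero Hasse derivative at one point, the degree-$<s$ direction polynomial, and Lemma~\ref{lem::multiplicity} on high-multiplicity zeros, and availability via the product grid of size-$s$ direction sets and the interpolation Lemma~\ref{lem::good direction and multivariate polynomial}. Those three parts are correct and match the paper.

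The local self-correction argument, however, has a genuine gap. Your recovering set $S$ is a union of $s^{m-1}$ \emph{coordinate-disjoint} lines through $\p$; there is no shared index along which to "interleave" them. The restriction $\C|_S$ is therefore (essentially) a concatenation of $s^{m-1}$ codes each of minimum distance $d_{\min}=\lceil (r+1)/s\rceil$, so its minimum distance is still $d_{\min}$, and its \emph{relative} minimum distance is $\Delta_{\min}/s^{m-1}$, not the $\approx\Delta_{\min}$ you asserted. Consequently, a Markov bound on the \emph{aggregate} fraction of errors in $S$ gives nothing useful: even if the total error fraction in $S$ is below $\tfrac12\Delta_{\min}$, the errors may all concentrate on one line and defeat its univariate decoder, so "local decoder succeeds whenever the error fraction in $S$ is small" is false. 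The paper's argument is different and not routine: it decodes each line \emph{individually} up to $\lfloor(d_{\min}-1)/2\rfloor$ errors, declares a direction $\v$ good when its line has at most that many errors, observes that at least a $1-2\alpha+o(1)$ fraction of directions are good, and then invokes a Tur\'an-type theorem for $(m-1)$-partite hypergraphs (Theorem~\ref{th::Turan's type theorem}) to conclude that the good directions contain almost $q^{m-1}$ copies of the complete $(m-1)$-partite hypergraph $K_s^{(m-1)}$, i.e.\ that a random product grid $Q_2\times\cdots\times Q_m$ is entirely good with probability $1-2\alpha+o(1)$. The hypothesis $s^{m-2}=o(\log q)$ is exactly what makes the Tur\'an threshold $((m-1)q)^{m-1-1/s^{m-2}}$ negligible relative to $q^{m-1}$; it is not a concentration hypothesis on the per-$S$ error fraction. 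To repair your argument you would need to replace the Markov step by a per-line goodness count and then supply the Tur\'an extraction (or an equivalent) to show that a random subcube avoids all bad lines.
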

We have several additional remarks and comments illustrating the contribution of our paper.
\begin{itemize}
 \item The advantage of moving from lifted RS codes to lifted multiplicity codes is that the redundancy improves by a factor of $s$ (the order of derivatives), while the number of repair groups gets worse by a factor of $s^{m-1}$ and the logarithm of the alphabet size increases by a factor of ${s + m - 1 \choose m}$. This means that lifted multiplicity codes cover more parameters of codes with good locality properties than lifted RS codes. For a relevant comparison, see the remarks after Lemmas~\ref{lem::known1}-\ref{lem::known2}.

 \item Let us demonstrate the improvement in the  rate of the $[m,s,qs-r,q]$ lifted multiplicity codes compared to the rate of the multiplicity code of order-$s$ evaluations of degree $qs-r$  polynomials in $m$ variables over $\F_q$~\cite[Lemma 7]{kopparty2014high}. Both types of codes have the same estimate on the relative distance $\Delta \ge \frac{r}{qs}(1+o(1))$. However, the rate of the multiplicity code is
\begin{align*}
\frac{\binom{qs-r+m}{m}}{\binom{s+m-1}{m}q^m} < \left(\frac{qs-r+m}{(s+1/3) q}\right)^m \le 1 - \Omega_m\left(s^{-1}\right),
\end{align*}
which is smaller than the rate of lifted multiplicity codes as $\log \lambda_m< m$. Here, we point out that for large $m$, we are able to find the technical parameter $\lambda_m$ numerically only. We depict some values of $\lambda_m$ in Table~\ref{tab::eigenvalues}. This parameter stands for the exponential growth of the number of bad monomials. The inequality $\log \lambda_m< m$ follows from~\cite{guo2013new} implicitly, as the true exponent $\log \lambda_m$ was estimated by $m-p_m<m$, where $p_m:=-\log \left(1-2^{-m\lceil \log m\rceil}\right)/\lceil \log m \rceil$. On the other hand, it is possible to estimate $\log \lambda_m$ from the other side as follows
\begin{align}\label{eq::bounds on top eigenvalue}
p_m \le m - \log \lambda_m \le -\log(1-2^{-m})
\end{align}
and, thus, $m-\log \lambda_m >0$ vanishes as $m\to\infty$.
\begin{table}
  \centering
  \caption{The largest eigenvalue $\lambda_m$ of $A_m$, the resulting convergence rate $m-\log(\lambda_m)$ derived in Section~\ref{ss::lifted RS codes}, and the convergence rate $p_m$ of \cite{guo2013new} for different values of $m$. }
  \begin{tabular}{CCCC}
    m & \lambda_m & m-\log(\lambda_m) & p_m\\ \hline
    2 & 3.0000 & 4.1504 \times 10^{-1}& 4.1504 \times 10^{-1}\\
    3 & 7.2361 & 1.4479 \times 10^{-1}& 1.1360 \times 10^{-2}\\
    4 & 15.5436 & 4.1747 \times 10^{-2} & 2.8233 \times 10^{-3}\\
    5 & 31.7877 & 9.6043 \times 10^{-3} & 4.6986 \times 10^{-4}\\
    6 & 63.9217 & 1.7653 \times 10^{-3}& 1.1742 \times 10^{-4}\\
    7 & 127.9763 & 2.6714\times 10^{-4}& 2.9353 \times 10^{-5}\\
    8 & 255.9939 & 3.4467 \times 10^{-5}& 2.8664 \times 10^{-8}\\
    9 & 511.9986 & 3.8959 \times 10^{-6} & 2.6872 \times 10^{-9}\\
    10 & 1023.9997 & 3.9323 \times 10^{-7}& 3.3590 \times 10^{-10}
  \end{tabular}
  \label{tab::eigenvalues}
\end{table}

\item Observe that if a good polynomial and its derivatives do not vanish on a point, then it is still possible that the restrictions of the polynomial to some lines containing this point are equivalent to the zero polynomial. This fact was overlooked in~\cite{wu2015revisiting} when proving the distance property of lifted multiplicity codes. However, we can always say that the restriction of the polynomial to at least $(q-s)q^{m-2}$ lines crossing this point is equivalent to a non-zero univariate polynomial of degree less than $qs-r$ and, thus, the minimum distance of the code is at least $1+\lceil r/s - 1\rceil (q-s)q^{m-2}$ (for more details, see Section~\ref{ss::rate and distance of lifted multiplicity codes}).

 \item Note that the self-correction algorithm for multiplicity codes from~\cite{kopparty2014high} works well for lifted multiplicity codes. However, for small enough $s$, we present a slightly different local self-correction algorithm which requires $s\, 5^m$ times less locality. Here we combine two ideas: 1) for recovering of the evaluation of a polynomial and its derivatives up to order $s$ at a point, it is sufficient to know directional derivatives for $s^{m-1}$ lines containing the point whose directional vectors $(1,v_2,\ldots,v_m)$ form a subcube $1\times Q_2\times\dots\times Q_m$ with $Q_i\subset \F_q$, $|Q_i|=p$; 2) every $(m-1)$-uniform hypergraph with $q$ vertices in each part with at least $\epsilon q^{m-1}$ hyperedges contains a copy of $(m-1)$-uniform clique with $s$ vertices in each part (for more details, see Section~\ref{ss::locally correctable codes}).
\end{itemize}

\subsection{Outline}
The remainder of the paper is organized as follows. In Section~\ref{ss::preliminaries}, we give rigorous definitions of lifted RS codes and lifted multiplicity codes along with some auxiliary notation. The rate of lifted RS codes can be determined by computing the fraction of so-called \emph{good monomials}, for which we will derive tight asymptotic formulas in Section~\ref{ss::lifted RS codes}.
Using the latter result, in Section~\ref{ss::lifted mult codes}, we derive bounds on the rate and distance of lifted multiplicity codes. In Sections~\ref{ss:: PIR codes},~\ref{ss:: batch codes}, and~\ref{ss:: LCCs}, we apply the results of Sections~\ref{ss::lifted RS codes} and~\ref{ss::lifted mult codes} results to PIR codes, batch codes, and LCCs, respectively. %
Finally, we conclude with Section~\ref{ss::conclusion}.

\section{Preliminaries}\label{ss::preliminaries}

\subsection{Notation}\label{ss::notation}

We start by introducing some notation that is used throughout the paper. For some functions $f(x)$ and $g(x)$, we write $f(x)=O(g(x))$ and $f(x)=\Omega(g(x))$ as $x\to\infty$ if there exists some real $x_0$ and $C$ such that $|f(x)|\le C|g(x)|$ and $|f(x)|\ge C|g(x)|$ for $x\ge x_0$, respectively. If both equalities $f(x)=O(g(x))$ and $f(x)=\Omega(g(x))$ hold, we use the notation $f(x)=\Theta(g(x))$. Also, we write $f(x)=o(g(x))$ as $x\to\infty$ if for every positive $\epsilon$ there exists some real $x_0$ such that $|f(x)|\le \epsilon|g(x)|$  for $x\ge x_0$. In these notations, we use a subscript, such as $O_m(f(x))$, if the parameter $m$ is to be regarded as fixed.

Let $[n]$ be the set of integers from $1$ to $n$. We use uppercase letters such as $T$ and $X$ to denote variables. A vector is denoted by bold letters, e.g., $\d$ is a vector over a field or a ring and $\X$ is a vector of variables.  Let $q=2^{\ell}$ and $\F_q$ be a field of size $q$. We write $\log x$ to denote the logarithm of $x$ in base two. By $\Z_{\ge}$ and $\Z_{n}$  denote the set of non-negative integers and the set of integers between $0$ and $n-1$, respectively. In what follows, we fix $m$ to be a positive integer representing the number of variables. For $\d = (d_1,\dots, d_m)\in \Z_{q}^m$ and $\X=(X_1,\dots,X_m)$, let $\X^\d$ denote the monomial $\prod\limits{_{i=1}^m} X_i^{d_i}$ from $\F_q[\X]$. Let $\deg(\d)$ be the sum of components of $\d\in\Z_{\ge}^n$ and $|\d|$ be the number of non-zero components of $\d$. Additionally, we define $\deg_q(\d)\eqdef \sum_{i=1}^m \lfloor d_i / q \rfloor$. For a vector $\i \in \Z_{\ge}^m$, let $[\X^\i]f(\X)$ denote the coefficient of $\X^\i$ in the polynomial $f(\X)$. For $f(\X)\in\F_q[\X]$, we define $\deg(f)$ to be the maximal $\deg(\i)$ for $\i$ such that $[\X^\i]f(\X)$ is non-zero.

Let us define a partial order relation on $\Z_{q}$. For two integers $a=\sum_{i=0}^{\ell-1} a^{(i)} 2^i$ and $b=\sum_{i=0}^{\ell-1} b^{(i)} 2^i$ with $a^{(i)},b^{(i)}\in\{0,1\}$ we write $a\le_2 b$
if $a^{(i)}\le b^{(i)}$ for all $i\in\{0,\dots, \ell-1\}$. We denote $a=(a^{(\ell-1)},...,a^{(0)})_2$.
For vectors $\d, \d'\in \Z_{q}^m$, we write $\d\le_2 \d'$ if $d_i\le_2 d_i'$ for all $i\in[m]$. %

Define the function $\Modsp{q}{s}: \Z_{\ge}\to \Z_{qs}$ that takes a non-negative integer $a$ and maps it to $a \Modsp{q}{s}$ by the rule:  if $a\in\Z_s$, then $a \Modsp{q}{s} = a$; if $a\ge s$ and $a=b \Mod{qs-s}$ with $b\in \Z_{qs}\setminus \Z_s$, then  $a \Modsp{q}{s} = b$.
If $s=1$, we drop the index and write $\!\! \Mods{q}$ instead of $\!\! \Modsp{q}{1}$.
It can be readily seen that if $a\,(\text{mod}^* q)=b$, then $T^{a} = T^{b} \Mod{T^q-T}$ in $\F_q[T]$. A similar equivalence for $\Modsp{q}{s}$ will be defined in Section~\ref{ss::def lifted multiplicity codes}.

For a function $f: \F_q^m \to \F_q$ and a set $S\subset \F^m_q$, let $f|_S$ denote the restriction of $f$ to the domain $S$. Abbreviate the set of all lines in $\F_q^m$ by
\begin{align*}
\L_m\eqdef\left\{(\w+ \v T)|_{T\in\F_q} \text{ for }\w,\v\in\F_q^m \right\}.
\end{align*}

We note that a multivariate polynomial restricted to a line is a univariate polynomial and the degree of the latter does not depend on the parameterization of the line, i.e., the degree of the univariate polynomial obtained by restricting to a line $L = (\w + \gamma_1 \v +\gamma_2\v T)|_{T \in \F_q}$ with $\gamma_1 \in \F_q$ and $\gamma_2 \in \F_q^*$ is independent of the choice of $\gamma_1$ and $\gamma_2$.
Denote the set of univariate polynomials of degree less than $d$ by
\begin{align*}
\cF_{q}(d)\eqdef\{f(T)\in\F_q[T]:\,\,\deg(f)< d\}.
\end{align*}

\subsection{Lifted Reed-Solomon codes}

Let us recall the definition of lifted Reed-Solomon codes introduced in~\cite{guo2013new}.%
\begin{defn}[Lifted Reed-Solomon code,~%
\cite{guo2013new}]\label{def::lifted RS code}
For integers $m\ge 1$ and $d< q$, the \textit{$m$-dimensional lift of a Reed-Solomon code} (or the \textit{$[m,d,q]$ lifted RS code})  is the code
\begin{align*}
\left\{(f(\a))|_{\a\in\F_q^m}:
\begin{aligned}
f(\X)\in\F_q[\X]\text{ s.t. }
\forall L\in \L_m:f|_L\in\cF_{q}(d)
\end{aligned}\right\}.
\end{align*}
\end{defn}
\begin{remark}
Note that the one-dimensional lift of a Reed-Solomon code represents the ordinary Reed-Solomon code of length $q$ and dimension $d$. Also, we observe that the $[m,d,q]$ lifted RS code includes all codewords of the $m$-variate RM code of order $d-1$ over $\F_q$.
\end{remark}
In Appendix~\ref{ss::improvement RM vs LRS}, we provide a simple example which demonstrates that there exist polynomials contained in the lifted RS code, that are not of low-degree, i.e., not contained in the respective RM code.

\begin{defn}[$d^*$-bad and good monomials] \label{def::bad * monomial}
Given a positive integer $d< q$, we say that a monomial $\X^\d$ with $\d\in\Z_q^m$ is \textit{$d^*$-bad} over $\F_q[\X]$ if there exists at least one $\i\in\Z_{q}^m$ such that $\i\le_2\d$ and $\deg(\i) \Mods{q} \in \{d,d+1,\dots,q-1\}$. A monomial is said to be \textit{$d^*$-good} if it is not $d^*$-bad.
\end{defn}
A characterization of lifting was established in~\cite{guo2013new}. We make use of this result for lifted Reed-Solomon codes.
\begin{lemma}[Follows from~{\cite[Section 2]{guo2013new}}]\label{lem::lifted Reed-Solomon codes}
The $[m,d,q]$ lifted RS code is equivalently defined as the evaluation of polynomials from the linear span of $d^*$-good monomials over $\F_q[\X]$.
\end{lemma}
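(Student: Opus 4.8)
The statement to prove is Lemma~\ref{lem::lifted Reed-Solomon codes}, which asserts that the $[m,d,q]$ lifted RS code equals the evaluation of the linear span of $d^*$-good monomials. The plan is to establish two inclusions. For the easy direction, I would show that every $d^*$-good monomial $\X^\d$ itself lies in the lifted RS code, i.e., $\X^\d|_L \in \cF_q(d)$ for every line $L \in \L_m$. Parameterize $L$ by $\a + \b T$ and expand $\prod_i (a_i + b_i T)^{d_i}$. Since we work modulo $T^q - T$ on the evaluation domain $\F_q$, the relevant exponent of each surviving term $T^k$ is really $k \Mods{q}$, and a term with exponent vector $\i \le_2 \d$ in the "binomial expansion" contributes a power $T^{\deg(\i)}$, which reduces to $T^{\deg(\i) \Mods q}$. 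A monomial appears in the reduced univariate polynomial of degree $\ge d$ only if some $\i \le_2 \d$ has $\deg(\i) \Mods q \in \{d, \dots, q-1\}$; $d^*$-goodness rules this out. The one subtlety here is to argue that cancellations cannot hurt us — but since goodness forbids the bad exponent from arising at all, no cancellation argument is needed.

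For the reverse inclusion I would use the (characteristic-two) structure of binomial coefficients. Any polynomial $f$ producing a codeword of the lifted RS code can be reduced modulo $\{X_i^q - X_i\}$ without changing its evaluation vector, so WLOG $f = \sum_{\d \in \Z_q^m} f_\d \X^\d$. The claim is that if $f$ lies in the code, then every $\X^\d$ with $f_\d \neq 0$ is $d^*$-good. Suppose not: pick a $d^*$-bad $\d$ with $f_\d \neq 0$, witnessed by some $\i \le_2 \d$ with $\deg(\i) \Mods q \in \{d, \dots, q-1\}$. The idea (from~\cite{guo2013new}) is to extract the coefficient of a specific monomial $T^{\deg(\i) \Mods q}$ along a suitable line, or more precisely to use the fact that over $\F_2$ we have $\binom{d_i}{i_i} \equiv 1 \pmod 2$ exactly when $i_i \le_2 d_i$ (Lucas), so the term indexed by $\i$ survives in $\X^\d|_L$ and cannot be cancelled by other $\i' \le_2 \d$ of the same degree because — and this is the crux — one can choose the line direction $\b$ generically (treat the $b_i$ as formal variables and examine the coefficient of a monomial $\b^{\d - \i}$, which isolates $\i$). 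This shows $\X^\d|_L \notin \cF_q(d)$ for some $L$, contradicting membership. Summing over $\d$ and using that distinct good monomials remain linearly independent as functions (again via reduction mod $X_i^q - X_i$ and the fact that $\d \mapsto \X^\d|_{\F_q^m}$ is injective on $\Z_q^m$), we conclude $f$ is in the span of good monomials.

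The main obstacle is the non-cancellation argument in the reverse direction: one must show that for a $d^*$-bad $\d$ the offending degree actually materializes in the restriction to *some* line and is not annihilated by contributions from other sub-exponent-vectors or from other monomials in $f$. The clean way to handle this is to pass to the polynomial ring $\F_q[\b_1, \dots, \b_m]$ in the line-direction coefficients (and shifts $\a$), so that the coefficient of $T^{\deg(\i)\Mods q}$ in $f|_{\a + \b T}$ becomes itself a polynomial in $\a, \b$; picking $\i$ minimal (say lexicographically least among witnesses with $f_\d\neq 0$, over all bad $\d$) ensures that the monomial $\a^{\i}\b^{\d-\i}$ in this coefficient polynomial has nonzero coefficient, hence it is not identically zero, hence some evaluation of it over $\F_q$ is nonzero provided $q$ is large enough relative to the degrees — which holds since all exponents are below $q$. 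I would lean on \cite[Section 2]{guo2013new} for the precise bookkeeping, since the lemma is stated as a direct consequence of that reference, and present the two inclusions with the characteristic-two Lucas congruence as the workhorse.
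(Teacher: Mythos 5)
Your proposal is correct, and the forward inclusion (every $d^*$-good monomial restricts on every line to a polynomial of degree below $d$, via Lucas' theorem for $\binom{d_j}{i_j}\bmod 2$ and reduction of $T^n$ to $T^{n\Mods{q}}$) is precisely the argument the paper gives in Appendix~\ref{ss::injective map}, specialized to $s=1$. For the reverse inclusion the paper supplies no argument of its own — it is delegated entirely to~\cite{guo2013new} — so there is no in-paper proof to match, but your coefficient-extraction sketch does work and the comparison is worth making. After reducing $f$ modulo $\{X_i^q-X_i\}$, the coefficient of $T^k$ in $f|_{\a+\b T}$, viewed as a polynomial in $(\a,\b)$, is
\[
c_k(\a,\b) \;=\; \sum_{\d} f_\d \sum_{\substack{\i\le_2\d\\ \deg(\i)\Mods{q}=k}} \a^{\d-\i}\b^{\i},
\]
and since $\i\le_2\d$ forces $\d-\i$ to be the bitwise complement of $\i$ inside $\d$, the map $(\d,\i)\mapsto(\d-\i,\i)$ is injective, so distinct pairs contribute distinct $(\a,\b)$-monomials and no cancellation can occur. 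Your ``lexicographically minimal witness'' step is therefore unnecessary: any witness $(\d,\i)$ with $f_\d\ne 0$ already survives. (Also note the small exponent swap in your writeup — the surviving monomial is $\a^{\d-\i}\b^{\i}$, not $\a^{\i}\b^{\d-\i}$.) Every exponent in $c_k$ lies below $q$, so a nonzero $c_k$ has a nonzero evaluation on $\F_q^{2m}$, and that evaluation automatically has $\b\ne\0$ because every surviving term has $\deg(\i)\ge 1$, yielding a genuine line on which $f|_L\notin\cF_q(d)$. This direct polynomial-identity computation is a more elementary, self-contained route than the one in~\cite{guo2013new}, which proves the same inclusion by first establishing that the lifted code is affine-invariant and that affine-invariant linear codes in characteristic $p$ admit a monomial basis; your approach trades that structural lemma for an explicit non-cancellation argument, while the affine-invariance route buys generality (it applies uniformly to higher-order lifts without redoing the bookkeeping).
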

Lemma~\ref{lem::lifted Reed-Solomon codes} suggests a way to compute the dimension of the $[m,q,d]$ lifted RS code, namely one needs to estimate the size of the set of $d^*$-good $m$-variate monomials over $\F_q[\X]$. We carry out a careful analysis of the latter in Section~\ref{ss::lifted RS codes}.

\subsection{Lifted multiplicity codes}\label{ss::def lifted multiplicity codes}
\begin{defn}
  For $f(\X)\in \F_q[\X]$ and a vector $\i\in\Z_{\ge}^m$, the $\i$th \textit{(Hasse) derivative} of $f$, denoted by $f^{(\i)}(\X)$, is the coefficient $[\Y^\i]g(\X,\Y)$, where the
  polynomial $g(\X,\Y)\coloneqq f(\X+\Y)\in \F_q[\X,\Y]$. Therefore, we have
  \begin{align*}
    g(\X,\Y) = \sum_{\i\in\Z_{\ge}^m}
    f^{(\i)}(\X)\Y^\i.
  \end{align*}
\end{defn}
For an $\x\in \F_q^{m}$, an integer $s\ge1$, and a polynomial $f(\X) \in \F_q[\X]$, we write $f^{(<s)}(\x)\in \F_q^{\binom{s+m-1}{m}}$ to denote the vector containing $f^{(\i)}(\x)$ for all $\i\in\Z_{\ge}^m$ so that $\deg(\i) < s$. In what follows, we assume that $s$ is a power of two.

We recall two well-known properties of the Hasse derivative which will imply the linearity of lifted multiplicity codes over $\F_q$.
\begin{proposition}
  Let $f(\X),\ g(\X)\in \F_q[\X]$, $\lambda\in \F_q$ and let $\i\in \Z_{\ge}^{m}$. Then we have
  \begin{enumerate}
  \item  $f^{(\i)}(\X) + g^{(\i)}(\X) = (f + g)^{(\i)}(\X).$
  \item $(\lambda f)^{(\i)}(\X)=\lambda f^{(\i)}(\X).$
  \end{enumerate}
\end{proposition}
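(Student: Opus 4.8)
The plan is to work directly from the defining identity of the Hasse derivative, namely that for any $h(\X)\in\F_q[\X]$ the polynomial $h(\X+\Y)\in\F_q[\X,\Y]$ expands uniquely as $\sum_{\i\in\Z_{\ge}^m} h^{(\i)}(\X)\Y^\i$, and then to exploit the uniqueness of coefficients of $\Y^\i$ in $\F_q[\X][\Y]$. First I would observe that the substitution map $h(\X)\mapsto h(\X+\Y)$ sending $\F_q[\X]\to\F_q[\X,\Y]$ is an $\F_q$-algebra homomorphism; in particular it is $\F_q$-linear, so $(f+g)(\X+\Y)=f(\X+\Y)+g(\X+\Y)$ and $(\lambda f)(\X+\Y)=\lambda\,f(\X+\Y)$.

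For part (1), I would write
\begin{align*}
\sum_{\i\in\Z_{\ge}^m}(f+g)^{(\i)}(\X)\,\Y^\i
&=(f+g)(\X+\Y)=f(\X+\Y)+g(\X+\Y)\\
&=\sum_{\i\in\Z_{\ge}^m}f^{(\i)}(\X)\,\Y^\i+\sum_{\i\in\Z_{\ge}^m}g^{(\i)}(\X)\,\Y^\i
=\sum_{\i\in\Z_{\ge}^m}\bigl(f^{(\i)}(\X)+g^{(\i)}(\X)\bigr)\Y^\i,
\end{align*}
and then equate the coefficients of $\Y^\i$ for each $\i$, using that $\{\Y^\i\}$ is a basis of $\F_q[\X,\Y]$ over $\F_q[\X]$. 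Part (2) is identical: from $(\lambda f)(\X+\Y)=\lambda f(\X+\Y)$ one gets $\sum_{\i}(\lambda f)^{(\i)}(\X)\Y^\i=\sum_{\i}\lambda f^{(\i)}(\X)\Y^\i$, and comparing the coefficient of $\Y^\i$ yields $(\lambda f)^{(\i)}(\X)=\lambda f^{(\i)}(\X)$.

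Since the statement is purely formal, there is essentially no obstacle; the only point requiring a word of care is the appeal to uniqueness of the expansion in powers of $\Y$ (equivalently, that $[\Y^\i](\cdot)$ is a well-defined $\F_q$-linear functional on $\F_q[\X,\Y]$ viewed as a free $\F_q[\X]$-module with basis the monomials $\Y^\i$), which is exactly what the definition of $f^{(\i)}$ presupposes. One can also phrase both parts in a single line by noting that $h\mapsto h^{(\i)}$ is the composition of the $\F_q$-linear substitution $h\mapsto h(\X+\Y)$ with the $\F_q$-linear coefficient functional $[\Y^\i]$, hence is itself $\F_q$-linear, which is precisely the conjunction of (1) and (2).
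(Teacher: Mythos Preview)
Your proof is correct and is the standard argument directly from the definition of the Hasse derivative. The paper itself does not prove this proposition; it is stated there as a well-known fact (``We recall two well-known properties of the Hasse derivative\ldots'') without proof, so there is nothing to compare against beyond noting that your argument is exactly the natural one.
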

\begin{defn}\label{def::equivalenceOrderP}
  We say that two univariate polynomials $f(X),g(X)\in \F_q[X]$ are equivalent up to order $s$ if $f^{(<s)}(x) = g^{(<s)}(x)$ for all $x\in\F_q$. To indicate such an equivalence, we write $f(X) \equiv_s g(X)$.
\end{defn}
The following statement shows the smallest possible degree of an equivalent polynomial.
\begin{proposition}[Lemma 12 in~\cite{li2019lifted}]\label{pr::reducing the power}
  Let $q$ be a power of two. For every univariate polynomial $f(X)$, there exists a unique degree-at-most $sq-1$ polynomial $g(X)$ such that $f(X) \equiv_s g(X)$. Moreover, if $s$ is a power of two, then $f(X) = g(X) \pmod{X^{qs}+X^s}$ and for all $i$ such that $\deg(f) - qs +s < i < qs$, we have $[X^i]f(X)=[X^i]g(X)$.
\end{proposition}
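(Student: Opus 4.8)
The plan is to recognize $\equiv_s$ as congruence modulo a single polynomial and then to read off both assertions from the structure of the corresponding ideal.

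First I would describe when a polynomial is equivalent to zero. Two polynomials $f,g\in\F_q[X]$ satisfy $f\equiv_s g$ precisely when $h\eqdef f-g$ vanishes to Hasse order at least $s$ at every point of $\F_q$, i.e. $h^{(i)}(a)=0$ for all $a\in\F_q$ and all $0\le i<s$. Since $h^{(i)}(a)$ is by definition the coefficient of $Y^i$ in $h(a+Y)$, these conditions hold for a fixed $a$ exactly when $Y^s\mid h(a+Y)$, that is, when $(X-a)^s\mid h(X)$. The linear polynomials $X-a$, $a\in\F_q$, are pairwise coprime and $\prod_{a\in\F_q}(X-a)=X^q-X$, so $h$ is divisible by $(X-a)^s$ for every $a\in\F_q$ if and only if $(X^q-X)^s\mid h$. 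Hence $f\equiv_s g$ holds if and only if $f\equiv g\pmod{(X^q-X)^s}$.

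Existence and uniqueness of $g$ then follow by division with remainder: $\deg\big((X^q-X)^s\big)=qs$, so every residue class modulo $(X^q-X)^s$ contains exactly one polynomial of degree at most $qs-1$, and I would take $g$ to be the representative of the class of $f$. For the ``moreover'' part I would invoke that both $q$ and $s$ are powers of two. In characteristic two, $(X^q-X)^s=(X^q+X)^s=X^s\,(X^{q-1}+1)^s$, and applying the Frobenius identity $(A+B)^{2^t}=A^{2^t}+B^{2^t}$ with $A=X^{q-1}$, $B=1$ and $s=2^t$ gives $(X^{q-1}+1)^s=X^{(q-1)s}+1$, so $(X^q-X)^s=X^{qs}+X^s$. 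Thus reduction modulo $(X^q-X)^s$ is the same operation as reduction modulo $X^{qs}+X^s$, which yields $f\equiv g\pmod{X^{qs}+X^s}$; and since $\deg g\le qs-1<qs$, the polynomial $g$ is exactly the remainder of $f$ on division by $X^{qs}+X^s$.

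It remains to compare coefficients, and here I would argue as follows. Reduction modulo $X^{qs}+X^s$ is performed by repeatedly applying the rewriting $X^{j}\mapsto X^{j-(q-1)s}$ whenever $j\ge qs$ (as $X^{qs}\equiv X^s$), an operation that only decreases exponents. Consequently, for any $i<qs$ the coefficient $[X^i]g$ can differ from $[X^i]f$ only if some monomial $X^{j}$ of $f$ with $j>i$ is eventually carried onto $X^i$; such a $j$ must be of the form $i+k(q-1)s$ with integer $k\ge1$, hence $j\ge i+(q-1)s=i+qs-s$. If $i>\deg(f)-qs+s$ then this forces $j>\deg(f)$, so no such monomial occurs in $f$, and combined with $i<qs$ (so that $X^i$ itself is not rewritten) we obtain $[X^i]g=[X^i]f$ for every such $i$. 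The routine ingredients are the division-with-remainder argument and the Frobenius computation; the step I expect to need the most care is this last one, namely controlling precisely which source exponents the reduction can carry onto a given target exponent $i$ so as to pin down the range in which the coefficients are preserved.
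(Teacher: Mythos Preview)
The paper does not prove this proposition; it is quoted from \cite{li2019lifted} and used as a black box. Your argument is correct and self-contained: the identification of $\equiv_s$ with congruence modulo $(X^q-X)^s$ is right, the Frobenius computation $(X^q-X)^s=X^{qs}+X^s$ when $s$ is a power of two is right, and the coefficient-tracking argument for the final claim is carried out carefully and correctly.
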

If $s$ is a power of two and $a\,\Modsp{q}{s}=b$, then $T^{a} \equiv_s T^{b}$.
Now we give a well-known result about the multiplicities of a multi-variate polynomial.
\begin{lemma}[Follows from~\cite{dvir2013extensions}]\label{lem::multiplicity}
 Let $f(\X)$ be a non-zero polynomial of degree at most $d$. Then the number of points $\x\in\F_q^m$ such that $f^{(\i)}(\x)=0$ for all $\i\in\Z_{\ge}^m$ with $\deg(\i)<s$ is at most $\lfloor d q^{m-1} /s\rfloor$.
\end{lemma}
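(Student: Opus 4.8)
The plan is to derive this as a corollary of the generalized (multiplicity) Schwartz--Zippel lemma, which is the central technical device of the method of multiplicities developed in~\cite{dvir2013extensions}; once that lemma is in hand, the statement is essentially immediate.

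First I would recall the notion of the \emph{multiplicity} of $f$ at a point $\x\in\F_q^m$, written $\mathrm{mult}(f,\x)$: it is the largest integer $k$ such that $f^{(\i)}(\x)=0$ for every $\i\in\Z_{\ge}^m$ with $\deg(\i)<k$ (it equals $+\infty$ only when $f\equiv 0$, which is excluded by hypothesis). Because the Hasse derivative is used, this notion behaves well in characteristic two. In this language, the set of points to be bounded is exactly $V\eqdef\{\x\in\F_q^m:\mathrm{mult}(f,\x)\ge s\}$; put $N=|V|$, so that the lemma asserts $N\le\lfloor dq^{m-1}/s\rfloor$.

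Next I would invoke the key input: for a non-zero polynomial $f$ of total degree at most $d$ and any finite $S\subseteq\F_q$,
\[
\sum_{\x\in S^m}\mathrm{mult}(f,\x)\ \le\ d\,|S|^{m-1}.
\]
Taking $S=\F_q$ and retaining only the terms with $\x\in V$,
\[
Ns\ \le\ \sum_{\x\in V}\mathrm{mult}(f,\x)\ \le\ \sum_{\x\in\F_q^m}\mathrm{mult}(f,\x)\ \le\ d\,q^{m-1},
\]
hence $N\le dq^{m-1}/s$, and since $N$ is a non-negative integer, $N\le\lfloor dq^{m-1}/s\rfloor$.

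The only genuine content is the generalized Schwartz--Zippel inequality above, which I would cite verbatim from~\cite{dvir2013extensions} rather than reprove; it is established there by induction on $m$, restricting $f$ to axis-parallel lines and using that a non-zero univariate polynomial of degree $d$ has at most $d$ roots counted with multiplicity. The remaining points are pure bookkeeping: ``all order-$<s$ Hasse derivatives of $f$ vanish at $\x$'' is, by definition, the same as $\mathrm{mult}(f,\x)\ge s$, so no characteristic-two subtlety intervenes; and one may note that the bound is only meaningful when $d<qs$, since otherwise $\lfloor dq^{m-1}/s\rfloor\ge q^m$ and the statement is vacuous.
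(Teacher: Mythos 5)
The paper does not actually give a proof of this lemma; it is stated as following from the cited reference and used as a black box. Your derivation is the standard and essentially unique way to extract the stated bound from the multiplicity Schwartz--Zippel inequality of~\cite{dvir2013extensions}: take $S=\F_q$, lower-bound the multiplicity sum by $Ns$ over the bad set, and round down since $N$ is an integer. This is exactly what the ``follows from'' is pointing at, so your proposal is correct and matches the intended route.
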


\begin{defn}[Lifted multiplicity code~\cite{li2019lifted}]\label{def::lifted mult code}
  For integers $m\ge 1$ and $d<qs$, the  $[m,s,d,q]$ \textit{lifted multiplicity} code over $\F_q^{\binom{s+m-1}{m}}$ of length $q^m$ is defined as
  \begin{equation*}
    \left\{
      \left.\left(f^{(<s)}(\a)\right)\right|_{\a \in\F_q^m} \ : \
      \begin{aligned}
        &f(\mathbf{X}) \in \F_q[\mathbf{X}]\ \text{such that} \\
        & f|_L \equiv_s g(T) \ \forall \ L=L(T) \in \mathcal{L}_m\\
        &\text{for some} \ g \in \mathcal{F}_{q}(d)
      \end{aligned}
    \right\} .
  \end{equation*}
\end{defn}
\begin{remark}
Multiplicity codes, as defined in \cite{kopparty2014high}, consist of the evaluations of multi-variate polynomials of degree less than $d$. These polynomials trivially fulfill the condition that their restriction to every line $L \in \mathcal{L}_m$ is a polynomial of degree less than $d$. It follows that the $[m,s,d,q]$ multiplicity code is a subcode of the $[m,s,d,q]$ lifted multiplicity code. Thereby, the dimension of a lifted multiplicity code is lower bounded by the dimension of the corresponding multiplicity code. However, for many parameters, lifting increases the rate of the multiplicity code, as we formally show in Section~\ref{ss::lifted mult codes}. To provide some further intuition, we also give an example for this improvement in Appendix~\ref{ss::improvement LMC vs MC}.
\end{remark}
\begin{defn}[$(d,s)^*$-bad and good monomials] \label{def::bad (d^*,s) monomial}
  Given positive integers $s$ and $d$, we say that a monomial $\X^\d$ with $\d\in\Z_{qs}^m$ and $\deg_q(\d)\le s-1$ is \textit{$(d,s)^*$-bad} over $\F_q[\X]$ if there exists at least one $\i\in\Z_{qs}^m$ such that $\i\le_2\d$ and $\deg(\i) \Modsp{q}{s} \in \{d,d+1,\dots,qs-1\}$. A monomial $\X^\d$ with $\d\in\Z_{qs}^m$ and $\deg_q(\d)\le s-1$  is said to be \textit{$(d,s)^*$-good} if it is not $(d,s)^*$-bad.
\end{defn}
Let $\mathcal{F}_{q}(m,s,d)$ be the collection of $(d,s)^*$-good $m$-variate monomials from $\F_q[\X]$.

\begin{proposition}\label{prop::code cardinality}
For $s\le q$ and $d<qs$, the cardinality of the $[m,s,d,q]$ lifted multiplicity code is  at least $ q^{|\mathcal{F}_{q}(m,s,d)|}$.
\end{proposition}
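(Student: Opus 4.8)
The plan is to realize an explicit $\F_q$-linear space of polynomials whose evaluations (with derivatives) all lie in the code, and on which the encoding map is injective. Let $V$ be the $\F_q$-span of the $(d,s)^*$-good monomials, so $\dim_{\F_q}V=|\mathcal{F}_{q}(m,s,d)|$, and let $\phi\colon\F_q[\X]\to(\F_q^{\binom{s+m-1}{m}})^{q^m}$ be the map $f\mapsto(f^{(<s)}(\a))_{\a\in\F_q^m}$. I will argue that (a) $\phi(V)$ is contained in the $[m,s,d,q]$ lifted multiplicity code, and (b) $\phi|_V$ is injective; together these give $|C|\ge|\phi(V)|=q^{\dim_{\F_q}V}=q^{|\mathcal{F}_{q}(m,s,d)|}$.

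For (a), fix a $(d,s)^*$-good monomial $\X^\d$ and a line $L=(\w+\v T)|_{T\in\F_q}$. Since $q$ is a power of two, Lucas' theorem gives $(w_j+v_jT)^{d_j}=\sum_{i_j\le_2 d_j}w_j^{d_j-i_j}v_j^{i_j}T^{i_j}$ over $\F_q$, so
$$\X^\d|_L(T)=\prod_{j=1}^m(w_j+v_jT)^{d_j}=\sum_{\i\le_2\d}\w^{\d-\i}\v^{\i}\,T^{\deg(\i)}.$$
Because $s$ is a power of two, $T^{a}\equiv_s T^{a\,\Modsp{q}{s}}$ holds for every $a$, and the order-$s$ equivalence is $\F_q$-linear (a consequence of the linearity of Hasse derivatives); replacing each $T^{\deg(\i)}$ by $T^{\deg(\i)\,\Modsp{q}{s}}$ therefore yields a polynomial equivalent up to order $s$ to $\X^\d|_L$. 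As $\X^\d$ is $(d,s)^*$-good, no $\i\le_2\d$ has $\deg(\i)\,\Modsp{q}{s}\in\{d,\dots,qs-1\}$, so every exponent of this polynomial is at most $d-1$, i.e.\ it lies in $\cF_q(d)$. Hence $\X^\d|_L\equiv_s g$ for some $g\in\cF_q(d)$ for every $L$, so $\X^\d$ contributes a codeword; by linearity of Hasse derivatives the same holds for every $f\in V$, which proves (a).

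For (b) --- the crux --- observe that every $(d,s)^*$-good monomial satisfies $\d\in\Z_{qs}^m$ and $\deg_q(\d)\le s-1$, so $V\subseteq W$, where $W$ is the $\F_q$-span of all $\X^\d$ with $\deg_q(\d)\le s-1$; since $\dim_{\F_q}W=\binom{s+m-1}{m}q^m$, it suffices to show $\phi|_W$ is injective, i.e.\ $W\cap\ker\phi=\{0\}$. A polynomial has vanishing Hasse derivatives of order $<s$ at $\a$ exactly when it lies in $\mathfrak{m}_\a^{\,s}$ with $\mathfrak{m}_\a=(X_1-a_1,\dots,X_m-a_m)$, so $\ker\phi=\bigcap_{\a\in\F_q^m}\mathfrak{m}_\a^{\,s}$; as the $\mathfrak{m}_\a^{\,s}$ are pairwise comaximal, the Chinese Remainder Theorem gives $\dim_{\F_q}\F_q[\X]/\ker\phi=\sum_\a\binom{s+m-1}{m}=q^m\binom{s+m-1}{m}$. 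On the other hand, $J=(X_1^q-X_1,\dots,X_m^q-X_m)$ is a regular sequence with $\F_q[\X]/J\cong\F_q^{q^m}$, one has $J^s\subseteq\mathfrak{m}_\a^{\,s}$ for all $\a$, and passing to the associated graded of the $J$-adic filtration gives $\dim_{\F_q}\F_q[\X]/J^s=q^m\sum_{k=0}^{s-1}\binom{k+m-1}{m-1}=q^m\binom{s+m-1}{m}$; hence $\ker\phi=J^s$. Finally, fix a graded monomial order, so each $X_j^q-X_j$ has leading monomial $X_j^q$ and the leading-term ideal of $J^s$ contains every $\X^\d$ with $\deg_q(\d)\ge s$; since the standard monomials of $J^s$ (those outside its leading-term ideal) form a basis of $\F_q[\X]/J^s$ and hence number $q^m\binom{s+m-1}{m}=|\{\X^\d:\deg_q(\d)\le s-1\}|$, they are exactly the monomials with $\deg_q(\d)\le s-1$, so $\F_q[\X]=W\oplus J^s=W\oplus\ker\phi$. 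In particular $W\cap\ker\phi=\{0\}$, so $\phi|_W$, and a fortiori $\phi|_V$, is injective.

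The only non-routine ingredient is part (b): the identification $\ker\phi=(X_1^q-X_1,\dots,X_m^q-X_m)^s$ together with the decomposition $\F_q[\X]=W\oplus\ker\phi$ is a multivariate Hermite-interpolation statement generalizing Proposition~\ref{pr::reducing the power} from one to $m$ variables, and the two dimension counts (via the Chinese Remainder Theorem and via the associated graded of a power of a regular sequence) are where care is needed. Part (a) and the reduction of (b) to $W\cap\ker\phi=\{0\}$ are mechanical once the Lucas expansion and the elementary properties of $\Modsp{q}{s}$ are available.
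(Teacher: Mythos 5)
Your proof is correct, and part (a) (inclusion) follows the same Lucas-plus-reduction route the paper uses; part (b) (injectivity) is where you take a genuinely different path. The paper proves injectivity on $W$ by a double induction on $(m,s)$: the base case $m=1$ is cited from Li--Wootters, the base case $s=1$ is the DeMillo--Lipton--Zippel theorem, and the inductive step shows that a polynomial whose order-$<s$ evaluations all vanish is divisible by $X_m^q - X_m$, peeling off one factor and dropping to $(m,s-1)$. You instead identify the kernel exactly: $\ker\phi=\bigcap_\a\mathfrak{m}_\a^s=(X_1^q-X_1,\dots,X_m^q-X_m)^s$, via a CRT dimension count on one side and the associated graded of a regular sequence on the other, and then exhibit $W$ as a complement by recognizing the type-$s$ monomials as the standard monomials of $J^s$ under any graded order. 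Both arguments are sound. Yours buys a stronger structural statement --- the exact multivariate Hermite interpolation identity $\F_q[\X]=W\oplus J^s$, which explains \emph{why} the type-$s$ monomial condition $\deg_q(\d)\le s-1$ is the natural one and in particular reproves Proposition~\ref{pr::reducing the power} as the $m=1$ case rather than importing it --- at the cost of invoking the Chinese Remainder Theorem for pairwise comaximal ideal powers, the regular-sequence description of $\mathrm{gr}_J$, and Macaulay's standard-monomial basis theorem. The paper's induction is more elementary and avoids this commutative-algebra machinery, but gives only the injectivity, not the identification of the kernel.
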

\begin{proof}
The full proof of this technical statement is given in Appendix~\ref{ss::injective map}. There we show that different linear combinations of good monomials produce different codewords and that these codewords are contained in the $[m,s,d,q]$ lifted multiplicity code. Thus, the lower bound on the dimension of the code follows directly from the number of good monomials $|\mathcal{F}_{q}(m,s,d)|$.
\end{proof}
\begin{remark}
Observe that for $s=1$, Definition~\ref{def::lifted mult code} gives exactly the code spanned by the evaluation of good monomials, i.e., the statement of Proposition~\ref{prop::code cardinality} holds with equality. This case corresponds to lifted RS codes, for which this equivalence first appeared in~\cite{guo2013new}, as restated in Lemma~\ref{lem::lifted Reed-Solomon codes}. Therefore, we will also refer to the $[m,1,d,q]$ lifted multiplicity code as the $[m,d,q]$ lifted RS code in the following.
In Appendix~\ref{ss::equivalenceLiftedRS}, we provide some codewords of a lifted multiplicity code with $s\geq 2$, which are not included in the subcode spanned by the evaluation of good monomials, thereby showing that the statement of Proposition~\ref{prop::code cardinality} does not hold with equality in general.
\end{remark}

\section{Analysis of lifted RS codes}\label{ss::lifted RS codes}

In this section, we investigate the code dimension of lifted RS codes. For this purpose, we first introduce the concept of $(q-r)$-bad monomials (slightly different from $(q-r)^*$-bad monomials) and derive an explicit evaluation formula to count the number of such monomials when the parameter $r\le m$ is fixed and the field size $q=2^\ell$ is scaled. To emphasize that we scale $q$ independently of $r$, we do not denote the maximum degree by $d$ in the following, but instead explicitly write $q-r$. Second, we show how to use the evaluation formula to derive a bound on the number of $(q-r)^*$-bad monomials for arbitrary $r\le q$. Our estimate improves upon the result presented in~\cite[Sections 3.2, 3.4]{guo2013new} for $m\ge 3$ and is consistent with the result for $m=3$ provided in~\cite{polyanskii2019lifted}.
\subsection{Computing  the number of $(q-r)$-bad monomials}
Let us introduce a terminology useful for establishing the number of $d^*$-bad monomials. Let $r\le \min(m,q)$ be a fixed positive integer.
\begin{defn}[$(q-r)$-bad monomial]\label{def:: q - r bad}
We say that a monomial $\X^\d$ with $\d\in\Z_q^m$ is \textit{$(q-r)$-bad} over $\F_q[\X]$ if there exists at least one $\i\in\Z_q^m$ such that $\i\le_2\d$ and $\deg(\i) \pmod{q}=(q-r)$.
\end{defn}
\begin{remark}
The difference with Definition~\ref{def::bad * monomial} is, roughly speaking, in the modulo operation, namely $\!\! \pmod{q}$  is used in Definition~\ref{def:: q - r bad}, whereas $\!\! \pmod{q-1}$ is used in Definition~\ref{def::bad * monomial}.
\end{remark}
Let $S_j(\ell)$ denote the set of tuples $\d\in\Z_q^m$, $q=2^\ell$, for which there exists $\i\le_2\d$ with $\deg(\i)= (q-r) + jq = (2^{\ell}-r)+j2^{\ell}$ and $s_j(\ell)$ be the cardinality of $S_j(\ell)$. We note that $S_j(\ell)$ also depends on $r$, however, we omit this in our notion as we fix $r$ and scale only $\ell=\log q$. Also, the evaluation formula we provide does not depend on $r$. Clearly, $s_{j}(\ell)=0$ for $j\ge m$ as the maximal $\deg(\i)$ over admissible $\i$ is $m(q-1)$ which is smaller than $(q-r)+mq$. Therefore, we aim to compute $\sum_{i=0}^{m-1}s_i(\ell)$ since the number of $(q-r)$-bad monomials over $\F_q$ is bounded by this value from one side and by $s_0(\ell)$ from the other side.

\begin{example}
For $q=4$, $r=1$ and $m=2$ the set $S_0(2)$ is
\begingroup
\setlength{\arraycolsep}{0.2pt}
\begin{align*}
\begin{array}{ccccccccccccc}
    S_0(2) &=\! & \{\! &(3,\! 0),&(2,\! 1),&(3,\! 1),&(1,\! 2),&(3,\! 2),&(0,\! 3),&(1,\! 3),&(2,\! 3),&(3,\! 3)& \} \\
    &&& \downarrow \hphantom{,} & \downarrow \hphantom{,}&\downarrow \hphantom{,}&\downarrow \hphantom{,}&\downarrow \hphantom{,}&\downarrow\hphantom{,} &\downarrow \hphantom{,}&\downarrow\hphantom{,} &\downarrow\hphantom{,} & \\
     \mathbf{i} &:& &(3,\! 0)\hphantom{,} & (2,\! 1)\hphantom{,} & (3,\! 0)\hphantom{,} &(1,\! 2)\hphantom{,} & (3,\! 0)\hphantom{,} & (0,\! 3)\hphantom{,} & (1,\! 2)\hphantom{,} & (2,\! 1)\hphantom{,} & (3,\! 0)& .
\end{array} 
\end{align*}
\endgroup
It is easy to check that for any $\mathbf{d} \in S_0(2)$ and the corresponding $\mathbf{i}$ it holds that $\mathbf{i} \leq_2 \mathbf{d}$ and $\deg(\mathbf{i}) = (q-r) + jq = (4-1)+0\cdot 4=3$. The cardinality of the set is $s_0(2) = |S_0(2)| = 9$. For these parameters the only $\mathbf{d}$ with $\deg(\mathbf{d}) \geq q-r=3$ that is not $(q-r)$-bad is $\mathbf{d}=(2,2)$.

\end{example}

Before presenting our main technical result, we establish two important preliminary results. %
\begin{lemma}\label{lem::decrease weight}
    If $\d\in S_{j}(\ell)$ for a non-negative integer $j$, then $\d\in S_{l}(\ell)$ for any non-negative integer $l< j$.
\end{lemma}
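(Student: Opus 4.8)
We want: if $\d\in S_j(\ell)$ then $\d\in S_l(\ell)$ for every $0\le l<j$. By definition, $\d\in S_j(\ell)$ means there exists $\i\le_2\d$ with $\deg(\i)=(q-r)+jq$. So it suffices to produce, from such an $\i$, a new vector $\i'\le_2\d$ with $\deg(\i')=(q-r)+lq$, and by induction it is enough to handle the single step $l=j-1$, i.e. to reduce $\deg(\i)$ by exactly $q$ while staying $\le_2\d$.

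The plan is to decrease the degree bit by bit. Since $q=2^\ell$ and the components $i_t$ lie in $\Z_q$, write each $i_t$ in binary. The key observation: whenever $\deg(\i)\ge q$, there must be some coordinate $t$ and some bit position $b\in\{0,\dots,\ell-1\}$ such that the $b$-th bit of $i_t$ is $1$; zeroing that bit gives $\i''\le_2\i\le_2\d$ with $\deg(\i'')=\deg(\i)-2^b$. Thus from any $\i\le_2\d$ we can peel off a power of two at a time, and the set of achievable degrees $\{\deg(\i'):\i'\le_2\i\}$ is exactly the set of subset-sums of the multiset of bits of $\i$. The real task is to show we can land exactly on $\deg(\i)-q$, not just somewhere below it. For that I would argue as follows: starting from $\i$ with $\deg(\i)=(q-r)+jq\ge q$ (since $j\ge 1$ and $q-r\ge 0$), repeatedly remove the highest available set bit $2^b$ as long as the running degree stays $\ge (q-r)+(j-1)q = \deg(\i)-q$; this process removes a total that is at most $\deg(\i) - \big(\deg(\i)-q\big)=q$ at each... more carefully, one shows the greedy removal of set bits from largest to smallest can always hit any target value $\tau$ with $0\le\tau\le\deg(\i)$ that is "reachable", and in fact every integer $\tau$ in $[\,\deg(\i)-2^{b^*}+1,\ \deg(\i)\,]$ where — hmm, this needs the standard fact that if you have bits summing to $N$ and you want to reach target $\tau\le N$, greedily subtracting the largest bit $\le N-(\text{amount already removed})$... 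Let me restate cleanly: I will prove the helper claim that for any $\i$ and any integer $\tau$ with $0\le\tau\le\deg(\i)$ there exists $\i'\le_2\i$ with $\deg(\i')=\tau$, which is false in general (bits are powers of two, not all of size one), so instead I restrict to $\tau=\deg(\i)-q$ and exploit that $q=2^\ell$ is itself a power of two:

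Concretely, set $N=\deg(\i)$ and note $N-q\ge 0$. Among all $\i'\le_2\i$, pick one maximizing $\deg(\i')$ subject to $\deg(\i')\le N-q$; call its degree $N'$. If $N'<N-q$, then every set bit of $\i$ that is not already a set bit of $\i'$ (there is at least one, since $\deg(\i')<N$) has value $2^b>N-q-N'$, hence $2^b\ge$ the smallest power of two exceeding $N-q-N'\ge 1$; summing all such "leftover" bits gives $N-N'$, and I derive a contradiction with $N-N'>q$ by a $2$-adic counting argument — the leftover bits, all strictly exceeding $N-q-N'$, cannot sum to exactly $N-N'$ while $N-q-N'\ge 1$ because... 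Actually the cleanest route: the set of subset-sums of bits of $\i$ that are $\le N$ and $\equiv N \pmod{\gcd}$... I'll instead just use: writing $\i$'s bits, the achievable degrees form a set closed under "you can always decrease past any power-of-two boundary"; since $q$ is a power of two, and $N=(q-r)+jq$, removing one whole "block" is possible. I expect the main obstacle to be exactly this combinatorial bookkeeping — making rigorous that $\deg(\i)-q$ is reachable among the $\le_2\d$ vectors. I would isolate it as: \emph{Claim: if the bits of $\i$ sum to $N$ with $N\ge 2^\ell$, then some subset of the bits sums to exactly $2^\ell$.} This claim follows by a short greedy/induction argument on $\ell$ (take all bits of value $<2^\ell$; if they sum to $\ge 2^\ell$, recurse to get a subset summing to $2^\ell$; if they sum to $<2^\ell$ but $N\ge 2^\ell$, then there is a bit equal to $2^\ell$ itself... wait bits can exceed $2^\ell$? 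No: $i_t\le q-1=2^\ell-1$, so every bit is $<2^\ell$, so bits of value $<2^\ell$ are all of them and sum to $N\ge 2^\ell$ — then greedily add bits until the partial sum first reaches $\ge 2^\ell$; the last added bit has value $<2^\ell$, and I claim the partial sum is exactly $2^\ell$ by... hmm, not obviously).

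Given the subtlety, the honest plan is: prove the Claim by strong induction on $\ell$. Base $\ell=0$ is trivial. For the step, among bits of $\i$, if any bit equals $2^{\ell-1}$ twice (i.e. at least two bits of value $2^{\ell-1}$), pair them into $2^\ell$ and done; otherwise at most one bit is $2^{\ell-1}$, so the bits of value $\le 2^{\ell-1}$ excluding one copy of $2^{\ell-1}$ sum to at least $N-2^{\ell-1}\ge 2^{\ell-1}$, and all have value $\le 2^{\ell-1}\le 2^{(\ell-1)}$ — rescale by a factor $2$ where possible, recurse to find a sub-multiset summing to $2^{\ell-1}$, then together with the reserved $2^{\ell-1}$ bit we get $2^\ell$; the annoying parity case (odd number of $2^0$ bits) I handle separately since a target that is a power of two $\ge 2$ is even. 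Once the Claim holds, set $\i'$ to be $\i$ with that subset of bits zeroed: then $\i'\le_2\i\le_2\d$ and $\deg(\i')=N-q=(q-r)+(j-1)q$, so $\d\in S_{j-1}(\ell)$; iterate down to $S_l(\ell)$. This completes the proof. I expect the Claim's induction — particularly the $2^0$-parity bookkeeping — to be the only real work; everything else is immediate from the definitions of $\le_2$ and $\deg$.
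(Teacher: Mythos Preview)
Your overall strategy is sound: reducing to the single step $l=j-1$ and proving the key Claim (\emph{if a multiset of powers of two, each $\le 2^{\ell-1}$, sums to $N\ge 2^{\ell}$, then some sub-multiset sums to exactly $2^{\ell}$}) is exactly what is needed, and the Claim is true. However, your inductive sketch of the Claim has a genuine gap. You split into ``at least two bits equal $2^{\ell-1}$'' versus ``at most one''. In the latter branch you reserve one copy of $2^{\ell-1}$, recurse on the rest to hit $2^{\ell-1}$, and combine. But ``at most one'' includes the case of \emph{zero} copies of $2^{\ell-1}$, where there is nothing to reserve: then all bits are $\le 2^{\ell-2}$ and sum to $\ge 2^{\ell}$, and a single application of the inductive hypothesis only gives a subset summing to $2^{\ell-1}$, not $2^{\ell}$. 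The fix is to apply the inductive hypothesis \emph{twice}: first extract a sub-multiset $S_1$ summing to $2^{\ell-1}$, remove it (the remaining bits still sum to $\ge 2^{\ell-1}$ and are all $\le 2^{\ell-2}$), then extract a second disjoint sub-multiset $S_2$ summing to $2^{\ell-1}$; together $S_1\cup S_2$ sums to $2^{\ell}$. Your remark about a ``parity case with $2^0$ bits'' is a red herring once the three cases are separated cleanly. Also, your bound ``all have value $\le 2^{\ell-1}$'' after exclusion should read $\le 2^{\ell-2}$, or the inductive hypothesis for $\ell-1$ does not apply.

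For comparison, the paper avoids the subset-sum Claim entirely and instead gives a direct greedy procedure: it processes bit positions from the most significant down, maintaining a running deficit $\Delta$ (in units of the current bit value $2^{h}$) initialized to $j-l$; at each level it doubles $\Delta$, zeros as many of the $h$th bits across coordinates as needed (or all of them if there are not enough), and carries any shortfall to the next level. Because $\deg(\i)\ge j\,2^{\ell}$, the deficit is cleared before reaching $h=0$. This handles the full drop by $(j-l)2^{\ell}$ in one pass rather than $q$ at a time, and sidesteps the case analysis of your induction. Your route is a perfectly valid alternative once the missing case is patched; the paper's algorithm is just more constructive and compact.
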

\begin{proof}%
As $\d\in S_j(\ell)$, there exists some $\i$ such that $\i\le_2 \d$ and $\deg(\i)=(q-r)+jq=(2^{\ell}-r)+j 2^{\ell}$.
We shall prove that there exists $\i'$ such that $\i'\le_2 \i$ and $\deg(\i')=(2^{\ell}-r)+l 2^{\ell}$. This is sufficient for showing $\d\in S_l(\ell)$. To this end, we provide an iterative procedure that takes an arbitrary $\i\in \Z_q^m$ with $\deg(\i)\ge j2^\ell$ and outputs $\a\le_2 \i$ with $\deg(\a) = \deg(\i) - (j-l)2^{\ell}$ for $l\in[j]$. The procedure goes from the leading bits to the least significant ones and replaces some ones in the binary representations of $\i=(i_1,\ldots,i_m)$ by zeros.
\begin{enumerate}[labelwidth=!]
    \item \textbf{Step 1.} Let us initialize $\a\gets \i$ and $\Delta \gets (j-l)$ and $h \gets {\ell}$.
    \item \textbf{Step 2.} If $h=0$, output $\a$. Else, let $h \gets h-1$ and $\Delta\gets 2\Delta$. Compute $\delta=\Delta-\sum_{\xi=1}^m a_\xi^{(h)}$.
    If $\delta > 0$, let $\Delta\gets \Delta-\delta$ and $a_\xi^{(h)}\gets 0$ for all $\xi\in [m]$. Repeat Step 2.     Else, let  $m'$ satisfy $\Delta-\sum_{\xi=1}^{m'} a_{\xi}^{(h)}=0$ and let $a_{\xi}^{(h)}\gets 0$ for all $\xi \in [m']$. Output $\a$.
\end{enumerate}
According to the procedure, we output the correct $\a$ if we do the else-part in Step 2 at some point. Assume to the contrary that this does not happen. This means that we output the all-zero tuple at the end. However,  $\Delta=(j-l)2^\ell - \deg(\i)>0$ at the final step which contradicts with $\deg(\i)\ge j2^\ell$.
This completes the proof.
\end{proof}

\begin{example}
Consider the parameters $q=2^\ell=4$, $m=2$, $r=2$, $j=1$, and $l=0$. For the element $\mathbf{d} = (3,3) \in S_1(2)$ and $\mathbf{i} = (3,3) = (11,11)_2$ with $\mathbf{i}\leq_2 \mathbf{d}$, we will find the corresponding $\mathbf{a}$ with $\mathbf{a} \leq_2 \mathbf{i}$ and $\deg(\mathbf{a}) = \deg(\mathbf{i}) -(j-l)2^\ell = 2$.
\begin{enumerate}
    \item \textbf{Step 1.} Initialize $\mathbf{a} \gets (3,3)$ and $\Delta \gets j-l = 1$ and $h\gets \ell=2$.
    \item \textbf{Step 2.} Let $h\gets h-1 = 1$ and $\Delta \gets 2\Delta=2$. Compute $\delta = \Delta - \sum_{\xi =1}^m a_\xi^{(h)} = 0$. Since $\delta \not> 0$ we choose $m'=2$ to satisfy $\Delta-\sum_{\xi=1}^{m'} a_\xi^{(h)} = 0$ and set $a_{1}^{(1)}\gets 0$, $ a_{2}^{(1)} \gets 0$ to obtain $\mathbf{a} = (01,01)_2 = (1,1)$.
\end{enumerate}
As $\mathbf{a} \leq_2 \mathbf{i} \leq_2 \mathbf{d}$ and $\deg(\mathbf{a}) = q-r= 2$ it follows that $\mathbf{d} \in S_0(2)$.
\end{example}

Let us introduce some auxiliary functions. We define two maps $\Fdrop:\Z_{2^\ell}\to \Z_{2^{\ell-1}}$ and $\Flead:\Z_{2^\ell}\to \Z_{2}$ that take an integer $a=\sum_{i=0}^{\ell-1} a^{(i)} 2^i$ and output $a-2^{\ell-1}a^{(\ell-1)}$ and $a^{(\ell-1)}$, respectively (we either drop the leading bit in the binary representation of $a$ or output it).
We extend the maps $\Fdrop$ and $\Flead$ to $\Z_{2^\ell}^m$ in a straightforward manner by applying functions to each component of a vector $\a=(a_1,\ldots,a_m)\in\Z_{2^\ell}^m$, that is
\begin{align*}
&\Fdrop(\a)=(\Fdrop(a_1),\dots,\Fdrop(a_m)),\\ &\Flead(\a)=(\Flead(a_1),\dots,\Flead(a_m)).
\end{align*}
For an integer $a$, we denote $\max(a,0)$ by $(a)^+$.

\begin{lemma}\label{lem::drop the leading bit}
If $\d\in S_{j}(\ell+1)$ for a non-negative integer $j$, then $\Fdrop(\d)$ belongs to $S_{0}(\ell), S_{1}(\ell),\dots,S_{(2j+1-|\Flead(\d)|)^{+}}(\ell)$.
\end{lemma}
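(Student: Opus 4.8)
The statement concerns what happens to membership in the sets $S_j(\ell+1)$ when we strip the leading bit from a tuple $\d\in\Z_{2^{\ell+1}}^m$. The plan is to trace through the witness $\i\le_2\d$ with $\deg(\i)=(q'-r)+jq'$, where $q'=2^{\ell+1}$, and see what its image $\Fdrop(\i)$ certifies for $\Fdrop(\d)$ over the smaller field $\F_q$ with $q=2^\ell$. First I would fix such a witness $\i$ and write $\deg(\i)=\deg(\Fdrop(\i))+2^\ell\,\deg(\Flead(\i))$, since dropping the leading bit of each component removes exactly $2^\ell$ for every component whose leading bit is set. Write $t\coloneqq\deg(\Flead(\i))\in\{0,1,\dots,m\}$, so $\deg(\Fdrop(\i))=(2^{\ell+1}-r)+j2^{\ell+1}-2^\ell t=(2^\ell-r)+2^\ell(2j+1-t)$. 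Hence $\Fdrop(\i)\le_2\Fdrop(\d)$ is a witness that places $\Fdrop(\d)$ in $S_{2j+1-t}(\ell)$, \emph{provided} $2j+1-t\ge0$, which holds automatically once $t\le 2j+1$; and $t\le m$ with $j\ge0$ so this is only an issue when $j=0$. Combining this with Lemma~\ref{lem::decrease weight}, $\Fdrop(\d)$ then lies in $S_l(\ell)$ for every $0\le l\le 2j+1-t$.

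The remaining point is to pin down how small $t$ can be made, i.e. to show we may always choose the witness $\i$ so that $t\ge |\Flead(\d)|$ is \emph{not} needed, but rather that the guaranteed upper index is $(2j+1-|\Flead(\d)|)^+$. The key observation is that $\Flead(\i)\le_2\Flead(\d)$ componentwise (since $\i\le_2\d$), so $t=\deg(\Flead(\i))\le|\Flead(\d)|$ — note $\Flead(\d)$ has $0/1$ entries so $\deg(\Flead(\d))=|\Flead(\d)|$. Therefore $2j+1-t\ge 2j+1-|\Flead(\d)|$, and since membership in $S_l(\ell)$ for the top index implies membership in all smaller ones by Lemma~\ref{lem::decrease weight}, $\Fdrop(\d)$ belongs to $S_0(\ell),\dots,S_{(2j+1-|\Flead(\d)|)^+}(\ell)$. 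The $(\cdot)^+$ truncation simply handles the degenerate case $j=0$, $|\Flead(\d)|=m$ wait—more precisely $j=0$ with $|\Flead(\d)|\ge 2$—where $2j+1-|\Flead(\d)|$ could be negative; in that case the statement only asserts $\Fdrop(\d)\in S_0(\ell)$, which still follows because $t\le 1\le 2j+1$ forces some witness with $\deg(\Fdrop(\i))=(2^\ell-r)+2^\ell(2j+1-t)\ge 2^\ell-r$, and then Lemma~\ref{lem::decrease weight} pulls it down to $S_0(\ell)$ — one should double check that $t$ cannot exceed $2j+1=1$ here, i.e. that at most one component of $\i$ can have its leading bit set when $\deg(\i)=2^{\ell+1}-r$; this holds because two leading bits contribute $2\cdot 2^\ell=2^{\ell+1}>2^{\ell+1}-r$ only if... actually $2^{\ell+1}-r < 2^{\ell+1}$, so two set leading bits are impossible, giving $t\le1$.

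The main obstacle I anticipate is precisely this bookkeeping around the edge case $j=0$: ensuring that a valid witness $\i$ exists whose leading-bit weight $t$ is small enough that $\Fdrop(\i)$ has nonnegative "excess" over $2^\ell-r$, so that Lemma~\ref{lem::decrease weight} can be applied to land in $S_0(\ell)$. Everything else is the routine identity $\deg(\i)=\deg(\Fdrop(\i))+2^\ell\deg(\Flead(\i))$ together with the componentwise monotonicity of $\le_2$ under $\Fdrop$ and $\Flead$. So the proof structure is: (1) fix a witness $\i$ for $\d\in S_j(\ell+1)$; (2) split its degree via the leading bit; (3) bound $t=\deg(\Flead(\i))\le|\Flead(\d)|$ and, separately, $t\le 2j+1$ from the degree constraint; (4) conclude $\Fdrop(\i)$ witnesses $\Fdrop(\d)\in S_{(2j+1-t)^+}(\ell)$; (5) invoke Lemma~\ref{lem::decrease weight} to cover all smaller indices down to $0$.
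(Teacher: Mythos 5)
Your proposal follows exactly the same route as the paper's proof: fix a witness $\i\le_2\d$, split its degree via $\deg(\i)=\deg(\Fdrop(\i))+2^\ell\,|\Flead(\i)|$, observe $\Fdrop(\i)\le_2\Fdrop(\d)$ to place $\Fdrop(\d)$ in $S_{2j+1-|\Flead(\i)|}(\ell)$, bound $|\Flead(\i)|\le\min(2j+1,|\Flead(\d)|)$ from the degree constraint and componentwise monotonicity, then finish with Lemma~\ref{lem::decrease weight}. The argument is correct and matches the paper step for step (the edge-case discussion around $j=0$ is a slightly more verbose unpacking of the same $\min(2j+1,\cdot)$ bound).
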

\begin{proof}%
By definition, if $\d\in S_{j}(\ell+1)$, then there exists some $\i\in\Z_{2^{\ell+1}}^m$ with $\i\le_2\d$ and $\deg(\i)=(2^{\ell+1}-r)+j2^{\ell+1}$. If the leading bits in $\i$ are dropped, then the sum of components of $\Fdrop(\i)$ is
\begin{align*}
\deg(\Fdrop(\i)) &= \deg(\i)-|\Flead(\i)|2^{\ell}\\
&= (2^{\ell}-r)+(2j+1-|\Flead(\i)|)|2^{\ell}.
\end{align*}
 Since we also have the property $\Fdrop(\i)\le_2 \Fdrop(\d)$, we obtain that $\Fdrop(\d)$ belongs to $S_{2j+1-|\Flead(\i)|}(\ell)$. Additionally, we note that $|\Flead(\i)|\le \min(2j+1,|\Flead(\d)|)$ as $\i\le_2\d$ and $\deg(\i)=(2^{\ell}-r)+j2^{\ell}$. From this and Lemma~\ref{lem::decrease weight}, we conclude that $r(\d)$ belongs to $S_{0}(\ell)$, $S_{1}(\ell)$, $\dots$, $S_{(2j+1-|\Flead(\d)|)^{+}}(\ell)$. This completes the proof.
\end{proof}

With these results established, we are now ready to give the key technical statement required for the estimation of the rate of lifted RS codes. Recall that $\binom{b}{\ge a}$ denotes the number of ways to choose an (unordered) subset of at least $a$ elements from a fixed set of $b$ elements. For $a<0$ or $a>b$, we assume that $\binom{b}{a}=0$.

\begin{proposition}\label{pr::recurrent formula}
The system of recurrence relations
\begin{align*}
\begin{pmatrix}
s_0(\ell+1)\\
s_1(\ell+1)\\
\vdots\\
s_j(\ell+1)\\
\vdots\\
s_{m-1}(\ell+1)
\end{pmatrix} =
 A_m
\begin{pmatrix}
s_0(\ell)\\
s_1(\ell)\\
\vdots\\
s_j(\ell)\\
\vdots\\
s_{m-1}(\ell)
\end{pmatrix}
\end{align*}
holds true, where the square $m\times m$ matrix $A_m$ is given below
\begingroup
\setlength{\arraycolsep}{2pt}
\begin{align*}
\left(\begin{array}{cccccc}
\binom{m}{\ge 1} & \binom{m}{ 0} & 0 & 0 & \dots & 0 \\
\binom{m}{\ge 3} & \binom{m}{2} &\binom{m}{1} & \binom{m}{0} & \dots & 0 \\
\vdots & \vdots & \vdots & \vdots & \ddots & \vdots \\
\binom{m}{\ge 2j+1} & \binom{m}{ 2j} & \binom{m}{ 2j-1} & \binom{m}{ 2j-2} & \dots & \binom{m}{ 2j-m+2}
 \\
\vdots & \vdots & \vdots & \vdots & \ddots & \vdots \\
\binom{m}{\ge 2m-1} & \binom{m}{2m-2} & \binom{m}{ 2m-3} & \binom{m}{ 2m-4} & \dots & \binom{m}{ m}
\end{array}\right).
\end{align*}
\endgroup
\end{proposition}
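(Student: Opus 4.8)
The plan is to establish the recurrence entry by entry: I want to show that $s_j(\ell+1) = \sum_{k=0}^{m-1} (A_m)_{j,k}\, s_k(\ell)$, where the row-$j$ entries are $(A_m)_{j,0} = \binom{m}{\ge 2j+1}$ and $(A_m)_{j,k} = \binom{m}{2j+2-k}$ for $k\ge 1$. The natural device is the map $\Fdrop$ together with its companion $\Flead$, which together partition $\Z_{2^{\ell+1}}^m$: a vector $\d\in\Z_{2^{\ell+1}}^m$ is determined uniquely by the pair $(\Fdrop(\d),\Flead(\d)) \in \Z_{2^\ell}^m \times \{0,1\}^m$. So I would count $s_j(\ell+1) = |S_j(\ell+1)|$ by first fixing the lower part $\d' = \Fdrop(\d)\in\Z_{2^\ell}^m$ and then counting, for each admissible $\d'$, how many leading-bit patterns $\mathbf b = \Flead(\d)\in\{0,1\}^m$ make the reconstituted $\d$ land in $S_j(\ell+1)$.

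The core combinatorial claim, which I would isolate as the main lemma of the argument, is the following converse-style refinement of Lemma~\ref{lem::drop the leading bit}: given $\d'\in\Z_{2^\ell}^m$ and a leading-bit vector $\mathbf b\in\{0,1\}^m$ with $|\mathbf b| = t$, the vector $\d$ with $\Fdrop(\d)=\d'$ and $\Flead(\d)=\mathbf b$ lies in $S_j(\ell+1)$ if and only if $\d' \in S_{2j+1-t}(\ell)$ when $2j+1-t \ge 0$, and (for $t \ge 2j+1$) always — but in the latter regime one must be careful, since actually $t$ ranges only up to... let me instead phrase it cleanly via the weight identity $\deg(\i) = \deg(\Fdrop(\i)) + |\Flead(\i)|\,2^\ell$ used in the proof of Lemma~\ref{lem::drop the leading bit}. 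The "if" direction: if $\d'\in S_{2j+1-t}(\ell)$ there is $\i'\le_2\d'$ with $\deg(\i') = (2^\ell-r) + (2j+1-t)2^\ell$; append the leading bits $\mathbf b$ to get $\i\le_2\d$ with $\deg(\i) = (2^{\ell+1}-r) + j\,2^{\ell+1}$, so $\d\in S_j(\ell+1)$. The "only if" direction: from $\i\le_2\d$ witnessing $\d\in S_j(\ell+1)$, one has $|\Flead(\i)|\le t$, and $\Fdrop(\i)\le_2\d'$ has degree $(2^\ell-r)+(2j+1-|\Flead(\i)|)2^\ell$, so $\d'\in S_{2j+1-|\Flead(\i)|}(\ell)$, and then Lemma~\ref{lem::decrease weight} pushes this down to $S_{2j+1-t}(\ell)$ provided $2j+1-t\ge 0$; if $2j+1-t < 0$ then $\d'\in S_0(\ell)$ trivially but this case contributes nothing new because such $\mathbf b$ with large $|\mathbf b|$ are exactly the ones already counted with a smaller (shifted) index — this is the bookkeeping subtlety to get right.

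Granting the main lemma, the count is a telescoping sum over Hamming weight $t$ of the leading-bit vector. Fix $k\ge 1$ and consider the contribution of vectors $\d'\in S_{k-1}(\ell)\setminus S_k(\ell)$ — equivalently, a given $\d'$ is counted with weight equal to the number of $t\in\{0,\dots,m\}$ such that $2j+1-t$ equals the largest index whose set contains $\d'$; but it is cleaner to sum over the exact membership level. Writing $n_k(\ell) := |S_{k-1}(\ell)\setminus S_k(\ell)|$ for $k\ge 1$ (with $S_{-1}:=\Z_{2^\ell}^m$, so these partition the whole space, using $s_k=0$ for $k\ge m$), a $\d'$ at level $k$ contributes to $s_j(\ell+1)$ precisely for those $\mathbf b$ with $2j+1-|\mathbf b| = k$ when $k\ge 1$, i.e. $|\mathbf b| = 2j+1-k$, giving $\binom{m}{2j+1-k}$ choices; and for the level-$0$ part (the complement of $S_0(\ell)$, i.e. $\d'$ in no $S_k$) it contributes for $|\mathbf b|\ge 2j+1$, giving $\binom{m}{\ge 2j+1}$ choices. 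Summing, $s_j(\ell+1) = \binom{m}{\ge 2j+1}\bigl(2^{\ell m} - s_0(\ell)\bigr) + \sum_{k\ge 1}\binom{m}{2j+1-k}\bigl(s_{k-1}(\ell)-s_k(\ell)\bigr)$, and an Abel summation / telescoping rearrangement of the right-hand side — together with the hockey-stick identity $\binom{m}{\ge 2j+1} = \sum_{i\ge 2j+1}\binom{m}{i}$ to absorb the $2^{\ell m}$ term — collapses this to $\binom{m}{\ge 2j+1}s_0(\ell) + \sum_{k\ge 1}\binom{m}{2j+2-k}s_k(\ell)$, which is exactly row $j$ of $A_m$ applied to $(s_0(\ell),\dots,s_{m-1}(\ell))^{\mathsf T}$. (One checks $j=0$ separately: there $|\mathbf b|=1-k$ forces $k\in\{0,1\}$, matching the first row $(\binom{m}{\ge 1},\binom{m}{0},0,\dots)$.)

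The main obstacle, I expect, is precisely the telescoping/boundary bookkeeping in the last step: one has to handle the "wrap-around" at both ends — the level-$0$ bucket where all leading-bit patterns of weight $\ge 2j+1$ work (forcing the cumulative binomial $\binom{m}{\ge 2j+1}$ rather than a single term), and the fact that $S_k(\ell)$ is empty for $k\ge m$, which is what makes the matrix finite $m\times m$ rather than infinite. Getting the index shift $2j+1-t$ versus the matrix's $2j+2-k$ right (it comes from $k = 2j+1-t$, i.e. the subdiagonal structure) is the easy-to-slip point. Everything else — the weight identity, the monotonicity in Lemma~\ref{lem::decrease weight}, and the bijective correspondence $\d\leftrightarrow(\Fdrop(\d),\Flead(\d))$ — is routine once the main lemma is stated correctly.
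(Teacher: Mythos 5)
Your underlying idea is the paper's: encode $\d\in\Z_{2^{\ell+1}}^m$ by $(\Flead(\d),\Fdrop(\d))$ and characterize membership in $S_j(\ell+1)$ via Lemma~\ref{lem::drop the leading bit}. The ``main lemma'' you isolate, in its corrected form $\d\in S_j(\ell+1) \Leftrightarrow \Fdrop(\d)\in S_{(2j+1-t)^+}(\ell)$ where $t=|\Flead(\d)|$, is exactly what Lemma~\ref{lem::drop the leading bit} plus the converse construction in the paper's proof establish. But the bookkeeping that follows has several genuine errors, and they compound.

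First, you misread the matrix: for $k\ge1$ the entry is $(A_m)_{j,k}=\binom{m}{2j+1-k}$, not $\binom{m}{2j+2-k}$. Checking row $j=0$, column $k=1$: the paper has $\binom{m}{0}$, your formula gives $\binom{m}{1}$. You notice the mismatch (``one checks $j=0$ separately'') but never reconcile it. Second, your main lemma is wrong in the regime $t>2j+1$: you write ``always,'' but the correct condition is $\d'\in S_0(\ell)$, which is \emph{not} automatic. This single sign flip propagates into the count: a $\d'$ that lies in no $S_k(\ell)$ contributes \emph{nothing} to $s_j(\ell+1)$ --- you have it contributing $\binom{m}{\ge 2j+1}$ --- and a $\d'$ at level $l$ (in $S_l(\ell)\setminus S_{l+1}(\ell)$, $l\geq 0$) contributes for \emph{all} weights $t\ge 2j+1-l$, hence $\binom{m}{\ge (2j+1-l)^+}$ choices, not the single binomial $\binom{m}{2j+1-(l+1)}$ you assign. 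Third, as a consequence, there is no $2^{\ell m}$ term to ``absorb''; the claim that the hockey-stick identity makes it disappear is not an identity at all.

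The paper sidesteps all of this by summing over the Hamming weight $w=|\Flead(\d)|$ rather than over the membership level of $\d'$. With $T^{(w)}(j)=\{(\v,\y):|\v|=w,\ \y\in S_{(2j+1-w)^+}(\ell)\}$, the characterization gives a partition of $\pair(j)$, so
\begin{align*}
s_j(\ell+1)=\sum_{w=0}^{m}\binom{m}{w}\,s_{(2j+1-w)^+}(\ell)
=\binom{m}{\ge 2j+1}s_0(\ell)+\sum_{k\ge 1}\binom{m}{2j+1-k}\,s_k(\ell),
\end{align*}
where the second equality uses $(2j+1-w)^+=0$ for $w\ge 2j+1$ and $s_k(\ell)=0$ for $k\ge m$. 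No telescoping or Abel summation is needed. Your level-based sum could in principle be pushed through (replace the single binomials by the cumulative $\binom{m}{\ge (2j+2-k)^+}$, set the $S_0$-complement contribution to zero, and telescope carefully), but as written the intermediate formula is false and the final row-$j$ coefficients are off by one.
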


\begin{proof}
To begin, first note that we can uniquely encode $\d\in\Z_{2^{\ell+1}}^m$ by the pair $(\Flead(\d), \Fdrop(\d))$. Let us define the set $\pair(j)$ of size $s_j(\ell+1)$ as
\begin{align*}
\pair(j)=\left\{(\Flead(\d), \Fdrop(\d)):\quad \d\in S_j(\ell+1) \right\}.
\end{align*}
For $w\in\{0,\dots,m\}$, we define the set $T^{(w)}(j)$ as follows
\begin{align*}
T^{(w)}(j) =
\{(\v,\y):\,\, \v\in\Z_2^m,\y\in S_{(2j+1-w)^+}(\ell),\, |\v|=w \}.
\end{align*}
Clearly,  for different $w\in\{0,\ldots,m\}$, the sets $T^{(w)}(j)$ are pairwise disjoint, and the size of $T^{(w)}(j)$ is
\begin{equation*}\label{eq: size of Twj}
|T^{(w)}(j)|=\binom{m}{w} s_{(2j+1-w)^{+}}(\ell),
\end{equation*}
where we used the notation $s_j(\ell)=|S_j(\ell)|$. In the remaining proof, we show that the disjoint union of $T^{(w)}(j)$ coincides with $\pair(j)$, that is
\begin{align}\label{eq: disjoint union}
\pair(j)=\bigsqcup_{w\in\{0,\dots,m\}} T^{(w)}(j).
\end{align}
Note that $(2j+1-w)^{+}=0$ for $w\ge 2j+1$ and, thus, $|T^{(w)}(j)|=\binom{m}{w}s_0(\ell)$ for $w\ge 2j+1$. Combining this observation, equality~\eqref{eq: disjoint union} and the fact $s_i(\ell)=0$ for $i\ge m$ would lead to the required relation
\begin{align*}
s_j(\ell+1) &= {m \choose \ge 2j+1}s_0(\ell) +{m \choose 2j}s_1(\ell) \\
&+ {m \choose  2j-1}s_2(\ell)+\dots+{m \choose 2j-m+3}s_{m-2}(\ell)\\
&+{m \choose 2j-m+2}s_{m-1}(\ell).
\end{align*}

First, we check one direction of equation~\eqref{eq: disjoint union} -- namely, each element in $\pair(j)$ is covered by the union. Let $(\Flead(\d), \Fdrop(\d))\in \pair(j)$ for some $\d\in S_j(\ell+1)$. By denoting $w=|\Flead(\d)|$ and applying Lemma~\ref{lem::drop the leading bit}, we get that $\Fdrop(\d)\in S_{(2j+1-w)^+}(\ell)$. Therefore, $(\Flead(\d), \Fdrop(\d))\in T^{(w)}(j)$.

Second, we show that each element in $T^{(w)}(j)$ is included in $\pair(j)$. Let $(\v,\y)\in T^{(w)}(j)$. Construct $\d\in \Z_{2^{\ell+1}}^m$ to satisfy $\Flead(\d)=\v$ and $\Fdrop(\d)=\y$. By definition, we have that $|\v|=w$ and $\y\in S_{(2j+1-w)^{+}}(\ell)$. The latter means that there exists an $\i$ such that $\i\le_2 \y$ and $\deg(\i)=(2^{\ell}-r) + (2j+1-w)^{+}2^{\ell}$. Construct $\i'\in \Z_{2^{\ell+1}}^m$ such that $\Fdrop(\i')=\i\le_2 \y=\Fdrop(\d)$ and $\Flead(\i')\le_2 \v=\Flead(\d)$ and $|\Flead(\i')|=\min(2j+1,w)$. Thus, we obtain that $\i'\le_2 \d$ and $\deg(\i')=(2^{\ell+1}-r)+j2^{\ell+1}$. This completes the proof.
\end{proof}

\begin{defn}[Largest eigenvalue $\lambda_m$]\label{def::lambda}
Let $A_m$ be as in Proposition~\ref{pr::recurrent formula} and $\Lambda$ be the set of its eigenvalues. We define $\lambda_m$ to be the largest element from $\Lambda$.
\end{defn}
It is well known that the eigenvalues of a matrix are upper and lower bounded by the largest and smallest sum of its rows or columns, respectively. It follows directly from the structure of $A_m$ that $2^{m-1}\leq \lambda_m \leq 2^m$.
For the readers convenience, we provide $\lambda_m$ and $m-\log\lambda_m$ for $2\leq m \leq 10$ in Table~\ref{tab::eigenvalues}.

Note that the order of $s_j(\ell)$ is the maximum value in the matrix $A_m^\ell$, the $\ell$th power of $A_m$.  The exponential growth rate of the matrix powers $A_m^\ell$ as $\ell\to\infty$ is controlled by $\lambda_m^\ell$. Since all elements of $A_m^{m-1}$ are positive (except the $m$th row which has all zeros but the last entry), the matrix $A_m$ has only one eigenvalue of maximum modulus by Perron-Frobenius theorem for non-negative matrices (e.g., see~\cite[Theorem 8.5.2]{horn2012matrix}). Finally, we obtain the following statement.
\begin{corollary}\label{cor::number of bad monomials}
For an integer $r\le m$, the number of $(q-r)$-bad monomials is $\Theta_m(\lambda_m^{\ell})=\Theta_m(q^{\log \lambda_m})$ as $q\to\infty$.
\end{corollary}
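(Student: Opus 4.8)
The plan is to extract the asymptotics of $s_j(\ell)$ directly from the linear recurrence of Proposition~\ref{pr::recurrent formula}, and then sandwich the count of $(q-r)$-bad monomials between $s_0(\ell)$ and $\sum_{i=0}^{m-1} s_i(\ell)$, as already observed in the text. Writing $\mathbf{s}(\ell) = (s_0(\ell),\dots,s_{m-1}(\ell))^{\mathsf T}$, Proposition~\ref{pr::recurrent formula} gives $\mathbf{s}(\ell) = A_m^{\ell-\ell_0}\,\mathbf{s}(\ell_0)$ for any base $\ell_0$, so the whole statement is a claim about the growth rate of the matrix powers $A_m^\ell$. The target is to show each $s_j(\ell) = \Theta_m(\lambda_m^\ell)$, whence the two sandwiching quantities are both $\Theta_m(\lambda_m^\ell)$ and the corollary follows, using $\lambda_m^\ell = (2^\ell)^{\log\lambda_m} = q^{\log\lambda_m}$.

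First I would establish the structural facts about $A_m$ that make Perron--Frobenius applicable, essentially as the excerpt sketches. The matrix $A_m$ is entrywise non-negative. It is not irreducible because its last row $(\binom{m}{\ge 2m-1},\dots,\binom{m}{m}) = (0,\dots,0,1)$ has only the diagonal entry nonzero (that eigenvalue contributes $1$, which is dominated). But the leading principal $(m-1)\times(m-1)$ block, call it $B$, is primitive: one checks from the binomial pattern that $B$ has positive first column and positive superdiagonal structure, and that $B^{m-1}$ is strictly positive (this is the claim ``all elements of $A_m^{m-1}$ are positive except the last row''). Hence by the Perron--Frobenius theorem for primitive matrices (\cite[Theorem 8.5.2]{horn2012matrix}) $B$ has a simple real eigenvalue $\lambda_m$ strictly exceeding all other eigenvalues in modulus, with strictly positive left and right eigenvectors $\mathbf{u},\mathbf{v}>0$. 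Since $\lambda_m \ge 2^{m-1} > 1$ dominates the stray eigenvalue $1$ of the full matrix, $\lambda_m$ is also the spectral radius of $A_m$, matching Definition~\ref{def::lambda}.

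Next I would turn the spectral-gap statement into the two-sided bound on $s_j(\ell)$. For the upper bound, $\|A_m^\ell\|_\infty = O_m(\lambda_m^\ell)$ because the Perron eigenvalue is simple and dominant — concretely, decompose $\mathbf{s}(\ell_0)$ in a Jordan/eigenbasis of $A_m$; the Perron component grows like $\lambda_m^\ell$ and every other component grows at most like $(\ell^{m}\rho^\ell)$ with $\rho<\lambda_m$, hence is $o(\lambda_m^\ell)$, so $s_j(\ell) = O_m(\lambda_m^\ell)$ for every $j$. For the lower bound I need $s_j(\ell) = \Omega_m(\lambda_m^\ell)$, i.e.\ the Perron component of $\mathbf{s}(\ell_0)$ is nonzero and the $j$th coordinate of the Perron eigenvector is positive. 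Positivity of the eigenvector coordinates $j=0,\dots,m-2$ is Perron--Frobenius for the primitive block $B$; for the last coordinate $j=m-1$ one argues separately that $s_{m-1}(\ell)$ still grows like $\lambda_m^\ell$, e.g.\ because $s_{m-1}(\ell+1) \ge \binom{m}{\ge 2m-1}s_0(\ell) = s_0(\ell) = \Omega_m(\lambda_m^\ell)$ (the $(m,1)$ entry of $A_m$ equals $\binom{m}{\ge 2m-1}=1$). That the Perron component of $\mathbf{s}(\ell_0)$ is nonzero follows by pairing with the strictly positive left Perron eigenvector: $\mathbf{u}^{\mathsf T}\mathbf{s}(\ell_0) > 0$ as soon as $\mathbf{s}(\ell_0)\ne 0$, and $\mathbf{s}(\ell_0)$ is eventually nonzero — for instance $s_0(\ell)>0$ for all large $\ell$, since a $(q-r)$-bad monomial plainly exists (e.g.\ $\mathbf{d}$ with one coordinate of degree $q-r$ and the rest zero when $r\le m\le q$; more carefully one may fix $\ell_0$ large and note $s_0(\ell_0)\ge 1$).

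Finally I would assemble the corollary. The number $N(q)$ of $(q-r)$-bad monomials satisfies $s_0(\ell) \le N(q) \le \sum_{i=0}^{m-1} s_i(\ell)$ by the definition of the $S_i(\ell)$ and Lemma~\ref{lem::decrease weight} (any $\mathbf{i}$ realizing $(q-r)$-badness has $\deg(\mathbf{i}) \in \{q-r, 2q-r,\dots\}$, hence $\mathbf{d}\in S_i(\ell)$ for some $0\le i\le m-1$; conversely $\mathbf{d}\in S_0(\ell)$ is $(q-r)$-bad). Both bounds are $\Theta_m(\lambda_m^\ell)$ by the previous paragraph, so $N(q) = \Theta_m(\lambda_m^\ell) = \Theta_m(q^{\log\lambda_m})$. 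I expect the main obstacle to be the careful justification of the lower bound: one must verify primitivity of the $(m-1)\times(m-1)$ block (the claim that $A_m^{m-1}$ is positive off the last row) rather than mere irreducibility, handle the degenerate last row so that $s_{m-1}(\ell)$ is not accidentally $o(\lambda_m^\ell)$, and confirm the initial vector has a nonzero projection onto the Perron direction; the upper bound and the sandwiching are comparatively routine.
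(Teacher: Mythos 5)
Your overall plan --- sandwich the count between $s_0(\ell)$ and $\sum_{i=0}^{m-1}s_i(\ell)$ and read off the growth from Perron--Frobenius applied to the recurrence of Proposition~\ref{pr::recurrent formula} --- is exactly the paper's argument, and the corollary's conclusion is reachable this way. However, two of your supporting claims are false, and the first one reveals that your stated target ``each $s_j(\ell)=\Theta_m(\lambda_m^\ell)$'' is itself wrong. The identity $\binom{m}{\ge 2m-1}=1$ does not hold: for $m\ge 2$ we have $2m-1>m$, so $\binom{m}{\ge 2m-1}=0$. Consequently the last row of $A_m$ is $(0,\dots,0,1)$ and $s_{m-1}(\ell+1)=s_{m-1}(\ell)$ is \emph{constant} in $\ell$ (equal to $1$ when $r=m$, and $0$ when $r<m$); it certainly is not $\Omega_m(\lambda_m^\ell)$, and your inequality $s_{m-1}(\ell+1)\ge s_0(\ell)$ cannot be repaired. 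Fortunately this step is unnecessary: the sandwich only requires $s_0(\ell)=\Omega_m(\lambda_m^\ell)$ on one side and $s_j(\ell)=O_m(\lambda_m^\ell)$ for every $j$ on the other, and a constant $s_{m-1}$ is harmless. (In fact Lemma~\ref{lem::decrease weight} gives $S_j(\ell)\subseteq S_0(\ell)$, so the count equals $s_0(\ell)$ exactly.)

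The second issue is your justification of primitivity of the leading $(m-1)\times(m-1)$ block $B$. You assert that $B$ has a positive first column, but the $(j,0)$ entry is $\binom{m}{\ge 2j+1}$, which vanishes whenever $2j+1>m$. For $m\ge 4$ this zero already appears inside the block, e.g.\ for $m=4$ one has $B[2][0]=\binom{4}{\ge 5}=0$. So the stated reason fails for all $m\ge 4$, which is precisely the regime the paper is interested in. The conclusion (primitivity, or equivalently the paper's assertion that $A_m^{m-1}$ is entrywise positive off the last row) is still true, but it needs a different argument --- for instance tracking the positive super- and sub-diagonal band of $B$ (one has $B[j][j\pm 1]=\binom{m}{j}$ or $\binom{m}{j+2}$, both nonzero in the relevant range) together with the positive $(0,0)$ entry. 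Once that is in place, the rest of your bookkeeping --- simple dominant eigenvalue $\lambda_m>1$ beating the stray eigenvalue $1$ from the last row, positivity of the Perron eigenvector coordinates for $j\le m-2$, and nonvanishing of the initial vector's projection onto the Perron direction via the positive left eigenvector --- is sound and matches the paper's one-line appeal to Perron--Frobenius.
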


\subsection{Computing  the number of $(q-r)^*$-bad monomials}
Now let $r\le q$ (the restriction $r\le m$ is no longer necessary, i.e., $r$ could be very large). By Definition~\ref{def::bad * monomial}, a monomial $\X^\d$ is $(q-r)^*$-bad if there exists an $\i\in\Z_{q}^m$ such that $\i\le_2 \d$ and $\deg(\i) \Mods{q} \in \{q-r,q-r+1,\dots,q-1\}$. The latter condition is equivalent to
\begin{align*}
\deg(\i)= q-r_0 + (q-1)j=(q-r_0-j) + qj
\end{align*}
for some $r_0\in[r]$ and $j\in \Z_{m}$. Let us drop the $\lceil\log(r+m)\rceil$ least significant bits in every component of $\d$ and $\i$ to obtain some $\d'$ and $\i'$ from $\Z_{q'}^m$ with $q'=2^{\ell'}$ and $\ell'=\ell-\lceil\log(r+m)\rceil$. Then we have that $\i'\le_2 \d'$ and
\begin{align*}
(q'-m)+jq'\le \deg(\i')\le \lfloor\deg(\i)/2^{\ell-\ell'}\rfloor\le (q'-1)+jq'.
\end{align*}
Therefore, by Definition~\ref{def:: q - r bad}, we have that $\X^{\d'}$ is $(q'-r')$-bad over $\F_{q'}[\X]$ for some positive integer $r'\le m$. By simple counting arguments and Corollary~\ref{cor::number of bad monomials}, the following statement is implied.
\begin{lemma}\label{lem:: number of bad * monomials}
For an integer $r<q=2^{\ell}$, the number of $(q-r)^*$-bad monomials is $\Theta_m(r^{m-\log\lambda_m} q^{\log \lambda_m})$ as $\ell\to\infty$.
\end{lemma}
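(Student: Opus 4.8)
The plan is to reduce the count of $(q-r)^*$-bad monomials over $\F_q$ to the count of $(q'-r')$-bad monomials over a smaller field $\F_{q'}$, for which Corollary~\ref{cor::number of bad monomials} already gives the asymptotics $\Theta_m(\lambda_m^{\ell'})=\Theta_m((q')^{\log\lambda_m})$. The key arithmetic observation, already stated in the text just before the lemma, is that the condition $\deg(\i)\Mods{q}\in\{q-r,\dots,q-1\}$ is equivalent to $\deg(\i)=(q-r_0-j)+qj$ for some $r_0\in[r]$ and $j\in\Z_m$; the offset $r_0+j$ lies in the range $[1,r+m-1]$, so it is absorbed by dropping the bottom $\lceil\log(r+m)\rceil$ bits of each coordinate. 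Write $\ell'=\ell-\lceil\log(r+m)\rceil$ and $q'=2^{\ell'}$.

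First I would make the upper bound precise. Given a $(q-r)^*$-bad $\X^\d$ with witness $\i\le_2\d$, form $\d'=\lfloor \d/2^{\ell-\ell'}\rfloor$ and $\i'=\lfloor\i/2^{\ell-\ell'}\rfloor$ coordinatewise (i.e.\ drop the low $\lceil\log(r+m)\rceil$ bits). Then $\i'\le_2\d'$, and summing the coordinatewise inequality $\lfloor i_k/2^{\ell-\ell'}\rfloor \le i_k/2^{\ell-\ell'}$ and its companion lower bound $\lfloor i_k/2^{\ell-\ell'}\rfloor \ge (i_k - (2^{\ell-\ell'}-1))/2^{\ell-\ell'}$ over $k\in[m]$ gives, using $\deg(\i)=(q-r_0-j)+qj$ and $2^{\ell-\ell'}\ge r+m$,
\begin{align*}
(q'-m)+jq' \le \deg(\i') \le (q'-1)+jq'.
\end{align*}
Hence $\deg(\i')=(q'-r')+jq'$ for some $r'\in[m]$, so $\X^{\d'}$ is $(q'-r')$-bad over $\F_{q'}[\X]$. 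The map $\d\mapsto\d'$ is at most $2^{m\lceil\log(r+m)\rceil}=O_m(r^m)$-to-one, and $r'$ ranges over $[m]$, so the number of $(q-r)^*$-bad monomials is $O_m\big(r^m\cdot \max_{r'\le m}\#\{(q'-r')\text{-bad over }\F_{q'}\}\big)=O_m(r^m (q')^{\log\lambda_m})$. Since $q'=\Theta_m(q/r)$, this is $O_m(r^{m-\log\lambda_m}q^{\log\lambda_m})$.

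For the matching lower bound I would exhibit enough distinct $(q-r)^*$-bad monomials. Fix any $(q'-r_0)$-bad monomial $\X^{\d'}$ over $\F_{q'}$ for, say, $r_0=m$ (Corollary~\ref{cor::number of bad monomials} guarantees $\Theta_m((q')^{\log\lambda_m})$ of them); let $\i'\le_2\d'$ have $\deg(\i')=q'-m$. Lift by shifting: set $\d=2^{\ell-\ell'}\d'$ (append $\ell-\ell'$ zero bits) and choose low bits to build an $\i\le_2\d$ with $\deg(\i)$ equal to an admissible value $q-r_0'$ with $r_0'\in[r]$ — concretely $\deg(\i)=2^{\ell-\ell'}(q'-m)$ differs from $q$ by $q - 2^{\ell-\ell'}(q'-m) = 2^{\ell-\ell'}m \le $ something we can correct, so instead simply take $\i = 2^{\ell-\ell'}\i' + (\text{correction bits in the dropped positions of }\d)$ to hit $\deg(\i)=q-r_0'$ exactly for a suitable $r_0'\in[r]$, which is possible because each coordinate of $\d$ has $\ell-\ell'\ge\log(r+m)$ free low bits available whenever the corresponding coordinate of $\i'$ is nonzero; a short case analysis handles coordinates that are zero. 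Distinct $\d'$ give distinct $\d$, so we obtain $\Omega_m((q')^{\log\lambda_m})=\Omega_m(r^{m-\log\lambda_m}q^{\log\lambda_m})$ such monomials. Combining the two bounds yields the $\Theta_m$ claim.

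The main obstacle is the bookkeeping in the lower bound: one must check that the dropped low-order bits really provide enough slack to realize some target degree of the form $q-r_0'$ with $r_0'\in[r]$ (equivalently $\deg(\i)\Mods q\in\{q-r,\dots,q-1\}$ with $j=0$) while maintaining $\i\le_2\d$, and that the construction is genuinely injective in $\d'$. The upper bound is comparatively routine once the $\lfloor\cdot\rfloor$ inequalities above are summed carefully and the fiber size $O_m(r^m)$ is noted; the $r'\le m$ case split costs only a constant factor. One should also note the edge case $r+m>q$ (so $\ell'\le 0$), where the statement is trivial since there are at most $q^m$ monomials and $r^{m-\log\lambda_m}q^{\log\lambda_m}=\Theta_m(q^m)$ in that regime.
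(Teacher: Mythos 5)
Your upper bound is essentially the paper's: you drop the bottom $\lceil\log(r+m)\rceil$ bits, observe that the image $\X^{\d'}$ is $(q'-r')$-bad for some $r'\le m$, and multiply by the fiber size $O_m(r^m)$. That part is fine.

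The lower bound, however, has two genuine problems. First, the construction $\d = 2^{\ell-\ell'}\d'$ appends \emph{zero} bits, which means any witness $\i\le_2\d$ is forced to have zeros in those positions too, so $\deg(\i)$ is a multiple of $2^{\ell-\ell'}\ge r+m$. Since $q=2^\ell$ is also such a multiple, the largest achievable value below $q$ is $q-2^{\ell-\ell'}\le q-(r+m)<q-r$, so the target range $\{q-r,\dots,q-1\}$ is unreachable and the ``correction bits in the dropped positions of $\d$'' you invoke do not exist. The paper's construction instead appends to each $d'_j$ the all-one string of length $\ell_0=\lceil\log(m+r)\rceil-\lfloor\log r\rfloor$ followed by an \emph{arbitrary} string of length $\lfloor\log r\rfloor$; it is precisely the appended ones in $\d$ that make the corresponding bits of $\i$ available, and the witness $\i$ is built so that $\deg(\i)=q-2^{\lfloor\log r\rfloor}\in\{q-r,\dots,q-1\}$.

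Second, your final count is off by a factor of $r^m$. You claim $\Omega_m((q')^{\log\lambda_m})=\Omega_m(r^{m-\log\lambda_m}q^{\log\lambda_m})$, but with $q'=\Theta_m(q/r)$ the left side is $\Theta_m(q^{\log\lambda_m}r^{-\log\lambda_m})$, which is smaller than the right side by a factor of $r^m$. Producing exactly one $\d$ per $\d'$ cannot close this gap. The paper recovers the factor $r^m$ by letting the last $\lfloor\log r\rfloor$ appended bits of each of the $m$ coordinates range arbitrarily, yielding $2^{m\lfloor\log r\rfloor}=\Theta_m(r^m)$ distinct $\d$ for each $\d'$; all of them are $(q-r)^*$-bad with the same witness $\i$ (whose appended bits are zero in those positions), and distinctness in $\d$ follows from distinctness in $(\d',\text{appended bits})$. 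With those two fixes your argument matches the paper's.
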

\begin{proof}%
The number of $(q-r)^*$-bad monomials can be bounded by the number of $(q'-r')$-bad monomials with $r'\le m$ multiplied by the number of ways to choose $m\lceil \log(r+m)\rceil$ bits. By Corollary~\ref{cor::number of bad monomials}, it can be estimated as
\begin{align*}
m2^{m}(r+m)^m O_m\left({q'}^{\log \lambda_m}\right)=O_m\left(r^{m-\log\lambda_m} q^{\log \lambda_m}\right),
\end{align*}
where the factor $m$ comes from the number of choices for the parameter $r'\in [m]$ and $2^m(r+m)^m\ge 2^{m\lceil\log(r+m)\rceil}$ is the number of ways to choose $m\lceil \log(r+m)\rceil$ bits.

Now let us elaborate on showing that the number of $(q-r)^*$-bad monomials is $\Omega_m\left(r^{m-\log\lambda_m} q^{\log \lambda_m}\right)$. Take all $(q'-1)$-bad monomials $\X^{\d'}$ over $\F_{q'}[\X]$ with the property that there exists $\i'\le_2 \d'$ such that $\deg(\i')=q'-1$.
By Proposition~\ref{pr::recurrent formula} and Corollary~\ref{cor::number of bad monomials}, the number of such monomials can be bounded as $\Omega_m(q'^{\log \lambda_m})$. Define
\begin{align*}
\ell_0\coloneqq\lceil\log(m+r)\rceil-\lfloor \log r\rfloor.
\end{align*}
Then we concatenate every component $d'_j$ of $\d'=(d'_1,\ldots,d'_m)$  with the all-one string of length $\ell_0$
and an arbitrary binary string of length $\lfloor \log r\rfloor$. The total number of obtained tuples $\d\in \Z_q^m$ is then
\begin{align*}
2^{m\lfloor \log r \rfloor}\Omega_m\left(q'^{\log \lambda_m}\right)=\Omega_m\left(r^{m-\log\lambda_m} q^{\log \lambda_m}\right).
\end{align*}
For every resulting tuple $\d$, the monomial $\X^\d$ is also $(q-r)^*$-bad over $\F_q[\X]$. Indeed, we can construct an appropriate $\i$ based on $\i'$. To see this, we concatenate every component $i'_j$ (except $i'_1$) with the all-zero string of length $\lceil\log(r+m)\rceil$, and $i'_1$ with the all-one string of length  $\ell_0$ and the all-zero string of length $\lfloor \log r\rfloor$.

Then we have $\i\le_2\d$ and $\deg(\i)$ can be easily bounded as $q-r\le\deg(\i)\le q -1$.
This completes the proof.
\end{proof}

\begin{example}
Consider the parameters $q' =2^{\ell'} = 4$, $m=2$, $r=2$, and $q = 2^{\ell' + \lceil\log(r+m)\rceil} = 16$. As shown in the previous example, we have $\mathbf{d}' = (1,3) \in S_0(\ell')$ with $\mathbf{i}' \leq_2 \mathbf{d}'$ for $\mathbf{i}' = (1,2)$. The binary representations of $\mathbf{d}'$ and $\mathbf{i}'$ are given by
\begin{align*}
    \mathbf{d}' &= (01,11)_2  , \\
    \mathbf{i}' &= (01,10)_2 .
\end{align*}
Concatenating the all-one string of length $\ell_0 =\lceil\log(m+r)\rceil-\lfloor \log r\rfloor  =1$ followed by arbitrary strings of length $\lfloor \log r\rfloor = 1$ to the components of $\mathbf{d}'$ gives the tuples
\begin{align*}
    \mathbf{d}_1 &= (0110,1110)_2,\\
    \mathbf{d}_2 &= (0110,1111)_2, \\
    \mathbf{d}_3 &= (0111,1110)_2, \\
    \mathbf{d}_4 &= (0111,1111)_2.
\end{align*}
The $\mathbf{i}$ such that $\mathbf{i}\leq \mathbf{d}_j, \ j=1,2,3,4$, can be found by concatenating every component $i_j'$ except for $i_1'$ with $\lceil \log(r+m) \rceil= 2 $ zeros and $i_1$ with $\ell_0 = 1$ one and $\lfloor \log r\rfloor = 1$ zero, to obtain
\begin{align*}
  \mathbf{i} = (0110,1000)_2 \ .
\end{align*}
The degree of $\mathbf{i}$ is $\deg(\mathbf{i}) = 14 \geq q-r$.
\end{example}

\subsection{Code rate and distance of lifted RS codes}
\begin{theorem}%
  \label{th:: number of bad monomials}
  For a power of two $q$, the rate $R$ and the relative distance $\delta$ of the $[m,q-r,q]$ lifted RS code are
  \begin{align*}
    R = 1-\Theta_m\left((q/r)^{\log \lambda_m-m}\right), \quad \delta \ge \frac{r}{q}\quad \text{as } q\to\infty.
  \end{align*}
\end{theorem}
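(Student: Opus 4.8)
The plan is to establish the two claims separately; both are short, and the real analytic work has already been done.

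For the \emph{rate}: I would invoke Lemma~\ref{lem::lifted Reed-Solomon codes} (equivalently, Proposition~\ref{prop::code cardinality} with $s=1$ and the remark after it), which identifies the $[m,q-r,q]$ lifted RS code with the evaluation of the $\F_q$-span of the $(q-r)^*$-good monomials $\X^\d$, $\d\in\Z_q^m$. These are distinct reduced monomials, and reduced polynomials of degree below $q$ in each variable are in bijection with functions $\F_q^m\to\F_q$; hence the good monomials are $\F_q$-linearly independent as functions, the code has length $q^m$, and its dimension equals the number of $(q-r)^*$-good monomials. Thus
\[
R = 1-\frac{\#\{(q-r)^*\text{-bad monomials in }\Z_q^m\}}{q^m},
\]
and plugging in Lemma~\ref{lem:: number of bad * monomials}, which gives $\#\{(q-r)^*\text{-bad}\}=\Theta_m\!\left(r^{\,m-\log\lambda_m}q^{\log\lambda_m}\right)$, yields
\[
R = 1-\Theta_m\!\left(r^{\,m-\log\lambda_m}q^{\log\lambda_m-m}\right) = 1-\Theta_m\!\left((q/r)^{\log\lambda_m-m}\right),
\]
the exponent being negative since $\log\lambda_m<m$ (cf.~\eqref{eq::bounds on top eigenvalue}).

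For the \emph{distance}: let $\c\neq\0$ be a codeword, produced by a polynomial $f(\X)$ that may be taken reduced; then $f$ is a nonzero function, so $f(\a)\neq 0$ for some $\a\in\F_q^m$. I would then use that the $(q^m-1)/(q-1)$ affine lines through $\a$ partition $\F_q^m\setminus\{\a\}$ into classes of size $q-1$, and that on each such line $L$ the restriction $f|_L$ belongs to $\cF_q(q-r)$, hence has degree at most $q-r-1$, while being nonzero because $f|_L(\a)=f(\a)\neq 0$. Consequently $f|_L$ is nonzero on at least $q-(q-r-1)=r+1$ points of $L$, i.e.\ on at least $r$ points of $L\setminus\{\a\}$; summing over all lines through $\a$ and adding $\a$ itself, the Hamming weight of $\c$ is at least $1+r\cdot\frac{q^m-1}{q-1}\ge r\,q^{m-1}$, so $\delta\ge r/q$.

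There is no genuinely hard step: all of the analytic content — the $\Theta_m$ count of bad monomials — already lives in Proposition~\ref{pr::recurrent formula}, Corollary~\ref{cor::number of bad monomials}, and Lemma~\ref{lem:: number of bad * monomials}, obtained through the recurrence on $\ell=\log q$. The only mild subtleties are representing a codeword by a \emph{reduced} polynomial (so that a nonzero codeword furnishes a point $\a$ with $f(\a)\neq0$) and ensuring the restrictions $f|_L$ stay nonzero; both are dispatched by routing the whole argument through that single point $\a$.
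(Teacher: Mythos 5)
Your proposal is correct and follows essentially the same route as the paper: the rate bound is obtained by combining Lemma~\ref{lem::lifted Reed-Solomon codes} (span of $(q-r)^*$-good monomials) with the $\Theta_m$ count of bad monomials in Lemma~\ref{lem:: number of bad * monomials}, and the distance bound is proved by fixing a point $\w_0$ where a nonzero codeword's polynomial does not vanish and observing that on each line through $\w_0$ the restriction is a nonzero polynomial of degree at most $q-r-1$, hence nonzero at $\ge r$ further points. Your count $1+r\frac{q^m-1}{q-1}$ is in fact marginally tighter than the paper's stated $1+rq^{m-1}$, but both yield $\delta\ge r/q$.
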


\begin{proof}[Proof of Theorem~\ref{th:: number of bad monomials}]
  To estimate the code rate of $[m,q-r,q]$ lifted RS codes, it suffices to compute the fraction of $(q-r)^*$-good monomials. By Lemma~\ref{lem::lifted Reed-Solomon codes} and~\ref{lem:: number of bad * monomials}, the rate is
  \begin{align*}
    1 - \Theta_m\left(r^{m-\log\lambda_m} q^{\log \lambda_m}\right)q^{-m}=1-\Theta_m\left((q/r)^{\log \lambda_m-m}\right)
  \end{align*}
  as $q\to\infty$. To estimate the relative distance of the code, we first note that the lifted RS code is linear. Suppose that $(f(\a))|_{\a \in\F_q^m}$ is a non-zero codeword. Let us say that $f(\w_0)\neq 0$. Then for any $\v\in \F_q^m\setminus \{\0\}$, the polynomial $f(\w_0+\v T)$ is equivalent to a non-zero univariate polynomial of degree at most $q-r-1$. Thus, $f(\w_0+\v t)\neq 0$ for at least $r+1$ different values $t\in\F_q$ and $f(\a)$ is non-zero for at least  $1+rq^{m-1}$ values $\a\in\F_q^m$.  This completes the proof.
\end{proof}

\section{Analysis of lifted multiplicity codes}\label{ss::lifted mult codes}

Using the results on the rate of lifted RS codes, we now move to estimating the rate and minimal distance of lifted multiplicity codes. Recall that lifted RS codes are trivial lifted multiplicity codes with $s=1$. In the following, we impose the constraint $s\ge m$ on the parameters, which helps with dropping the modulo operation in the definition of bad monomials. Then by applying the known results for lifted RS codes, we show how to find the asymptotics of the number of bad monomials when $m$ is fixed and $q$ is large. Our estimate continues the study of two-dimensional lifts initiated in~\cite{li2019lifted} and is consistent with the result for the case of $m=2$ presented there.

\subsection{Computing  the number of $(qs-r,s)^*$-bad monomials}\label{ss::computing the number of bad monomials}
In this section, we show that the number of $(qs-r,s)^*$-bad monomials can be well approximated  by ``$\binom{s+m}{m-1}$  times the number of $(q-r,1)^*$-bad monomials''.

First, we recall the known estimate for the number of $(q-r,1)^*$-bad monomials when the number of variables is fixed and the alphabet size is large, as established in Section~\ref{ss::lifted RS codes}, in the notation of lifted multiplicity codes.
\begin{corollary}%
  \label{cor:: number of bad * monomials}
  For an integer $r<q=2^{\ell}$, the number of $(q-r,1)^*$-bad monomials is $\Theta_m\left(r^{m-\log\lambda_m} q^{\log \lambda_m}\right)$ as $\ell\to\infty$, with $\lambda_m$ as in Definition~\ref{def::lambda}.
Moreover, the number of $\d\in\Z_q^m$ such that there exists an $\i\in\Z_q^m$ with $\i\le_2 \d$ and
\begin{enumerate}
	\item $\deg(\i) \pmod{q}\in\{q-r,q-r+1,\ldots,q-1\}$ is  $\Theta_m\left(r^{m-\log\lambda_m} q^{\log \lambda_m}\right)$ as $\ell\to\infty$.
	\item $\deg(\i)\in\{q-r,q-r+1,\ldots,q-1\}$ is also $\Theta_m\left(r^{m-\log\lambda_m} q^{\log \lambda_m}\right)$ as $\ell\to\infty$.
\end{enumerate}
\end{corollary}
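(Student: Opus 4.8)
The plan is to recognize all three quantities in the corollary as equal, up to constants depending only on $m$, to the count already established in Lemma~\ref{lem:: number of bad * monomials}, so that essentially no new estimate is needed. The displayed claim about $(q-r,1)^*$-bad monomials is a verbatim restatement of Lemma~\ref{lem:: number of bad * monomials}: for $s=1$ the constraint $\deg_q(\d)\le s-1$ holds automatically on $\Z_q^m$, the map $\Modsp{q}{1}$ is the map $\Mods q$, and Definition~\ref{def::bad (d^*,s) monomial} with $d=q-r$ reduces to Definition~\ref{def::bad * monomial}; so that part costs one sentence.

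Next I would argue that conditions~(1) and~(2) describe the \emph{same} set $D\subseteq\Z_q^m$ of tuples, so a single two-sided bound suffices. The inclusion $(2)\subseteq(1)$ is immediate, since a witness $\i$ with $\deg(\i)\in\{q-r,\dots,q-1\}$ has $\deg(\i)<q$ and hence $\deg(\i)\bmod q=\deg(\i)$. For the reverse, take $\d$ with $\i\le_2\d$ and $\deg(\i)=jq+c$, $c\in\{q-r,\dots,q-1\}$: if $j=0$ we are done, and if $j\ge 1$ then $\deg(\i)\ge jq$, so I would feed $\i$ into the bit-dropping procedure from the proof of Lemma~\ref{lem::decrease weight} with its shift chosen so that the degree drops by exactly $jq$, obtaining $\a\le_2\i\le_2\d$ with $\deg(\a)=c\in\{q-r,\dots,q-1\}$; hence $\d$ satisfies~(2). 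For the lower bound I would then note that the explicit family of tuples constructed in the $\Omega$-direction of the proof of Lemma~\ref{lem:: number of bad * monomials} consists precisely of $\d$'s each admitting a witness $\i\le_2\d$ with $q-r\le\deg(\i)\le q-1$, so that family lies in $D$, and it is counted there as $\Omega_m\!\left(r^{m-\log\lambda_m}q^{\log\lambda_m}\right)$.

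For the matching upper bound on $|D|$, I would observe that $\Mods q$ fixes every element of $\{1,\dots,q-1\}$ and that for $r<q$ the set $\{q-r,\dots,q-1\}$ is contained in $\{1,\dots,q-1\}$; therefore $\d\in D$ forces $\deg(\i)\Mods q=\deg(\i)\in\{q-r,\dots,q-1\}$, i.e., $\X^\d$ is $(q-r,1)^*$-bad, so $|D|$ is at most the count from Lemma~\ref{lem:: number of bad * monomials}, giving $|D|=\Theta_m\!\left(r^{m-\log\lambda_m}q^{\log\lambda_m}\right)$, which is both~(1) and~(2). (Alternatively, one can avoid routing through Lemma~\ref{lem:: number of bad * monomials} by dropping the $\lceil\log(r+m)\rceil$ least significant bits exactly as in its proof: since the witness degree is $q-r_0$ plus a multiple of $q$ with $r_0\ge 1$, the reduced witness degree lands strictly below the corresponding multiple of $q'$, so the reduced monomial is $(q'-r')$-bad for some $r'\in[m]$ and Corollary~\ref{cor::number of bad monomials} applies.) The whole argument is essentially bookkeeping; the one point that genuinely needs care is the re-use of the procedure of Lemma~\ref{lem::decrease weight} to realize an exact drop of $jq$ in the degree — legitimate because its termination hypothesis $\deg(\i)\ge jq$ is exactly what we have — and not confusing the modulus $q$ appearing in~(1)--(2) with the modulus $q-1$ hidden inside $\Mods q$.
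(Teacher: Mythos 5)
Your proposal is correct, and it fills in a gap that the paper leaves implicit: the corollary is stated without proof, presented only as a "recall" of Lemma~\ref{lem:: number of bad * monomials} rephrased in the lifted-multiplicity notation, with parts (1) and (2) added. Your argument supplies exactly the bookkeeping the paper expects the reader to perform.

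Your main care point is also the right one: the map $\Mods q$ reduces modulo $q-1$, not modulo $q$, so condition (1) (which uses $\pmod q$) does not directly imply $(q-r,1)^*$-badness. Your route of first showing that conditions (1) and (2) determine the same set $D$, using the degree-shrinking procedure of Lemma~\ref{lem::decrease weight} with the shift set to drop $\deg(\i)$ by exactly $jq$, is a valid way to reconcile the two moduli; one small caveat is that the procedure in that lemma is stated for shifts $j-l$ with $l\in[j]$, whereas you need $l=0$, but your argument correctly notes that the termination guarantee only requires $\deg(\i)\ge jq$, which holds here with strict inequality since $\deg(\i)=jq+c$ and $c\ge q-r\ge 1$. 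With $D$ identified, your two-sided bound is clean: $D$ sits inside the $(q-r,1)^*$-bad set (since $\Mods q$ is the identity on $\{1,\dots,q-1\}$), giving the $O_m$-bound from Lemma~\ref{lem:: number of bad * monomials}, and $D$ contains the explicit family from the $\Omega$-direction of that lemma, whose witnesses satisfy $q-r\le\deg(\i)\le q-1$ by its last line.

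Your parenthetical alternative — applying the bit-dropping reduction to $(q'-r')$-bad monomials directly to conditions (1) and (2), bypassing the set-identity $D_{(1)}=D_{(2)}$ — is arguably closer to what the paper's proof of Lemma~\ref{lem:: number of bad * monomials} actually does, and is somewhat more economical: the constraint $\deg(\i)=(q-r_0)+jq$ of condition (1) and $\deg(\i)=q-r_0$ of condition (2) both reduce, after dropping the $\lceil\log(r+m)\rceil$ least significant bits, to $(q'-r')$-badness for some $r'\le m+1$, so Corollary~\ref{cor::number of bad monomials} gives the upper bound without ever needing to compare moduli. Either route is fine; the degree-shrinking route makes the set-theoretic content (that (1) and (2) coincide, not merely agree up to constants) explicit, which is a nice bonus even if it is not logically required.
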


Let $s\ge m$ be a power of two and $1\le r< q$. First, we show that for such a choice of parameters, the modulo operation in Definition~\ref{def::bad (d^*,s) monomial} can be dropped. By Proposition~\ref{pr::reducing the power}, for $f(X)\in\F_q[X]$ with
\begin{align*}
\deg(f)\le (s-1)q + m(q-1)=(m+s-1)q-m ,
\end{align*}
we have that $[X^i](f(X) \pmod{X^{qs}+X^{s}})=[X^i]f(X)$ for all $i\in \{qs-r, qs-r+1,\ldots, qs-1\}$ as
\begin{align*}
(m+s-1)q-m-qs+s=(m-1)q - m +s < qs - r.
\end{align*}
Therefore, by Definition~\ref{def::bad (d^*,s) monomial}, a monomial $\X^{\d}$ with $\d\in\Z_{qs}^m$ and $\deg_q(\d)\le s-1$ is  $(qs-r,s)^*$-bad if there exists a vector $\i$ such that $\i\le_2 \d$ and $\deg(\i)\in \{qs-r,qs-r+1,\ldots, qs-1\}$.

Let a monomial $\X^{\d}$ be $(qs-r,s)^*$-bad. Then every component of $\d$ can be represented as $d_j = \hat d_j q+ d_j'$ with $d_j'\in \Z_q$ and $\hat d_j \in \Z_s$ for all $j\in[m]$. As deduced above, there exists an $\i\in\Z_{qs}^m$ such that $\i\le_2 \d$ and $\deg(\i)\in \{qs-r,qs-r+1,\ldots, qs-1\}$. Therefore, after representing $i_j=\hat i_j q + i_j'$, we obtain that $\i'\le_2 \d'$ and $\deg(\i') \pmod{q} \in  \{q-r,q-r+1,\ldots,q-1\}$. %
Let us also check that $s-m\le \deg(\hat \d)\le s-1$. To show $\deg(\hat \d)\ge s-m$, we just note that
\begin{align*}
  \deg(\i)\le \deg(\d) &= \deg(\hat \d)q + \deg(\d')\\ &\le \deg(\hat \d)q + (q-1)m.
\end{align*}
Thus, if  $\deg(\hat \d)< s-m$, we have that $\deg(\i)\le (s-1)q - m < qs - r$ which contradicts the property  $\deg(\i)\in \{qs-r,qs-r+1,\ldots, qs-1\}$. Note that $\deg(\hat \d)=\deg_q(\d)$, therefore $\deg(\hat \d)\le s-1$.
Finally, we arrive at the following statement.
\begin{lemma}\label{lem:: number of bad (qs-r,s) monomials}
  For an integer $m<r<q=2^{\ell}$ and a power of two $s\ge m$, the number of $(qs-r, s)^*$-bad monomials is
  \begin{align*}
    \Theta_m\left(s^{m-1}r^{m-\log \lambda_m} q^{\log \lambda_m}\right)\quad \text{as }\ell\to\infty.
  \end{align*}
\end{lemma}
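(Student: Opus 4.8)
The plan is to establish a bijection-like correspondence between $(qs-r,s)^*$-bad monomials and pairs consisting of a $(q-r,1)^*$-bad monomial in the ``low-order'' variables together with a choice of the ``high-order'' part $\hat\d$, and then count the latter. From the discussion preceding the statement, a monomial $\X^\d$ with $\d\in\Z_{qs}^m$ and $\deg_q(\d)\le s-1$ is $(qs-r,s)^*$-bad if and only if there is $\i\le_2\d$ with $\deg(\i)\in\{qs-r,\dots,qs-1\}$; writing $d_j=\hat d_j q + d_j'$ and $i_j=\hat i_j q + i_j'$ with $d_j',i_j'\in\Z_q$ and $\hat d_j,\hat i_j\in\Z_s$, one has $\i'\le_2\d'$, $\hat\i\le_2\hat\d$, and $\deg(\i')\equiv q-r_0 \pmod q$ for some $r_0\in[r]$, together with $s-m\le\deg(\hat\d)\le s-1$.

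First I would make the correspondence precise in both directions. For the upper bound: given a $(qs-r,s)^*$-bad $\X^\d$, the pair $(\d',\hat\d)$ satisfies that $\X^{\d'}$ is $(q-r)^*$-bad over $\F_q[\X]$ (more precisely, there is $\i'\le_2\d'$ with $\deg(\i')\bmod q\in\{q-r,\dots,q-1\}$), and $\hat\d\in\Z_s^m$ has $\deg(\hat\d)\le s-1$. The number of choices for $\d'$ is $O_m(r^{m-\log\lambda_m}q^{\log\lambda_m})$ by part~1 of Corollary~\ref{cor:: number of bad * monomials}, and the number of $\hat\d\in\Z_s^m$ with $\deg(\hat\d)\le s-1$ is $\binom{s+m-1}{m}=O_m(s^{m-1})$ (a stars-and-bars count; $s\ge m$ makes this the dominant range, and crucially one does not double count since $\d\leftrightarrow(\d',\hat\d)$ is a genuine bijection of $\Z_{qs}^m$). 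Multiplying gives the $O_m(s^{m-1}r^{m-\log\lambda_m}q^{\log\lambda_m})$ bound.

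Next, for the matching lower bound, I would exhibit enough $(qs-r,s)^*$-bad monomials. Start from the $\Omega_m(r^{m-\log\lambda_m}q^{\log\lambda_m})$ vectors $\d'\in\Z_q^m$ produced in the proof of Lemma~\ref{lem:: number of bad * monomials} (those admitting $\i'\le_2\d'$ with $q-r\le\deg(\i')\le q-1$, i.e.\ $j=0$). For each such $\d'$, pair it with any $\hat\d\in\Z_s^m$ satisfying $\deg(\hat\d)=s-1$; the number of such $\hat\d$ is $\binom{s+m-2}{m-1}=\Omega_m(s^{m-1})$. Setting $d_j=\hat d_j q + d_j'$ one checks $\deg_q(\d)=\deg(\hat\d)=s-1\le s-1$, and taking $i_j=\hat i_j q + i_j'$ with any $\hat\i\le_2\hat\d$ of degree exactly $s-1$ (e.g.\ $\hat\i=\hat\d$) gives $\i\le_2\d$ with $\deg(\i)=(s-1)q+\deg(\i')\in\{qs-q+q-r,\dots,qs-q+q-1\}=\{qs-r,\dots,qs-1\}$, so $\X^\d$ is indeed $(qs-r,s)^*$-bad. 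Distinct pairs $(\d',\hat\d)$ give distinct $\d$, so the count is $\Omega_m(s^{m-1}r^{m-\log\lambda_m}q^{\log\lambda_m})$. Combining the two bounds yields the claimed $\Theta_m$.

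The main obstacle I anticipate is bookkeeping the factor $s^{m-1}$ cleanly: one must verify that $\binom{s+m-1}{m}$, $\binom{s+m-2}{m-1}$ and similar binomials are all $\Theta_m(s^{m-1})$ for $s\ge m$ (fixed $m$, large $s$), and that the constraint $s-m\le\deg(\hat\d)\le s-1$ does not change the order — it only restricts $\hat\d$ to a union of $m$ degree-slices, each of size $\Theta_m(s^{m-1})$. A secondary subtlety is confirming that the ``drop the modulo'' reduction from Proposition~\ref{pr::reducing the power} applies uniformly here, which was already carried out in the text preceding the lemma, so I would simply cite that discussion. No delicate analysis is needed beyond these routine counting estimates and the invocation of Corollary~\ref{cor:: number of bad * monomials} and Lemma~\ref{lem:: number of bad * monomials}.
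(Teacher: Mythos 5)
Your decomposition $\d\mapsto(\hat\d,\d')$ with $d_j=\hat d_j q + d_j'$ is exactly the paper's, and your lower-bound construction (take $\d'$ with $\i'\le_2\d'$, $q-r\le\deg(\i')\le q-1$, pair with any $\hat\d$ of degree exactly $s-1$, set $\hat\i=\hat\d$) matches the paper's argument verbatim, so that half is fine.

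The upper bound, however, contains a genuine counting error. You claim $\binom{s+m-1}{m}=O_m(s^{m-1})$; it is in fact $\Theta_m(s^m)$ for fixed $m$ and large $s$, since $\binom{s+m-1}{m}\sim s^m/m!$. So multiplying $\binom{s+m-1}{m}$ by the $O_m(r^{m-\log\lambda_m}q^{\log\lambda_m})$ count of valid $\d'$ only yields $O_m(s^m r^{m-\log\lambda_m}q^{\log\lambda_m})$, which is weaker than the claimed bound by a factor of $s$. You mention the constraint $s-m\le\deg(\hat\d)\le s-1$ in your final paragraph but dismiss it as not changing the order; in fact it is essential. The paper derives $s-m\le\deg(\hat\d)$ by noting $\deg(\i)\le\deg(\hat\d)q+(q-1)m$, so $\deg(\hat\d)<s-m$ would force $\deg(\i)<qs-r$, contradicting badness. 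With this constraint the count of admissible $\hat\d$ becomes $\sum_{j=1}^{m}\binom{s-j+m-1}{m-1}$, which \emph{is} $\Theta_m(s^{m-1})$ because it is a fixed number $m$ of degree-slices, each of size $\Theta_m(s^{m-1})$. That restriction is precisely where the extra factor of $1/s$ comes from, so it cannot be relegated to an afterthought: without it the upper bound fails.
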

\begin{proof}
  As noted above, for every $(qs-r,s)^*$-bad monomial $\X^\d$, $\d$ can be uniquely decomposed to the pair $(\hat\d,\d')$, where $s-m\le \deg(\hat\d) \le s-1$ and for $\d'\in\Z_q^{m}$, there exists an $\i'\le_2 \d'$ with $\deg(\i') \pmod{q} \in  \{q-r,q-r+1,\ldots,q-1\}$. Thus, Corollary~\ref{cor:: number of bad * monomials} yields that the number of $(qs-r,s)^*$-bad monomials for $\ell\to\infty$ can be bounded by
  \begin{align*}
    \left(\sum_{j=1}^{m}\binom{s-j+m-1}{m-1}\right)& O_m\left(r^{m-\log \lambda_m} q^{\log \lambda_m}\right) \\
    &=O_m \left(s^{m-1}r^{m-\log \lambda_m} q^{\log \lambda_m}\right).
  \end{align*}
  It remains to show that this estimate is asymptotically tight.
  To see this, consider all possible $\d'\in\Z_q^m$ such that there exists $\i'\in\Z_q^m$ with $\i'\le_2 \d'$ and $\deg(\i')=q-r'\in\{q-r,q-r+1,\ldots,q-1\}$. By Corollary~\ref{cor:: number of bad * monomials} the number of such $\d'$ can be estimated as
  \begin{align*}
    \Omega_m\left(r^{m-\log \lambda_m} q^{\log \lambda_m}\right).
  \end{align*}
  Now we take a look on all possible $\hat\d\in \Z_s^m$ such that $\deg(\hat \d) = s-1$. We can estimate the number of such $\hat\d$ by
  $\binom{s+m-2}{m-1}$. For any such $
  \hat \d$, we define $\d\in\Z_{qs}^m$ to be such that $d_j=\hat d_jq + d_j'$ and note that $\X^\d$ is $(qs-r,s)^*$-bad as for $\i$ with $i_j = \hat d_jq+i_j'$, we have $\i\le_2 \d$ and
  \begin{align*}
    \deg(\i)=q\deg(\hat \d) + \deg(\i') = q(s-1) + q-r' = qs - r',
  \end{align*}
  which belongs to $\{qs-r,qs-r+1,\ldots, qs-1\}$. Therefore, the number of $(qs-r,s)^*$-bad monomials is
  \begin{align*}
    \binom{s+m-2}{m-1}\Omega_m&\left(r^{m-\log \lambda_m} q^{\log \lambda_m}\right) \\
    &= \Omega_m \left(s^{m-1}r^{m-\log \lambda_m} q^{\log \lambda_m}\right).
  \end{align*}
  This completes the proof.
\end{proof}
\subsection{Rate and distance of lifted multiplicity codes}\label{ss::rate and distance of lifted multiplicity codes}

\begin{theorem}[Rate and distance of lifted multiplicity codes]\label{th:: code rate of lifted multiplicity codes}
	For powers of two $s,q$ and integers $r$ and $m$ with $m\le s \le q$ and $r\le q$, the rate of the $[m,s,qs-r,q]$ lifted multiplicity code is
        \begin{align*}
1 - O_m\left(s^{-1}(q/r)^{\log \lambda_m - m}\right)\quad \text{as } q\to\infty.
	\end{align*}
	The relative distance $\Delta$ of the $[m,s,qs-r,q]$ lifted multiplicity code is
        \begin{align*}
\Delta\geq \Delta_{min}:=\left \lceil\frac{r-s+1}{s}\right\rceil\frac{q-s}{q^2}.
	\end{align*}
	For $s=o(r)$,  $\Delta_{min}= \frac{r}{qs}(1+o(1))$.
\end{theorem}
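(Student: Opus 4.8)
The plan is to establish the two claims separately, each reducing to a counting/structure result already proved in the excerpt. For the \textbf{rate}, I would start from Proposition~\ref{prop::code cardinality}, which says the cardinality of the $[m,s,qs-r,q]$ lifted multiplicity code is at least $q^{|\mathcal{F}_q(m,s,d)|}$ with $d=qs-r$. Since each codeword symbol lies in $\F_q^{\binom{s+m-1}{m}}$ and the length is $q^m$, the rate is at least
\begin{align*}
\frac{|\mathcal{F}_q(m,s,d)|}{\binom{s+m-1}{m}q^m}
= \frac{\binom{s+m-1}{m}q^m - (\text{number of }(qs-r,s)^*\text{-bad monomials})}{\binom{s+m-1}{m}q^m}.
\end{align*}
The total number of admissible monomials $\X^\d$ with $\d\in\Z_{qs}^m$, $\deg_q(\d)\le s-1$, is exactly $\binom{s+m-1}{m}q^m$ (choose the ``high parts'' $\hat\d$ with $\deg(\hat\d)\le s-1$, giving $\binom{s+m-1}{m}$ options, and the ``low parts'' $\d'\in\Z_q^m$ freely). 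Then I plug in Lemma~\ref{lem:: number of bad (qs-r,s) monomials}, which gives the number of bad monomials as $\Theta_m(s^{m-1}r^{m-\log\lambda_m}q^{\log\lambda_m})$, so the rate loss is
\begin{align*}
O_m\!\left(\frac{s^{m-1}r^{m-\log\lambda_m}q^{\log\lambda_m}}{s^m q^m}\right)
= O_m\!\left(s^{-1}(q/r)^{\log\lambda_m-m}\right),
\end{align*}
using $\binom{s+m-1}{m}=\Theta_m(s^m)$. This handles the rate claim, modulo the caveat (already flagged in the excerpt) that for $s\ge 2$ one only gets a lower bound on the rate, not equality.

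For the \textbf{distance}, I would use linearity of the code (from the two properties of the Hasse derivative in the Proposition preceding Definition~\ref{def::equivalenceOrderP}): it suffices to lower bound the relative Hamming weight of a nonzero codeword, where a coordinate $\a$ is ``zero'' precisely when $f^{(<s)}(\a)=\0$. Take a nonzero $f$ producing a codeword and a point $\w_0$ with $f^{(<s)}(\w_0)\ne\0$. The key observation, made in the third bullet after the main Theorem, is that even though $f$ and its low-order derivatives need not vanish at $\w_0$, the restriction $f|_L$ for a line $L$ through $\w_0$ may still be $\equiv_s 0$; however this can happen for at most $sq^{m-2}$ of the $q^{m-1}$ lines through $\w_0$ (those lines are grouped into $q^{m-1}$ directions, and a degree argument — the directional restriction being a nonzero polynomial on most directions — forces all but a $\le s/q$ fraction to be nontrivial; I would make this precise by a Schwartz–Zippel / multiplicity-of-multivariate-polynomial argument à la Lemma~\ref{lem::multiplicity}). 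On each of the remaining $\ge (q-s)q^{m-2}$ lines $L$, $f|_L\equiv_s g_L$ with $g_L$ a \emph{nonzero} univariate polynomial of degree $<qs-r$; by Proposition~\ref{pr::reducing the power} and the multiplicity bound for univariate polynomials, $g_L$ (equivalently $f|_L$) has $f^{(<s)}$ vanishing on at most $\lfloor(qs-r)/s\rfloor$ points of $L$, hence $f^{(<s)}(\a)\ne\0$ for at least $q-\lfloor(qs-r)/s\rfloor=\lceil r/s\rceil$, and in fact at least $\lceil(r-s+1)/s\rceil+1$ points of $L$ once one is careful with the rounding (this is where the exact constant in $\Delta_{min}$ comes from). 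Counting incidences (each such nonzero point lies on only so many of our chosen lines), the number of coordinates $\a$ with $f^{(<s)}(\a)\ne\0$ is at least roughly $\lceil(r-s+1)/s\rceil\,(q-s)q^{m-2}$, and dividing by $q^m$ gives $\Delta\ge\lceil(r-s+1)/s\rceil\frac{q-s}{q^2}$. Finally, for $s=o(r)$ the rounding and the $(q-s)/q$ factor are both $1+o(1)$, yielding $\Delta_{min}=\frac{r}{qs}(1+o(1))$.

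The main obstacle I anticipate is the distance argument, specifically pinning down the exact constant $\lceil(r-s+1)/s\rceil$ and the bound of $(q-s)q^{m-2}$ ``good'' lines through a fixed nonzero point. The subtlety is precisely the one the authors emphasize was missed in~\cite{wu2015revisiting}: the set of lines through $\w_0$ on which $f|_L$ degenerates to $\equiv_s 0$ must be controlled by a genuine multivariate-multiplicity argument (Lemma~\ref{lem::multiplicity} applied to a suitable derivative/section of $f$ that detects this degeneracy), rather than by a naive count. The rate part, by contrast, is essentially a direct substitution of Lemma~\ref{lem:: number of bad (qs-r,s) monomials} and Proposition~\ref{prop::code cardinality} together with the elementary identity $\sum_{j}\binom{s-j+m-1}{m-1}=\binom{s+m-1}{m}$ for the monomial count, so I expect no real difficulty there beyond bookkeeping the $\Theta_m(s^m)$ normalization.
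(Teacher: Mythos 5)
Your proposal follows essentially the same route as the paper for both claims. The rate argument is identical: the number of type-$s$ monomials $\X^{\d}$ with $\deg_q(\d)\le s-1$ is $\binom{s+m-1}{m}q^m$, Proposition~\ref{prop::code cardinality} gives the lower bound on dimension, and Lemma~\ref{lem:: number of bad (qs-r,s) monomials} plugs in to give the $O_m\bigl(s^{-1}(q/r)^{\log\lambda_m-m}\bigr)$ rate loss after normalizing by $\binom{s+m-1}{m}=\Theta_m(s^m)$.

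For the distance, your skeleton matches the paper's: (i) bound the number of lines through a nonvanishing point $\w_0$ whose restriction degenerates to $\equiv_s 0$; (ii) on the surviving lines apply the univariate multiplicity bound to the reduced polynomial of degree $<qs-r$; (iii) sum contributions over lines. The place where you wave your hands is precisely step (i), which you flag yourself as ``a suitable derivative/section of $f$ that detects this degeneracy.'' The paper's concrete device is worth knowing: pick $\i_0$ with $\deg(\i_0)=i_0<s$ and $f^{(\i_0)}(\w_0)\neq 0$, parameterize lines by $\v=(1,v_2,\ldots,v_m)$, and observe that
\begin{align*}
g_\v^{(i_0)}(0)=\sum_{\i:\,\deg(\i)=i_0}f^{(\i)}(\w_0)\,\v^{\i}
\end{align*}
is a \emph{nonzero} polynomial in $(v_2,\ldots,v_m)$ of degree $<s$ (the coefficient of $\v^{\i_0}$ is $f^{(\i_0)}(\w_0)$), so by the ordinary DeMillo--Lipton--Zippel bound at most $sq^{m-2}$ directions can kill it. This is plain Schwartz--Zippel, not the multiplicity Lemma~\ref{lem::multiplicity} you invoked; Lemma~\ref{lem::multiplicity} is only needed later on each line (in its univariate, $m=1$, form). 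Two smaller remarks: the ``counting incidences'' step is unnecessary because the $q^{m-1}$ lines $\w_0+T\v$ with $\v=(1,v_2,\ldots,v_m)$ pairwise intersect only at $\w_0$, so the nonvanishing points beyond $\w_0$ on distinct lines are automatically distinct; and the constant comes out cleanly as $\lceil(r+1)/s\rceil$ nonvanishing points per good line, giving $\lceil(r+1)/s\rceil-1=\lceil(r-s+1)/s\rceil$ beyond $\w_0$ and hence the stated $\Delta_{\min}$, consistent with the rounding you half-corrected.
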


\begin{proof}[Proof of Theorem~\ref{th:: code rate of lifted multiplicity codes}]
	By Proposition~\ref{prop::code cardinality}, we can obtain the lower bound on the rate of the lifted multiplicity  code by computing the fraction of $(qs-r,s)^*$-good monomials. Thus, by Lemma~\ref{lem:: number of bad (qs-r,s) monomials}, the rate is
        \begin{align*}
          1-& \frac{O_m\left(s^{m-1}r^{m-\log \lambda_m} q^{\log \lambda_m}\right)}{\binom{s+m-1}{m}q^m} \\
          &\hspace{3cm}= 1 - O_m\left(s^{-1}(q/r)^{\log \lambda_m - m}\right).
	\end{align*}
	
	Now we estimate the distance of the $[m,s,qs-r,q]$ lifted multiplicity code. Consider a codeword which is the evaluation of some non-zero polynomial $f$.
	Let $\w_0\in\F_q^m$ be a coordinate such that $f^{(<s)}(\w_0)$ is not all-zero. In what follows, we prove the existence of a set $S$, $|S|\geq (q-s)q^{m-1}$, of lines containing this point such that for any $L\in S$ polynomial $f|_L$  doesn't vanish for at least $\lceil r/s\rceil$ points.  More explicitly, assume that for some $\i_0\in\Z_{\ge}^m$ with $\deg( \i_0)=i_0 < p$, $f^{(\i_0)}(\w_0)\neq 0$. Let a line $L$ be parameterized by $\w_0+T\v$ with $\v=(1,v_2,\ldots,v_m)$, $v_i\in\F_q$. Define $g_\v(T):=f|_L=f(\w_0+T\v)$. By the definition of Hasse derivatives, we have
        \begin{align*}
	g_\v(T) = \sum_{\i\in\Z_{\ge}^m}f^{(\i)}(\w_0+T\v) T^{\deg(\i)} \v^\i
	\end{align*}
	and, thus,
        \begin{align*}
		g_{\v}^{(i_0)}(0)=\sum\limits_{\i :\ \deg(\i)=i_0}f^{(\i)}(\w_0) \v^{\i}.
	\end{align*}
	Since $f^{(\i_0)}(\w_0)\neq 0$, we can think about the right-hand side of the above equality as a non-zero polynomial in $v_2,\ldots,v_m$ of degree at most $s$. This yields that there exist at most $s q^{m-2}$ different $\v=(1,v_2,\ldots, v_m)\in\F_q^m$ such that $g_\v^{(i_0)}(0)=0$. Thus, for at least $(q-s)q^{m-2}$ different lines $L$ containing the point $\w_0$, the univariate polynomial $f|_L\neq 0$. By the definition of $[m,s,qs-r,q]$ lifted multiplicity codes, for any line $L$, $f|_L$ agrees with some univariate polynomial of degree at most $qs-r-1$ on its first $s-1$ derivatives. By Lemma~\ref{lem::multiplicity}, if $g_{\v}(T)=f|_L\neq 0$, there exist at least $\lceil (r+1)/s\rceil$ points on which $f|_L$  doesn't vanish with high multiplicity, i.e., for at least $\lceil (r+1)/s\rceil$ different $t\in\F_q$, $g_\v^{(j)}(t)\neq 0$ for some $j<s$. This implies that the number of non-zero positions of the codeword produced by $f$ is at least
        \begin{align*}
	1+\left\lceil \frac{r+1}{s}-1\right\rceil (q-s)q^{m-2}.
	\end{align*}
	Since the lifted multiplicity code is $\F_q$-linear, the distance of the lifted multiplicity code can be bounded by the same value. This completes the proof.
\end{proof}

\section{PIR codes}\label{ss:: PIR codes}
In this section, we show that lifted multiplicity codes have the best known trade-off between the number of information symbols and the required redundancy for private information retrieval (PIR) codes. %

\subsection{Preliminaries and prior work}
The defining property of a $k$-PIR code is that for every message symbol, there exist $k$ mutually disjoint sets of coded symbols from which the message symbol can be uniquely recovered. PIR codes were suggested in~\cite{fazeli2015pir} to decrease storage overhead in PIR schemes preserving both privacy and communication complexity.  Formally, this family of codes is defined as follows.

\begin{defn}[PIR code, %
  {\cite{fazeli2015pir}}]\label{def::PIR code}
  Let $F:\,\Sigma^n\to\Sigma^N$ be a map that encodes a string $x_1,\dots,x_n$ to $c_1,\dots, c_N$ and $\C$ be the image of $F$.
  The code $\C$ will be called a \textit{$k$-PIR code} (or \textit{$[N,n,k]_{|\Sigma|}^{P}$ code}) over the alphabet $\Sigma$ if for every $i\in[n]$, there exist $k$ mutually disjoint sets $R_{1},\dots , R_{k}\subset[N]$ (referred to as \textit{recovering sets}) and functions $g_1,\dots, g_k$ such that for all $\c\in\C$ and for all $j\in[k]$, $g_j(\c|_{R_{j}})=x_{i}$, where $\c|_{R}$ is the projection of $\c$ onto coordinates indexed by $R$.
\end{defn}
The definition of a \textit{code with the disjoint repair group property} (DRGP)~\cite{li2019lifted} is similar to Definition~\ref{def::PIR code}, except that we should recover all codeword symbols instead of only information symbols. For $\F_q$-linear codes, any systematically encoded code with the DGRP directly gives a PIR code. In what follows, we summarize the results for PIR codes since the best known bounds for DRGP codes hold for PIR codes as well.

The main figure of merit when studying PIR codes is the value of $N$, given $n$ and $k$.  Denote by $\NP_q(n,k)$ the value of the smallest $N$ such that there exists an $[N,n,k]_q^P$ code. For the binary case, we will remove $q$ from these and subsequent notations. Since it is known that for sublinear $k$ and fixed $q$, $\lim\limits_{n\rightarrow \infty} \NP_q(n,k)/n=1$,~\cite{fazeli2015pir,guo2013new}, we evaluate these codes by their redundancy and define $\rP_q(n,k) := \NP_q(n,k)-n$.  %
In order to have a better understanding of the asymptotic behavior of the redundancy, the value of $\rP_q(n,k)$ is usually studied for either constant $k=O(1)$ or polynomial $k=\Theta(n^\epsilon)$, $\epsilon\geq 0$.

Constructions of PIR codes with fixed $k$ were first suggested in~\cite{fazeli2015pir,VRK17}. In particular, it can be seen that for $k=2$, $\rP_q(n,2) = 1$, and for any fixed $k\ge 3$,
$\rP_q(n,k) = \Theta(\sqrt{n})$~\cite{fazeli2015pir,rao2016lower,wootters2016linear}. There are several constructions of PIR codes~\cite{li2019lifted,asi2018nearly,FGW17,LC04,VRK17} and based on them, it is already possible to deduce some results on the asymptotic behavior of $\rP_q(n,k)$. For example, the constructions of \textit{one-step majority logic decodable codes} from~\cite{LC04} assure that $\rP(n,n^{\epsilon})= O(n^{0.5+\epsilon})$ for all $\epsilon\geq 0$. In~\cite{FGW17} the authors discussed partially lifted codes and their application to non-binary PIR codes. %
More results for PIR codes were achieved in~\cite{asi2018nearly} by using multiplicity codes and array codes. The construction~\cite{li2019lifted} of PIR codes is based on bi-variate lifted multiplicity codes. Constructions of PIR codes based on tri-variate lifted RS codes were investigated in~\cite{polyanskii2019lifted}. Finally, the paper~\cite{hastings2020wedge} introduced the so-called wedge-lifted codes to construct PIR codes.

\begin{lemma}\label{lem::known1}
The redundancy of non-binary PIR codes satisfies:
\begin{enumerate}
\item  \emph{Steiner systems~\cite{fazeli2015pir}:}\\ $\rP_q(n,k)= O_k(\sqrt n)$ for PIR codes with fixed $k\ge 3$.
\item  \emph{Multiplicity code~\cite{asi2018nearly}:}\\ $\rP_q(n,n^{\epsilon})= O(n^{\delta(\epsilon)})$ for $0\leq \epsilon <1 $, where $\delta(\epsilon) = 1-\frac{1}{\lfloor 2/(1-\epsilon)\rfloor}+\frac{\epsilon}{\lfloor 2/(1-\epsilon)\rfloor-1}$. \label{item:asi}
\item  \emph{Partially lifted codes~\cite{FGW17}:}\\ $\rP_q(n,n^{0.25})= O(n^{0.714})$. \label{item:FGW}
\item  \emph{Lifted mult. codes with $m=2$~\cite{li2019lifted}:}\\ $\rP_q(n,n^{\epsilon}) = O(n^{\frac{1}{2} + \epsilon (\log 3 - 1)})$ for $0\leq \epsilon < \frac{1}{2}$. \label{item:LiLMC}
\item  \emph{Lifted RS codes~\cite{guo2013new}:}\\ $\rP_q(n,n^{1-1/m})=O(n^{1+ \log \left(1-2^{-m\lceil \log m\rceil}\right)/(m\lceil \log m\rceil)})$ for an integer $m\ge 2$. \label{item:GuoQaryPIR}
\item \emph{Lifted RS codes with $m=3$~\cite{polyanskii2019lifted}:}\\  $\rP_q(n,n^{2/3})= O(n^{\log_8(5+\sqrt{5})})$. \label{item:LRS}
\end{enumerate}
\end{lemma}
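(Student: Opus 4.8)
The statement compiles known results from the literature, so the plan is to establish each item by taking the construction cited next to it and translating its length $N$, its dimension $n$, and its availability --- the number of pairwise disjoint recovering sets guaranteed for every codeword coordinate --- into the redundancy $\rP_q(n,k)=N-n$ of Definition~\ref{def::PIR code}; for all the $\F_q$-linear families below, availability for every codeword coordinate is precisely the disjoint repair group property, so a systematic encoding turns the code into a PIR code with the same redundancy. Items~1 and~3 are verbatim restatements of~\cite{fazeli2015pir} and~\cite{FGW17} and require nothing beyond citing the source: for item~1 one recalls the design-based construction of~\cite{fazeli2015pir}, in which information symbols are placed on the points of a near-extremal $w$-uniform linear hypergraph of minimum degree at least $k$ with block size $w=\Theta(\sqrt{n})$ and the redundant symbols are the block sums, giving $k$ disjoint recovering sets of size $w-1$ and $\Theta_k(\sqrt{n})$ redundant symbols; item~3 is simply quoted.

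For the lifted families the recipe is uniform: choose the free parameters so that the availability equals the prescribed $n^{\epsilon}$, then rewrite $N-n$ as a power of $n$. For item~5 I would take the $[m,q-r_0,q]$ lifted RS code with a fixed $r_0\ge 1$: then $n=q^m$; every point lies on the $\tfrac{q^m-1}{q-1}=\Theta(q^{m-1})$ lines through it, any two meeting only in that point, and since the restriction to any such line is a univariate polynomial of degree $<q-r_0\le q-1$ it forms a recovering set, so the availability is $\Theta(q^{m-1})=\Theta(n^{1-1/m})$; by Lemma~\ref{lem::lifted Reed-Solomon codes} the redundancy equals the number of $(q-r_0)^{*}$-bad monomials, which by the estimate $\log\lambda_m\le m-p_m$ of~\cite{guo2013new} (cf.~\eqref{eq::bounds on top eigenvalue}) is $O(q^{m-p_m})=O(n^{1-p_m/m})$, exactly the claimed exponent $1+\log(1-2^{-m\lceil\log m\rceil})/(m\lceil\log m\rceil)$. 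Item~6 is the $m=3$ specialization, with the exact eigenvalue $\lambda_3=5+\sqrt{5}$ of~\cite{polyanskii2019lifted} replacing the bound $q^{m-p_m}$: here $n=q^3$, the availability is $\Theta(q^2)=\Theta(n^{2/3})$, and by Lemma~\ref{lem:: number of bad * monomials} the redundancy is $\Theta(q^{\log_2(5+\sqrt{5})})=\Theta(n^{\frac13\log_2(5+\sqrt{5})})=\Theta(n^{\log_8(5+\sqrt{5})})$.

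Items~2 and~4 follow from the same recipe, applied to the order-$s$ multiplicity code in $m=\lfloor 2/(1-\epsilon)\rfloor$ variables of~\cite{asi2018nearly} and to the bivariate lifted multiplicity code of~\cite{li2019lifted} respectively; here one reads off the availability from the line-bundle recovery scheme of~\cite{kopparty2014high}, bounds the redundancy --- for item~2 via the dimension $\binom{d-1+m}{m}$ of the multiplicity code with $d$ chosen just below $sq$, and for item~4 via Lemma~\ref{lem:: number of bad (qs-r,s) monomials} with $m=2$ and $\lambda_2=3$ --- and finally eliminates the parameters $s$ and $q$ in favour of $n$ and $k$. I expect this elimination to be the main obstacle: unlike the lifted RS cases, where a constant degree deficit already saturates the trade-off, here one must balance the multiplicity order $s$ against the field size $q$ (and, for item~2, pick the number of variables as a function of $\epsilon$), set the availability equal to $n^{\epsilon}$, and verify that solving for $s$ and $q$ and substituting back leaves precisely the exponents $\delta(\epsilon)$ and $\tfrac12+\epsilon(\log 3-1)$ over the stated ranges of $\epsilon$ --- the bookkeeping that is carried out in~\cite{asi2018nearly} and~\cite{li2019lifted} and is the only genuinely computational part of the argument.
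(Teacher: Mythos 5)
Lemma~\ref{lem::known1} is stated in the paper without proof: it is a survey of results imported from the six cited sources, not something the paper re-derives. Your plan --- take each cited construction, read off its length $N$, dimension $n$, and availability (the DRGP property of a systematically encoded linear code gives the PIR property at the same redundancy), and rewrite $N-n$ as a power of $n$ --- is therefore the only route available and it is carried out correctly. The parameter algebra you do explicitly checks out: for item~5, availability $\Theta(q^{m-1})$ from the $(q^m-1)/(q-1)$ lines through a point, redundancy $O(q^{m-p_m})$ from $\log\lambda_m \le m-p_m$, and $1-p_m/m = 1+\log\bigl(1-2^{-m\lceil\log m\rceil}\bigr)/(m\lceil\log m\rceil)$; for item~6, $\lambda_3 = 5+\sqrt{5}$ matches the table and $q^{\log_2(5+\sqrt5)} = n^{\log_8(5+\sqrt5)}$ when $n=\Theta(q^3)$. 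One small imprecision worth fixing: you write ``$n = q^m$'' for the lifted RS code, but $q^m$ is the length $N$; the dimension is $n = q^m(1-o(1))$. Since all these codes have rate tending to $1$, $n$ and $N$ are interchangeable inside the exponents, but the distinction should be stated because the redundancy being bounded is literally $N-n$. Your deferral of the $s$-versus-$q$ balancing for items~2 and~4 to \cite{asi2018nearly} and \cite{li2019lifted} is appropriate; those optimizations are exactly what the citations are carrying.
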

\begin{remark}
From our results (c.f. Theorem~\ref{th::asymptotic non-binary disjoint repair group code}), it follows  that given $n$ and $k=n^\epsilon$, with $0<\epsilon\le 1-\frac{1}{m}$, the redundancy of non-binary $k$-PIR codes based on $m$-variate lifted multiplicity codes is $O(n^{\delta_{LM}(\epsilon,m)})$, where
  \begin{equation}\label{eq::our redundancy for PIR}
\delta_{LM}(\epsilon,m):=  \frac{m-1}{m} + \frac{1+\log \lambda_m - m}{m-1}\,\epsilon.
 \end{equation}
We remark that for $m=2$, the same $\delta_{LM}(\epsilon,m)$ was first derived in~\cite{li2019lifted} and gives the best estimate on $r(n,n^{\epsilon})$ with $0<\epsilon\le \frac{1}{2}$.
For further comparison, we provide the relevant results for the best known families of non-binary PIR codes in the same form. For $0\le\epsilon\le 1-\frac{1}{m}$, the required redundancy of $n^\epsilon$-PIR codes based on %
$m$-variate multiplicity codes (c.f.~\cite{asi2018nearly}) and $m$-variate lifted RS codes is $O(n^{\delta_{M}(\epsilon,m)})$, and $O(n^{\delta_{LRS}(\epsilon,m)})$, respectively, where  $\delta_M(\epsilon,m) := \frac{m-1}{m} + \frac{1}{m-1}\epsilon$ and $\delta_{LRS}(\epsilon,m):=\delta_{LM}(\frac{m-1}{m},m)$. Clearly, $\delta_{LM}(\epsilon,m)<\delta_M(\epsilon,m)$ for all $0<\epsilon\le 1-\frac{1}{m}$ as $\log \lambda_m-m<0$ (c.f.~\eqref{eq::bounds on top eigenvalue}).

Let us illustrate the improvement compared to Lemma~\ref{lem::known1} and consider the case of $m=3$ and $\frac{1}{2} < \epsilon \le \frac{2}{3}$. Then, we have $\delta_{LM}(\epsilon,3) = \frac{2}{3} + 0.4276 \epsilon$, $\delta_M(\epsilon,3) = \frac{2}{3} + \frac{1}{2}\epsilon$ and $\delta_{LRS}(\epsilon,3)=0.9517$. The latter is given in item \ref{item:LRS} of Lemma~\ref{lem::known1}. The proposed bound  $\delta_{LM}(\epsilon,3)$ coincides with $\delta_{LRS}(\epsilon,3)$ for $\epsilon = \frac{2}{3}$ and outperforms all known bounds for $\frac{1}{2}<\epsilon<\frac{2}{3}$.%

In Figure~\ref{fig::PIR}, we compare our results for non-binary PIR codes based on the most suitable $m$-variate lifted multiplicity codes to the known results summarized in Lemma~\ref{lem::known1}. Table~\ref{tab:comparisonNonBinaryPIR} in Appendix~\ref{app:comparison} gives the ranges in which each bound is best among all known results. It can be  verified that for any $\epsilon\in (\frac{1}{2},1)\setminus \{\frac{2}{3}\}$, our bounds based on lifted multiplicity codes improve the state-of-art results.
\end{remark}
\begin{lemma}\label{lem::known2}
The redundancy of binary PIR codes satisfies:
\begin{enumerate}
\item  \emph{Steiner system~\cite{fazeli2015pir,rao2016lower,wootters2016linear}:}\\ $\rP(n,k)\!=\! \Theta_k(\sqrt n)$ for linear PIR codes with fixed $k\!\ge\! 3$.
\item  \emph{Lifted RS codes~\cite{guo2013new}:}\\ $\rP(n,n^{1-1/m})\!=\!O(n^{1+\log \left(1-2^{-m\lceil \log m\rceil}\right)/(m\lceil \log m\rceil)}\log n)$ for an integer $m\ge 2$. \label{item:binPIRLRS}
\item  \emph{Array and one-step majority logic dec. codes~\cite{LC04, asi2018nearly}:}\\ $\rP(n,n^{\epsilon})= O(n^{0.5+\epsilon})$ for $0\leq \epsilon <1/2 $. \label{item:binPIRasi}
\item \emph{Binary image of mult. codes~\cite{asi2018nearly}:}\\ $\rP(n,n^{\epsilon})= O(n^{\delta(\epsilon)})$ for $0\leq \epsilon <1 $, where $\delta(\epsilon) =   \min\limits_{m\ge \lceil 1/(1-\epsilon)\rceil}\left\{1-\frac{m(1-\epsilon)-1}{2m(m-1)}\right\}$. \label{item:binPIRmult}
\item \emph{Binary image of lifted mult. codes with $m=2$~\cite{li2019lifted}:}\\ $\rP(n,n^{\epsilon}) = O(n^{\frac{3}{4} + \epsilon (\log 3 - \frac{3}{2})})$ for $0\leq \epsilon < \frac{1}{2}$. \label{item:binPIRLMC2}
\item \emph{Binary image of lifted RS codes with $m=3$~\cite{polyanskii2019lifted}:}\\ $\rP(n,n^{2/3})= O(n^{\log_8(5+\sqrt{5})}\log n)$. \label{item:binPIRLRS3}
\item \emph{Wedge-lifted codes~\cite{hastings2020wedge}:}\\  $\rP(n,n^{1/(2a)})=O(n^{0.5+\log(2-2^{-a})/(2a)})$ for integers $a\ge 1$. \label{item:binPIRwedge}
\end{enumerate}
\end{lemma}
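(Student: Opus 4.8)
This lemma collects known constructions of binary PIR codes, so the proof I would give is essentially a dispatch to the cited references, the only genuine work being two standard reductions. The first, already recalled right before Lemma~\ref{lem::known1}, is that a systematically encoded $\F_q$-linear code with the disjoint repair group property (DRGP) and $k$ disjoint repair groups per coordinate is a $k$-PIR code with the same length and dimension; the second is a ``bit-expansion'' that turns a $q$-ary DRGP/PIR construction into a binary one. With these in hand, items~1, 3, and~7 are read off directly: item~1 is the Steiner-system construction of~\cite{fazeli2015pir} together with the matching lower bounds of~\cite{rao2016lower,wootters2016linear}; item~3 is the array-code / one-step majority-logic-decodable construction of~\cite{LC04,asi2018nearly}; item~7 is the wedge-lifted construction of~\cite{hastings2020wedge}.

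For the remaining (binary-image) items I would first set up the \emph{bit-expansion reduction}. Let $q=2^{\ell}$ and let $\C$ be an $\F_q$-linear $k$-DRGP code of length $N$ and dimension $n$. Fixing an $\F_2$-basis of $\F_q$ and replacing every coordinate by the length-$\ell$ block of bits representing it yields an $\F_2$-linear code $\C'$ of length $\ell N$ and dimension $\ell n$. If a coordinate $c_i$ of $\C$ is recovered from a set $R\subseteq[N]\setminus\{i\}$ by an $\F_q$-linear function, then each of the $\ell$ bits of the $i$-th block is an $\F_2$-linear function of the $\ell|R|$ bits lying in the blocks indexed by $R$; since the $k$ repair groups of $c_i$ are pairwise disjoint subsets of $[N]\setminus\{i\}$, their bit-expansions are pairwise disjoint subsets of $[\ell N]$, each disjoint from the $i$-th block. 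Hence $\C'$ is a binary $k$-DRGP code whose redundancy is $\ell(N-n)$, i.e.\ the $q$-ary redundancy multiplied by $\ell=\log q$. I would then apply this: to the $m$-variate lifted RS code of~\cite{guo2013new} (for which $N=q^m$, hence $\ell=\frac{1}{m}\log n=\Theta(\log n)$), giving item~2; to the tri-variate lifted RS code of~\cite{polyanskii2019lifted}, giving item~6; to the bi-variate lifted multiplicity code of~\cite{li2019lifted}, giving item~5; and to the order-$s$ $m$-variate multiplicity codes analyzed in~\cite{asi2018nearly}, minimizing the resulting exponent over the admissible $m\ge\lceil 1/(1-\epsilon)\rceil$, giving item~4.

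The only step that requires any care is the bit-expansion reduction, and there the delicate point is purely the bookkeeping of repair groups: one must check that passing to blocks of bits preserves both the pairwise disjointness of the $k$ repair groups and the property that none of them meets the block being recovered, which is exactly where $\F_q$-linearity of $\C$ and the fact that its repair groups are disjoint subsets of the coordinate index set are used. Once this is established, every redundancy exponent in the statement is obtained mechanically by substituting the rate and relative-distance estimates of the parent code (from the cited works, or for lifted RS and lifted multiplicity codes from Sections~\ref{ss::lifted RS codes} and~\ref{ss::lifted mult codes}) into $N-n$ and then into $\ell(N-n)$, and simplifying.
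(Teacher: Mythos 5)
Your proposal is correct and follows the same route the paper takes: Lemma~\ref{lem::known2} is a survey of prior constructions proved by citation, with the only new content being the bit-expansion reduction from $q$-ary to binary PIR codes, which the paper spells out in the remark immediately following the lemma exactly as you do (each $\F_q$-symbol becomes $\log q = \Theta(\log n)$ bits, disjoint $q$-ary repair groups map to disjoint binary ones, giving the extra $\log n$ factor). Your elaboration of why the DRGP property and $\F_q$-linearity are what make the bit-expansion preserve disjointness of repair groups is a slightly more detailed write-up of the same observation, not a different argument.
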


\begin{remark}
The codes constructed in \cite{FGW17,holzbaur2020lifted,guo2013new,polyanskii2019lifted} are $q$-ary codes of length $N=q^m$. To obtain a binary PIR code each symbol can be converted to $\log q = \log N^{\frac{1}{m}} = \frac{1}{m} \log N = \Theta (\log n)$ symbols, hence the additional factor of $\log(n)$ in Lemma~\ref{lem::known2} compared to Lemma~\ref{lem::known1}. Clearly, the image of every recovery set of a $q$-ary symbol is also a recovery set for bit of the image of this symbol, so the number of mutually disjoint recovering sets is at least as large as in for the non-binary code. We provide the relevant results for the best known families of binary PIR codes in the same form. For $0\le\epsilon\le (m-1)/m$, the required redundancy of binary $n^\epsilon$-PIR codes based on $m$-variate lifted multiplicity codes (cf. Theorem~\ref{th::binary disjoint repair group property code}), $m$-variate multiplicity codes, and $m$-variate lifted RS codes is $O(n^{\delta'_{LM}(\epsilon)+o(1)})$, $O(n^{\delta'_{M}(\epsilon)+o(1)})$, and $O(n^{\delta'_{LRS}(\epsilon)+o(1)})$, respectively, where $\delta'_{LM}(\epsilon,m) := \frac{2m-1}{2m} + \frac{1+2\log \lambda_m - 2m}{2m-2}\epsilon$, $\delta'_M(\epsilon,m) := \frac{2m-1}{2m} + \frac{1}{2m-2}\epsilon$ and $\delta'_{LRS}(\epsilon,m):=\delta'_{LM}(\frac{m-1}{m},m)$. Therefore, computing the bounds for small $m$ and employing the inequality~\eqref{eq::bounds on top eigenvalue} for large $m$, we can range these three families of binary $n^\epsilon$-PIR codes with  $\epsilon>2/3$ as follows
$$
\min_{m\ge \lceil \frac{1}{1-\epsilon}\rceil} \! \delta'_{LM}(\epsilon,m)\!<\!\! \min_{m\ge \lceil \frac{1}{1-\epsilon}\rceil}\! \delta'_{M}(\epsilon,m) \!<\!\! \min_{m\ge \lceil \frac{1}{1-\epsilon}\rceil}\!\delta'_{LRS}(\epsilon,m).
$$
We remark that for $m=2$ the same $\delta_{LM}(\epsilon,m)$ was first derived in~\cite{li2019lifted}.

The binary image of lifted multiplicity codes requires the minimal redundancy among the best binary PIR codes, as given in Lemma~\ref{lem::known2}, in the range %
 $\epsilon \in (0.273,1)$. Our bounds provide a strict improvement for $\epsilon \in (\frac{1}{2},1) \setminus \{2/3\}$. A more detailed comparison to the known constructions is given in Table~\ref{tab:comparisonBinaryPIR} in Appendix~\ref{app:comparison}.
\end{remark}
 On the other hand, there is no lower bound on the redundancy of PIR codes other than that for $k\ge 3$ the redundancy of linear PIR codes of dimension $n$ is $\Omega(\sqrt{n})$~\cite{rao2016lower,wootters2016linear}.

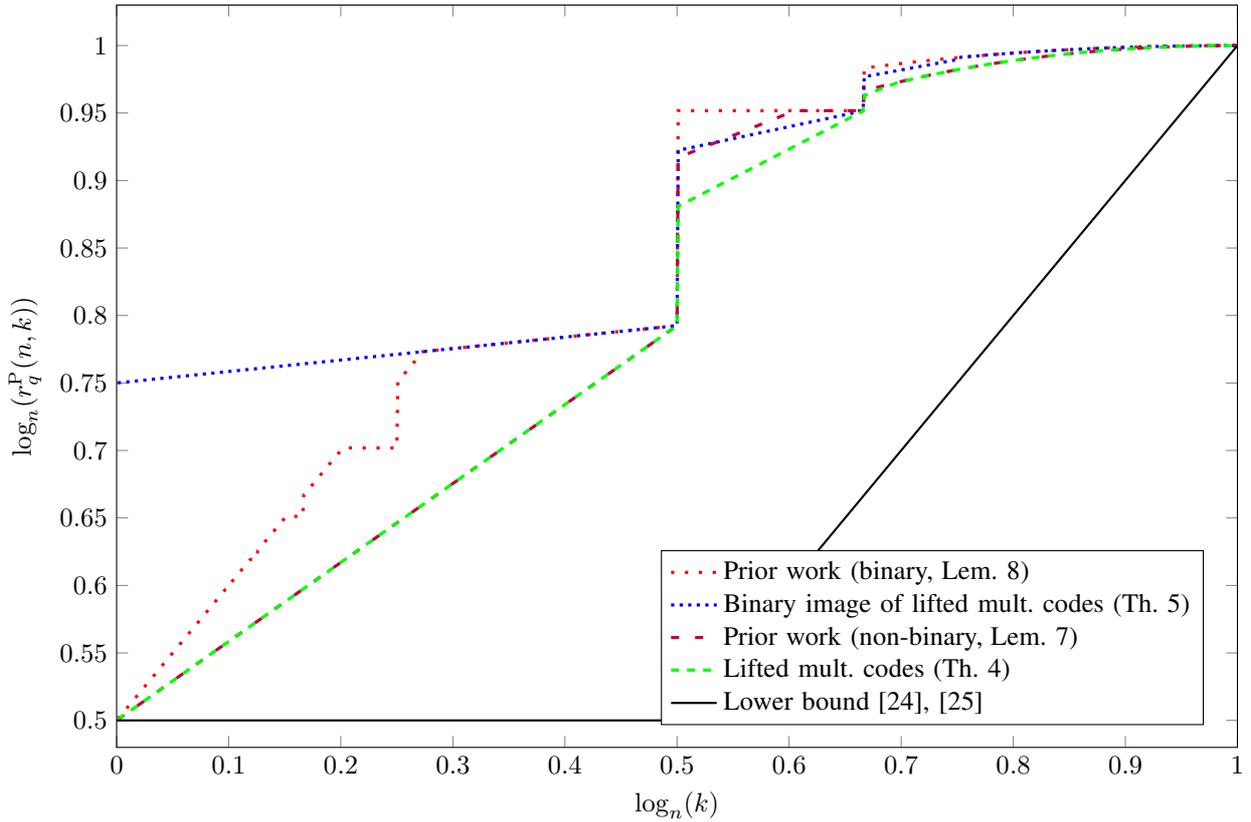
\begin{figure*}[t]
  \centering

\begin{tikzpicture}[thick,scale=1]
	\pgfplotsset{compat = 1.3}
	\begin{axis}[
		legend cell align={left},
		width = 0.9\textwidth,
		height = 0.63\textwidth,
		xlabel = {$\log_n(k)$},
		ylabel = {$\log_n(\rP_q(n,k))$},
		xmin = 0,
		xmax = 1,
		ymin = 0.48,
		ymax = 1.03,
		legend pos = south east]
		
		\addplot [ very thick, color=red,loosely dotted, mark=none] table[x=eps,y=bin] {upperBoundPIRFile.txt};
		\addlegendentry{Prior work (binary, Lem.~\ref{lem::known2})}%

		\addplot[color= blue, mark=none,dotted,very thick] table[x=eps,y=binNew] {upperBoundPIRFile.txt};
		\addlegendentry{Binary image of lifted mult. codes (Th.~\ref{th::binary disjoint repair group property code})}

		\addplot[color=purple, mark=none,loosely dashed,very thick] table[x=eps,y=qary] {upperBoundPIRFile.txt};
		\addlegendentry{Prior work (non-binary, Lem.~\ref{lem::known1})}%

		\addplot[color=green, mark=none,dashed, very thick] table[x=eps,y=qaryNew] {upperBoundPIRFile.txt};
		\addlegendentry{Lifted mult. codes (Th.~\ref{th::asymptotic non-binary disjoint repair group code})}
		
		\addplot[color=black,mark =none,thick] coordinates { (0,0.5) (0.5,0.5) (1,1) };
		\addlegendentry{Lower bound~\cite{wootters2016linear,rao2016lower}}
	\end{axis}
\end{tikzpicture}
  \caption{Comparison of parameters of binary and non-binary PIR codes based on lifted multiplicity codes to the upper and lower bounds on the minimal redundancy of \cite{asi2018nearly,li2019lifted,wootters2016linear,rao2016lower,FGW17,polyanskii2019lifted}. For $\log_n(k)\leq 0.5$ the results of Theorem~\ref{th::asymptotic non-binary disjoint repair group code} and Theorem~\ref{th::binary disjoint repair group property code} recover the results from \cite{li2019lifted}.}
  \label{fig::PIR}
\end{figure*}

\subsection{PIR and DRGP codes from lifted multiplicity codes}\label{ss:: PIR codes from LMC}
In this section, we apply our results on lifted multiplicity codes established in Section~\ref{ss::lifted mult codes} to PIR codes and codes with the disjoint repair group property.
Our results improve the constructions of these codes based on ordinary multiplicity codes~\cite{asi2018nearly}.
First, recall that a systematically encoded linear code with the DRGP property directly gives a PIR code, i.e., for linear codes the DRGP property is strictly stronger than that of PIR codes. The linear codes constructed from lifted multiplicity codes in the following have the DGRP property, but as the focus here are PIR codes, we state the results for this code class.

First, let us recall a known result for recovering the evaluation $f^{(<s)}(\w_0)$ for an arbitrary polynomial.

\begin{lemma}[Follows from~{\cite[Theorem~14]{asi2018nearly}}]\label{lem::good direction and multivariate polynomial}
	 Let $f(\X)\in\F_q[\X]$ and a line $L$ be parameterized as $\w_0+T\v$. Define $g_\v(T):=f|_L=f(\w_0+T\v)$. Let a family of sets $Q_2, \ldots, Q_{m}$, $Q_i\subset \F_q$, $|Q_i|=s$, be given. If for all directions of the form $\v=(1, v_2, \ldots, v_{m})$, $v_i\in Q_i$, and all $0\leq j< s$, values $g_{\v}^{(j)}(0)$ are known,  then it is possible to reconstruct $f^{(<s)}(\w_0)$.
\end{lemma}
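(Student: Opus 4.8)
The plan is to reduce the reconstruction of the full jet $f^{(<s)}(\w_0)$ to a collection of one-dimensional interpolation problems, one for each "slot" of the derivative vector, organized by the total degree of the Hasse-derivative multi-index. First I would expand $f$ around $\w_0$: writing $f(\w_0 + \Y) = \sum_{\i \in \Z_{\ge}^m} f^{(\i)}(\w_0)\, \Y^\i$, the unknowns are exactly the finitely many coefficients $f^{(\i)}(\w_0)$ with $\deg(\i) < s$. Restricting to the line in direction $\v = (1, v_2,\ldots,v_m)$ and reading off the Hasse derivatives of $g_\v$ at $T=0$ gives, as in the proof of Theorem~\ref{th:: code rate of lifted multiplicity codes}, the identity
\begin{align*}
g_\v^{(k)}(0) = \sum_{\i :\ \deg(\i) = k} f^{(\i)}(\w_0)\, \v^{\i}
\end{align*}
for each $k$ with $0 \le k < s$. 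So for a fixed $k$, the quantity $g_\v^{(k)}(0)$, viewed as a function of $(v_2,\ldots,v_m)$, is the evaluation at $(v_2,\ldots,v_m)$ of the $m-1$-variate polynomial $P_k(Y_2,\ldots,Y_m) := \sum_{\i:\ \deg(\i)=k} f^{(\i)}(\w_0)\, Y_2^{i_2}\cdots Y_m^{i_m}$, whose individual degree in each $Y_j$ is at most $k \le s-1$.

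The key step is then a multivariate interpolation argument: a polynomial in $m-1$ variables of individual degree $\le s-1$ in each variable is uniquely determined by its evaluations on a grid $Q_2 \times \cdots \times Q_m$ with $|Q_j| = s$, since such a grid has $s^{m-1}$ points and the monomials $Y_2^{a_2}\cdots Y_m^{a_m}$ with $0\le a_j \le s-1$ form a basis — this is the standard fact that the multivariate Vandermonde/tensor-product interpolation matrix over distinct nodes is invertible (it factors as a Kronecker product of invertible univariate Vandermonde matrices). Hence from the data $\{g_\v^{(k)}(0) : v_j \in Q_j\}$ we recover all coefficients of $P_k$, i.e.\ all $f^{(\i)}(\w_0)$ with $\deg(\i) = k$. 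Doing this for every $k \in \{0,1,\ldots,s-1\}$ recovers the entire vector $f^{(<s)}(\w_0)$, which is what we want.

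The one subtlety to address carefully — and the part I'd expect to require the most care rather than cleverness — is that the coefficient $f^{(\i)}(\w_0)$ with $\deg(\i)=k$ genuinely has each component $i_j \le k \le s-1$, so that $P_k$ really does lie in the span of the grid-interpolable monomials and the grid size $s$ is exactly sufficient; this is immediate since $\deg(\i)=\sum_j i_j = k$ forces $i_j \le k$. One should also note that the first coordinate of $\v$ being fixed to $1$ is harmless: the monomial $\v^\i$ contributes a factor $1^{i_1}$, and $i_1$ is determined by $k$ and $(i_2,\ldots,i_m)$ via $i_1 = k - (i_2+\cdots+i_m)$, so no information about the $X_1$-direction is lost. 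Since the relevant fact is already available as \cite[Theorem~14]{asi2018nearly}, the proof in the paper may simply invoke it; the sketch above is the route I would take to verify it from scratch.
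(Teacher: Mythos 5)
The paper does not prove this lemma itself---it defers entirely to \cite[Theorem~14]{asi2018nearly}---so there is no in-paper proof to compare against. Your argument is a correct standalone derivation, and it follows what one would expect the cited theorem to do: use the homogeneous-degree decomposition
\begin{align*}
g_\v^{(k)}(0)=\sum_{\i:\,\deg(\i)=k} f^{(\i)}(\w_0)\,\v^{\i},
\end{align*}
observe that (with $v_1=1$) the map $\i\mapsto(i_2,\ldots,i_m)$ is injective on $\{\i:\deg(\i)=k\}$ because $i_1$ is determined by $k$ and the rest, note $i_j\le k\le s-1$ so the resulting $(m-1)$-variate polynomial $P_k$ has individual degree at most $s-1$, and then invoke invertibility of the tensor-product (Kronecker) Vandermonde matrix on the grid $Q_2\times\cdots\times Q_m$ to solve for the coefficients of $P_k$, i.e.\ for all $f^{(\i)}(\w_0)$ with $\deg(\i)=k$; iterating over $k=0,\ldots,s-1$ recovers $f^{(<s)}(\w_0)$. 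All steps are correct, and the caveat you flag (that each $i_j\le k$ so the grid size $s$ suffices) is exactly the right thing to check. No gaps.
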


Next we prove that lifted multiplicity codes satisfy the definition of $k$-PIR codes for appropriate $k$.
\begin{theorem}[Lifted multiplicity codes are PIR codes]\label{th:: nonbinary disjoint repair group code}
  Let $q$ and $s$ be powers of two  and $m\ge 2$ be an integer such that $m\le s \le q$. The $[m,s,qs-s,q]$ lifted multiplicity code is a $k$-PIR code for $k=(q/s)^{m-1}$.
\end{theorem}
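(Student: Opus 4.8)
The plan is to exhibit, for each information coordinate of the lifted multiplicity code, a family of $k=(q/s)^{m-1}$ mutually disjoint recovering sets. Since the code is $\F_q$-linear and (by the remark after Proposition~\ref{prop::code cardinality}) the subcode spanned by the evaluations of $(qs-s,s)^*$-good monomials can be encoded systematically on the good-monomial coordinates, it suffices to show that the value $f^{(<s)}(\w_0)$ at an arbitrary point $\w_0\in\F_q^m$ can be reconstructed from $(q/s)^{m-1}$ disjoint sets of other codeword coordinates. So I would fix $\w_0$ and argue about reconstructing $f^{(<s)}(\w_0)$ for an arbitrary codeword-producing polynomial $f$.

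First I would partition, for each $i\in\{2,\dots,m\}$, the field $\F_q$ into $q/s$ blocks $Q_i^{(1)},\dots,Q_i^{(q/s)}$ of size $s$ each (this uses $s\mid q$, which holds as both are powers of two with $s\le q$). Indexing recovering sets by tuples $(a_2,\dots,a_m)\in[q/s]^{m-1}$ — there are exactly $(q/s)^{m-1}$ of them — I would let the $(a_2,\dots,a_m)$-th recovering set consist of all coordinates $\w_0+T\v$ with $T\in\F_q^*$ and direction $\v=(1,v_2,\dots,v_m)$ ranging over $v_i\in Q_i^{(a_i)}$. The key point is that distinct tuples give disjoint blocks in at least one coordinate, so the corresponding sets of directions are disjoint, and since every point on such a punctured line together with $\w_0$ determines the direction uniquely (the first coordinate of $\v$ is normalized to $1$), the recovering sets themselves are mutually disjoint and avoid $\w_0$.

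Next I would show each such set is genuinely a recovering set for $f^{(<s)}(\w_0)$. For a fixed direction $\v=(1,v_2,\dots,v_m)$, restricting $f$ to the line through $\w_0$ with direction $\v$ gives $g_\v(T):=f|_L$, which by the definition of the $[m,s,qs-s,q]$ lifted multiplicity code is equivalent up to order $s$ to a univariate polynomial of degree less than $qs-s$. By Proposition~\ref{pr::reducing the power} that reduced polynomial is determined by its $<s$-order evaluations at the $q-1$ points $T\in\F_q^*$ — a Reed–Solomon-type interpolation with $s(q-1)>qs-s$ — so from the codeword symbols at $\w_0+T\v$, $T\in\F_q^*$, we recover all the $g_\v^{(j)}(0)$, $0\le j<s$. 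Having these for every $\v$ with $v_i\in Q_i^{(a_i)}$, Lemma~\ref{lem::good direction and multivariate polynomial} lets us reconstruct $f^{(<s)}(\w_0)$. Thus each of the $(q/s)^{m-1}$ sets recovers the desired symbol, establishing the $k$-PIR property with $k=(q/s)^{m-1}$.

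The main obstacle I anticipate is the bookkeeping around disjointness and the systematic-encoding reduction: one must be careful that the coordinate $\w_0$ itself is excluded from all recovering sets (handled by taking $T\in\F_q^*$), that a point $\w_0+T\v$ on one line cannot coincide with $\w_0+T'\v'$ on another when $\v\neq\v'$ (handled by the normalization of the leading component of the direction), and that the lifted multiplicity code admits a systematic encoding on a set of coordinates of size equal to the number of good monomials so that ``every information symbol'' can be read as ``every evaluation point'' — this last point follows from Proposition~\ref{prop::code cardinality} and the linearity of the span of good monomials, but should be stated explicitly. The interpolation step and the application of Lemma~\ref{lem::good direction and multivariate polynomial} are then routine.
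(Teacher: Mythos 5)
Your proposal is correct and follows essentially the same approach as the paper: partition $\F_q$ into $q/s$ blocks of size $s$ in each of coordinates $2,\dots,m$ of the normalized direction $\v=(1,v_2,\dots,v_m)$, recover $g_\v^{(<s)}(0)$ by Hermite interpolation from the $q-1$ punctured-line positions, and invoke Lemma~\ref{lem::good direction and multivariate polynomial} to recombine these into $f^{(<s)}(\w_0)$. One minor arithmetic slip: the interpolation count is $s(q-1)=qs-s$, which equals (not exceeds) the number of coefficients of a degree-$<qs-s$ polynomial; this equality is still sufficient for the Hermite interpolation to succeed, so the argument stands.
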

\begin{proof}%
For any line $L$ parameterized by $\w_0+T\v$ and a polynomial $f$ producing a codeword of the $[m,s,qs-s,q]$ lifted multiplicity code, the polynomial $g_\v(T):=f|_L$ is equivalent up to order $s$ to a univariate polynomial $h(T)$ of degree at most $qs-s-1$. By reading $g_\v^{(j)}(t)$ for all $0\le j<s$, $t\in\F_q\setminus\{0\}$, we can reconstruct polynomial $h(T)$ in $O(qs\log(qs))$ time (cf.~\cite{chin1976generalized}) and get the values $h^{(j)}(0)=g_\v^{(j)}(0)$ for all $0\le j<s$.

For an integer $i\in [q/s]$, let $Q_i$ be a subset of $\F_q$ of size $s$ so that $Q_i\cap Q_j=\emptyset$ for $j\neq i$. Let us index codeword symbols by elements of $\F_q^m$, i.e., $(c_1,\ldots,c_{q^m})=(c_{\a})|_{\a\in \F_q^m}$, where $c_{\a}:=f^{(<s)}(\a)$. Fix an arbitrary vector $(i_2,\ldots, i_{m})\in [q/s]^{m-1}$. By Lemma~\ref{lem::good direction and multivariate polynomial}, for $\w_0\in\F_q^m$, a possible recovering set for $c_{\w_0}$ is simply
\begin{align*}
	\left\{\w_0+\v t:\, t\in \F_q\setminus\{0\},\,v_1=1,\,v_j\in Q_{i_j}\text{ for }j\in[m]\setminus\{1\}\right\}.
\end{align*}
Thus, for $c_{\w_0}$, we can construct at least $(q/s)^{m-1}$ mutually disjoint recovering sets.
\end{proof}
\begin{theorem}[Non-binary PIR codes]\label{th::asymptotic non-binary disjoint repair group code}
  Given an integer $m\ge 2$, for any real $\varepsilon$ with $0<\varepsilon <\frac{m-1}{m}$ and a power of two $q$, there exists an  $n^\varepsilon$-PIR code of length $N=q^m$ and dimension $n$ over $\Sigma$
  such that the redundancy, $N-n$,   and the alphabet size, $|\Sigma|$, satisfy
  \begin{align*}
    N-n &= O_m\left(n^{\frac{m-1}{m}+\frac{(1+\log \lambda_m - m)}{m-1}\epsilon}\right),\\
    |\Sigma|&=	q^{\Theta_m(q^{m-\epsilon m^2/(m-1)})}.
  \end{align*}
  In other words, for $0<\epsilon<1$, the polynomial growth of the minimal redundancy of $n^\epsilon$-PIR codes with dimension $n$ is
  \begin{align*}
    \log_n\!\left(r_{|\Sigma|}(n,n^\epsilon)\right)\!\le\! \min_{m \ge \lceil 1/ (1-\epsilon) \rceil}\!\left( \frac{m\!-\!1}{m} \!+\! \frac{1\!+\!\log \lambda_m\! -\! m}{m-1}\epsilon \right).
  \end{align*}
\end{theorem}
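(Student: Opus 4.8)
### Proof plan

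The plan is to realize the claimed $n^\varepsilon$-PIR code as a systematic encoding of an $[m,s,qs-s,q]$ lifted multiplicity code, with the order of derivatives $s$ chosen as a suitable power of two depending on $q$ (and hence on $n$), and then to read off the three claimed estimates — redundancy, alphabet size, and the resulting polynomial growth rate of $r_{|\Sigma|}(n,n^\varepsilon)$ — from Theorems~\ref{th:: code rate of lifted multiplicity codes} and~\ref{th:: nonbinary disjoint repair group code}. Concretely, I would first fix the block length $N=q^m$ and set $s=s(q)$ so that the number of disjoint recovering sets $k=(q/s)^{m-1}$ guaranteed by Theorem~\ref{th:: nonbinary disjoint repair group code} is at least $n^\varepsilon$; since $n=(1-o(1))N=(1-o(1))q^m$, the requirement $k\ge n^\varepsilon$ translates (up to lower-order terms) into $(q/s)^{m-1}\ge q^{\varepsilon m}$, i.e. $s\le q^{1-\varepsilon m/(m-1)}$. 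The condition $0<\varepsilon<\frac{m-1}{m}$ is exactly what makes the exponent $1-\varepsilon m/(m-1)$ positive, so such an $s$ with $m\le s\le q$ exists for $q$ large; I would take $s$ to be the largest power of two not exceeding $q^{1-\varepsilon m/(m-1)}$, so that $s=\Theta_m(q^{1-\varepsilon m/(m-1)})$.

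Next I would compute the redundancy. By Theorem~\ref{th:: code rate of lifted multiplicity codes} with $r=s$, the rate of the $[m,s,qs-s,q]$ lifted multiplicity code is $1-O_m(s^{-1}(q/s)^{\log\lambda_m-m})$ (taking $r=s$ in the formula $1-O_m(s^{-1}(q/r)^{\log\lambda_m-m})$). Hence the redundancy in $q$-ary symbols, i.e. over the evaluation alphabet, is
\begin{align*}
N-n \;=\; O_m\!\left(q^m\cdot s^{-1}(q/s)^{\log\lambda_m-m}\right)\;=\;O_m\!\left(q^{\log\lambda_m}\,s^{\,m-1-\log\lambda_m}\right).
\end{align*}
Substituting $s=\Theta_m(q^{1-\varepsilon m/(m-1)})$ and then expressing the result in terms of $n=\Theta_m(q^m)$ (so $q=\Theta_m(n^{1/m})$) gives, after collecting exponents, $N-n=O_m\big(n^{\frac{m-1}{m}+\frac{1+\log\lambda_m-m}{m-1}\varepsilon}\big)$; this is a routine but slightly fiddly exponent bookkeeping that I would carry out carefully. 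The alphabet size is $|\Sigma|=q^{\binom{s+m-1}{m}}=q^{\Theta_m(s^m)}=q^{\Theta_m(q^{m-\varepsilon m^2/(m-1)})}$, using $s=\Theta_m(q^{1-\varepsilon m/(m-1)})$.

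For the PIR property itself, Theorem~\ref{th:: nonbinary disjoint repair group code} gives that the code is a $k$-PIR code with $k=(q/s)^{m-1}\ge n^\varepsilon$ by the choice of $s$, after systematic encoding (which is possible since the code is $\F_q$-linear, and a systematic linear code with $k$ disjoint recovering sets for every coordinate is in particular a $k$-PIR code). This establishes the first two displays. For the last display, I would observe that the exponent $\delta_{LM}(\varepsilon,m)=\frac{m-1}{m}+\frac{1+\log\lambda_m-m}{m-1}\varepsilon$ is valid whenever $0<\varepsilon<\frac{m-1}{m}$, and that for a given target $\varepsilon\in(0,1)$ we may optimize over all admissible $m$, i.e. all $m$ with $\frac{m-1}{m}>\varepsilon$, equivalently $m\ge\lceil 1/(1-\varepsilon)\rceil$; taking the minimum of $\delta_{LM}(\varepsilon,m)$ over this range yields the stated bound on $\log_n(r_{|\Sigma|}(n,n^\varepsilon))$, with the $o(1)$ absorbed by letting $q\to\infty$ along powers of two.

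The main obstacle is not any single deep step — all the heavy lifting has already been done in Theorems~\ref{th:: code rate of lifted multiplicity codes} and~\ref{th:: nonbinary disjoint repair group code} — but rather handling the discretization constraints cleanly: $s$ must be a power of two with $m\le s\le q$, $q$ ranges only over powers of two, and $n$ is only asymptotically equal to $q^m$ (it is $q^m$ minus the redundancy). I would therefore be careful to state the conclusions with the $O_m(\cdot)$ and $o(1)$ slack that accommodates rounding $q^{1-\varepsilon m/(m-1)}$ down to a power of two and rounding $n$ to $N$, and to check that these roundings perturb the exponents only by $o(1)$, which they do because $s$ and $q$ enter the bounds polynomially and the rounding changes each by at most a multiplicative constant (for $s$) or is exact (for $q$, which is chosen to be a power of two from the start).
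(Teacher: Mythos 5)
Your proposal is correct and follows essentially the same route as the paper's proof: choose $s=\Theta_m\bigl(q^{1-\varepsilon m/(m-1)}\bigr)$ (a power of two), invoke Theorem~\ref{th:: nonbinary disjoint repair group code} for the $k=(q/s)^{m-1}$ disjoint recovering sets, and plug $r=s$ into the rate bound of Theorem~\ref{th:: code rate of lifted multiplicity codes} to read off the redundancy and alphabet size, then optimize over admissible $m$. The only difference is that you spell out the discretization/rounding bookkeeping, which the paper elides with ``for simplicity of notation, we assume $s$ is a power of two.''
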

\begin{proof}%
Take $s=\Theta_m(q^{1 - \epsilon m/(m-1)})$. For simplicity of notation, we assume that $s$ is a power of two. By Theorem~\ref{th:: nonbinary disjoint repair group code}, there exists a $k$-PIR code with $k=(q/s)^{m-1}=\Theta_m(N^\varepsilon) = \Theta_m(n^{\epsilon})$ over $\F_q^{\binom{s+m-1}{m}}$ of length $N=q^m$ and redundancy at most
\begin{align*}
  N-n&=O_m\left(q^m s^{-1}(q/s)^{\log\lambda_m - m}\right)\\
     &=O_m\left(q^{\epsilon \frac{m}{m-1}+(m-1)}q^{\epsilon \frac{m}{m-1} (\log\lambda_m - m)}\right)\\
     &=O_m\left(n^{\frac{m-1}{m}+\frac{1+\log \lambda_m - m)}{m-1}\epsilon}\right) .
\end{align*}
\end{proof}

We now transform the non-binary codes constructed in Theorem~\ref{th::asymptotic non-binary disjoint repair group code} into binary PIR codes.
\begin{theorem}[Binary PIR codes]\label{th::binary disjoint repair group property code}
  Given a positive integer $m$, for any real $\varepsilon$ with $0<  \varepsilon <\frac{m-1}{m}$ and an integer $n$ sufficiently large,
  there exists a binary $n^{\varepsilon}$-PIR code of length $N$ and dimension $n$
  such that the redundancy, $N-n$, satisfies
  \begin{align*}
    N-n = n^{\frac{2m-1}{2m} + \frac{1 + 2\log \lambda_m - 2m}{2m-2}\epsilon +o_m(1)}.
  \end{align*}
  In other words, for $0<\epsilon<1$, the polynomial growth of the minimal redundancy of binary $n^\epsilon$-PIR codes with dimension $n$ is
  \begin{align*}
    \log_n\left(r(n,n^\epsilon)\right)\!\le\! \min_{m \ge \lceil 1/ (1-\epsilon) \rceil}\!\left( \frac{2m\!-\!1}{2m}\! +\! \frac{1\!+\!2\log \lambda_m\! -\! 2m}{2m-2} \epsilon\! \right).
  \end{align*}
\end{theorem}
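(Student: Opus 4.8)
The plan is to build the binary code as the binary image of a systematically encoded subcode of the non-binary code behind Theorem~\ref{th::asymptotic non-binary disjoint repair group code}, and then to re-optimize the order of derivatives $s$ so that the number of disjoint recovering sets becomes the prescribed power of the \emph{binary} dimension. Concretely, I would take the $[m,s,qs-s,q]$ lifted multiplicity code, pass to the subcode spanned by the $(qs-s,s)^*$-good monomials (whose $\F_q$-dimension equals $|\mathcal{F}_q(m,s,qs-s)|$ by Proposition~\ref{prop::code cardinality}, and whose codewords form a subset of those of the full code), fix a systematic $\F_q$-linear encoding, and expand each symbol of $\F_q^{\binom{s+m-1}{m}}=\F_2^{\ell\binom{s+m-1}{m}}$ into its bits, $q=2^{\ell}$. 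This produces an $\F_2$-linear code of length $N=\ell\binom{s+m-1}{m}q^m$ and dimension $n=\ell\,|\mathcal{F}_q(m,s,qs-s)|$, whose redundancy is $N-n=\ell\cdot\big(\text{number of }(qs-s,s)^*\text{-bad monomials}\big)=\Theta_m\!\big(\ell\, s^{2m-1-\log\lambda_m}q^{\log\lambda_m}\big)$ by Lemma~\ref{lem:: number of bad (qs-r,s) monomials} with $r=s$. Since that count is $o(s^mq^m)$ for every admissible $s=q^{\Theta(1)}$, we also get $n=\Theta_m(\ell\, s^mq^m)$.

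Next I would transfer the recovering structure. By Theorem~\ref{th:: nonbinary disjoint repair group code}, every codeword symbol $c_{\w_0}=f^{(<s)}(\w_0)$ of the lifted multiplicity code has $(q/s)^{m-1}$ mutually disjoint recovering sets; since recovering the whole $\F_q^{\binom{s+m-1}{m}}$-symbol recovers each of its bits, the binary images of these sets are $(q/s)^{m-1}$ mutually disjoint recovering sets for every bit of $c_{\w_0}$, in particular for each of the $n$ information bits. Hence the binary code is a $k$-PIR code with $k=(q/s)^{m-1}$, and being $\F_2$-linear and systematic it is an $n^{\varepsilon}$-PIR code whenever $k\ge n^{\varepsilon}$.

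It remains to tune $s$. Writing $s=q^{\beta}$ (rounded to a power of two, the resulting perturbation of $\beta$ being $o(1)$ and absorbed below), one has $\log_q k=(1-\beta)(m-1)$ and $\log_q n=m(1+\beta)+o(1)$, so $\log_n k=\frac{(1-\beta)(m-1)}{m(1+\beta)}(1+o(1))$. Choosing $\beta=\frac{m-1-\varepsilon m}{m-1+\varepsilon m}\in(0,1)$ (valid because $0<\varepsilon<\frac{m-1}{m}$), so that $1+\beta=\frac{2(m-1)}{m-1+\varepsilon m}$, forces $k\ge n^{\varepsilon}$ for $n$ large. Substituting into $\log_n(N-n)=\frac{\beta(2m-1-\log\lambda_m)+\log\lambda_m}{m(1+\beta)}+o(1)$ and clearing the common factor $m-1+\varepsilon m$ gives $\frac{(m-1)(2m-1)+\varepsilon m(1+2\log\lambda_m-2m)}{2m(m-1)}+o(1)=\frac{2m-1}{2m}+\frac{1+2\log\lambda_m-2m}{2m-2}\varepsilon+o_m(1)$, which is the claimed exponent; taking the minimum over admissible integers $m$ (those with $m>1/(1-\varepsilon)$, i.e.\ $m\ge\lceil 1/(1-\varepsilon)\rceil$) yields the bound on $\log_n r(n,n^{\varepsilon})$.

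The one genuinely non-routine point is the bookkeeping around the alphabet blow-up $\binom{s+m-1}{m}$: passing to bits multiplies the dimension by a polynomial factor in $s$, so the exponent of $q$ in $n$ jumps from $m$ to $m(1+\beta)$ and $s$ must be optimized against this enlarged $n$ rather than against $q^m$ --- this is exactly what converts the non-binary exponent $\frac{m-1}{m}+\frac{1+\log\lambda_m-m}{m-1}\varepsilon$ into the binary one. Everything else is routine: verifying that the bad-monomial count is $o(s^mq^m)$ so that $n=\Theta_m(\ell s^mq^m)$, checking that the PIR property descends to the binary image, and absorbing the $\log\ell=\log\log q$ factors and the power-of-two rounding of $s$ into the $o_m(1)$ term.
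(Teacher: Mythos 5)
Your proposal is correct and takes essentially the same approach as the paper: convert the non-binary lifted-multiplicity PIR code to binary by bit-expansion, account for the fact that the alphabet blow-up $\binom{s+m-1}{m}$ inflates the binary dimension by a factor of $\Theta_m(s^m\log q)$, and re-optimize $s$ against that inflated dimension to hit the target availability exponent. The only cosmetic difference is parametrization: the paper fixes the non-binary exponent $\epsilon$, converts, and then solves for the induced binary exponent $\overline{\epsilon}$, whereas you set $\beta=\log_q s$ to target the binary exponent directly; one checks these are related by the same change of variables $\beta=\frac{m-1-\varepsilon m}{m-1+\varepsilon m}$, and the final arithmetic is identical.
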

\begin{proof}
  Let $\C$ be a non-binary PIR code as in Theorem~\ref{th::asymptotic non-binary disjoint repair group code}. We construct the binary PIR code $\overline{\C}$ from $\C$ by converting each symbol of the alphabet of size $|\Sigma|=q^{\Theta_m(q^{m - \epsilon m^2/(m-1)})}$ to
  \begin{align*}
    \log|\Sigma|&=\Theta_m\left(q^{m - \epsilon \frac{m^2}{m-1}}\log q\right)\\
                &=\Theta_m\left(N^{1-\epsilon \frac{m}{m-1}}\log N\right)\\
    &=\Theta_m\left(n^{1-\epsilon \frac{m}{m-1}}\log n\right)
  \end{align*}
  bits. Denote the length and the dimension of the binary code by $\overline{N}$ and $\overline{n}$, respectively. Note that for any $\delta>0$ and sufficiently large $n$, we have $\log n < n^{\delta}$. Thus, $\overline{n}= n^{2+o_m(1)-\epsilon m/(m-1)}$ and $\overline{N}=n^{2+o_m(1)-\epsilon m/(m-1)}$. Therefore, $n={\overline{n}}^{(m-1)/(2m-2-\epsilon m)+o_m(1)}$. Denote by $\overline{r} = \overline{N}-\overline{n}=(N-n)\log|\Sigma|$ the redundancy and by $\overline{k}$ the availability parameter of the new code.

  First, we note that the availability parameter of $\overline{\C}$ is at least that of $\C$. Indeed, we know that each bit in $\overline{\C}$ is a bit among $\log |\Sigma|$ bits representing some symbol in $\C$. For each recovering set of a symbol in $\C$, we get a corresponding recovering set for any bit from the image of this symbol in $\overline{\C}$. Therefore, $\overline{k}\ge k=n^\epsilon\ge {\overline{n}}^{\epsilon(m-1)/(2m-2-\epsilon m)+o_m(1)}$. Define $\overline{\epsilon}:= \epsilon(m-1)/(2m-2-\epsilon m)$. Then $\overline{k}=\overline{n}^{\overline{\epsilon}+o_m(1)}$ and $\epsilon = (2m-2)\overline{\epsilon}/(m-1+\overline{\epsilon} m)$

  Second, we rewrite the redundancy $\overline{r}$ in terms of $\overline{n}$ and $\overline{\epsilon}$ as
  \begin{align*}
    \overline{r}&=\overline{N}-\overline{n}\\
    &=n^{\frac{m-1}{m}+\frac{(\log \lambda_m - m + 1)\epsilon}{m-1}} n^{1-\frac{\epsilon m}{m-1}+o_m(1)} \\
                &=n^{\frac{2m-1}{m} + \frac{(\log \lambda_m - 2m + 1)\epsilon}{m-1}+o_m(1)} \\
                &=\overline{n}^{\frac{(m-1)(2m-1)}{2m^2 - 2m - 2\epsilon m^2} + \frac{(\log \lambda_m - 2m + 1)\epsilon}{2m-2-\epsilon m}+o_m(1)}  \\
                &=\overline{n}^{\frac{(m-1/2)}{m} + \frac{1/2 + \log \lambda_m - m}{m-1} \overline{\epsilon} +o_m(1)}.
  \end{align*}
\end{proof}

\section{Batch codes}\label{ss:: batch codes}

In this section we apply bounds on the rate of lifted RS codes from Section~\ref{ss::lifted RS codes} to obtain a new construction of batch codes with improved redundancy. Additionally, using an idea from~\cite{asi2018nearly}, we apply our results on PIR codes from Section~\ref{ss:: PIR codes} to obtain bounds on the redundancy of batch codes.

\subsection{Preliminaries and prior work}
By definition, PIR codes provide $k$ non-intersecting recovery sets for \emph{any single} information symbol. Batch codes generalize this property by requiring that \emph{any $k$-tuple} of information symbols (with repetition) can be recovered from non-intersecting subsets of codeword symbols.  Batch codes were originally motivated by different applications such as load-balancing in storage and cryptographic protocols~\cite{ishai2004batch}.  In this work, we consider a special notion of batch codes, namely \emph{primitive multiset batch codes}. For a more general study on the different notions of batch codes the reader is referred to \cite{skachek2018batch}. Formally, this class of codes is defined as follows.
\begin{defn}[Batch code,~%
  \cite{ishai2004batch}]\label{def::batch code}
  Let $F:\,\Sigma^n\to \Sigma^N$ be a map that encodes a string $x_1,\dots,x_n$ to $c_1,\dots, c_N$ and $\C$ be the image of $F$.
  The code $\C$ will be called a \textit{$k$-batch code} (or \textit{$[N,n,k]_{|\Sigma|}^{B}$ code}) over the alphabet $\Sigma$ if for every multiset of symbols $\{x_{i_1},\dots, x_{i_k} \},$ ${i_j}\in	[n]$, there exist $k$ mutually disjoint sets $R_{1},\dots , R_{k}\subset[N]$ (referred to as \textit{recovering sets}) and functions $g_1,\dots, g_k$ such that for all $\c\in\C$ and for all $j\in[k]$, $g_j(\c|_{R_{j}})=x_{i_j}$.
\end{defn}

 Several explicit and non-explicit constructions of these codes have been proposed, employing methods based on generalizations of Reed-Muller (RM) codes \cite{ishai2004batch,polyanskaya2020binary}, unbalanced expanders \cite{ishai2004batch}, graph theory \cite{rawat2016batch}, array and multiplicity codes \cite{asi2018nearly}, bi-variate lifted multiplicity codes and finite geometries~\cite{polyanskaya2020binary}. For large $k=\Omega(n)$, batch codes are closely related to constant-query \textit{locally correctable codes} and it is known~\cite{katz2000efficiency,woodruff2012quadratic} that their rate approaches zero. On the other hand, when $k=O(1)$ is fixed, there exist explicit batch code constructions with the code rate very close to one~\cite{vardy2016constructions}.

Denote by $\NB_q(n,k)$ the smallest $N$ such that there exists an $[N,n,k]_q^B$ code. Because of the above motivation, we classify batch codes by the required redundancy $\rB_q(n,k)\eqdef \NB_q(n,k)-n$. In this paper, we will be concerned with the regime of sublinear $k$, i.e., $k=n^\epsilon$ with ${n \to \infty}$ and $0\le\epsilon\le 1$. For $q=2$, we remove $q$ in the subsequent notations. Several achievability results, i.e., upper bounds on the smallest achievable $\rB_q(n,k)$, have been shown. We summarize the best presently known results that provide the smallest $\rB_q(n,n^\epsilon)$ for both binary and non-binary batch codes in the following statements. We note that the alphabet size of some constructions in Lemma~\ref{lem::knownBatch non-binary} is very large, e.g., $q=n^{\Omega(n)}$.
 \begin{lemma}\label{lem::knownBatch non-binary}
   The redundancy of non-binary batch codes satisfies:
 \begin{enumerate}
 \item \emph{Array construction~\cite{asi2018nearly,vardy2016constructions}:}\\ $\rB_q(n,k)= O_k (\sqrt{n})$ for linear batch codes with fixed $k$, $3\le k\le 5$.
    \item \emph{Lifted mult. codes and mult. codes with $m\!=\!2$~\cite{polyanskaya2020binary,asi2018nearly}:}\\ $\rB_q(n,n^{\epsilon}) = O(n^{3/4+\epsilon/2})$ for $0<\epsilon<\frac{1}{2}$. \label{item:qaryBatchLMC2}  
    \item \emph{Mult. codes~\cite{asi2018nearly}:}\\ $\rB_q(n,n^{\epsilon}) =  O(n^{\delta(\epsilon)})$ for $0\leq \epsilon \leq 1$, where $\delta(\epsilon) = \min\limits_{m > \frac{2}{1-\epsilon}} \left\{ 1- \frac{1}{m} +\frac{1+\epsilon}{2m-2} \right\}$. \label{item:qaryBatchMultOld}
    \item \emph{Finite geometry design~\cite{polyanskaya2020binary}:}\\ $\rB_q(n,n^{\epsilon}) = O(n^{\frac{3\epsilon+1}{2}}\log n)$ for $0<\epsilon<1/3$. \label{item:qaryBatchFG}
\end{enumerate}
 \end{lemma}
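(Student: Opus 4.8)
The plan is to establish the four items as instantiations of constructions already in the literature, the only non-bibliographic step being an optimization over a free parameter in item~\ref{item:qaryBatchMultOld}. Item~1 is immediate: the explicit array-based batch codes of~\cite{vardy2016constructions} (see also~\cite{asi2018nearly}) encode, for each fixed $k\in\{3,4,5\}$, a length-$n$ message into $n+O_k(\sqrt n)$ symbols and are $\F_q$-linear, so $\rB_q(n,k)=O_k(\sqrt n)$; this matches the general $\Omega(\sqrt n)$ lower bound for linear codes and needs no optimization.

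For items~\ref{item:qaryBatchLMC2} and~\ref{item:qaryBatchMultOld} I would take the $m$-variate order-$s$ multiplicity code of~\cite{asi2018nearly}: the evaluations, together with all Hasse derivatives of order less than $s$, of the degree-$<d$ polynomials in $m$ variables over $\F_q$ with $d=qs-\Theta(q)$. By~\cite{asi2018nearly} this is a $k$-batch code with $k=\Theta_m((q/s)^{m-1})$, the $k$ pairwise-disjoint recovery sets being assembled from pairwise-disjoint bundles of axis-parallel lines through the requested points. Its dimension over $\F_q$ is $n=\binom{d+m-1}{m}=\Theta_m((qs)^m)$ and its length is $N=\binom{s+m-1}{m}q^m$, so the redundancy is $N-n=\Theta_m(q^m s^{m-1})$. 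Setting $k=n^\epsilon$ and eliminating $q$ and $s$ from the two order relations $n=\Theta_m((qs)^m)$ and $n^\epsilon=\Theta_m((q/s)^{m-1})$ expresses the redundancy as $n^{\delta(\epsilon,m)+o(1)}$ for an explicit $\delta(\epsilon,m)$; minimizing over the admissible range of $m$ (those with $q\ge s\ge 1$, i.e.\ $m>\tfrac{2}{1-\epsilon}$) yields item~\ref{item:qaryBatchMultOld}. Specializing $m=2$ gives the exponent $\tfrac34+\tfrac\epsilon2$, and replacing the multiplicity code by the bi-variate lifted multiplicity code of~\cite{polyanskaya2020binary} keeps this exponent while improving the implicit constant, which is item~\ref{item:qaryBatchLMC2}.

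For item~\ref{item:qaryBatchFG} I would simply quote the batch code obtained from the finite-geometry (combinatorial design) construction of~\cite{polyanskaya2020binary}, which for $0<\epsilon<\tfrac13$ yields a $k=n^\epsilon$-batch code of redundancy $O(n^{(3\epsilon+1)/2}\log n)$.

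The work here lies entirely in the bookkeeping rather than in any single argument, and that is what I expect to be the delicate part. One must pin down the exact degree~$d$ (equivalently the exact slack $\Theta(q)$ forced by the disjoint-line routing) and the exact constant in $k=\Theta_m((q/s)^{m-1})$ used in~\cite{asi2018nearly}, so that $\delta(\epsilon,m)$ comes out precisely to $1-\tfrac1m+\tfrac{1+\epsilon}{2m-2}$ and not to some nearby exponent; one must track the parameter ranges ($3\le k\le 5$, $0<\epsilon<\tfrac12$, $m>\tfrac2{1-\epsilon}$, $0<\epsilon<\tfrac13$) faithfully through the substitution $k=n^\epsilon$; and one must verify that on each claimed interval no listed bound is dominated by another, so that the collection genuinely records the best presently known redundancies.
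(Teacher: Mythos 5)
This lemma is a catalogue of previously known results, and the paper does not prove it --- it simply cites \cite{asi2018nearly,vardy2016constructions,polyanskaya2020binary}. Your attempt to re-derive the items is welcome, but it goes wrong in items~\ref{item:qaryBatchLMC2} and~\ref{item:qaryBatchMultOld}.

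The central error is that you treat item~\ref{item:qaryBatchMultOld} as coming from a \emph{direct} batch-code construction from an $m$-variate multiplicity code. No such construction is known for $m\ge 3$: the obstruction is routing $k$ pairwise-disjoint \emph{bundles} of $s^{m-1}$ lines, which is exactly the difficulty the paper flags (``it is difficult to obtain such results from lifted multiplicity codes with a larger number of variables \dots we refer the reader to~\cite{asi2018nearly}''). The result actually comes from the generic reduction $\rB_q(n,n^\epsilon)\le \rP_q(n,n^{(1+\epsilon)/2})$ of~\cite[Theorem~30]{asi2018nearly}, applied to the $m$-variate multiplicity \emph{PIR} code whose exponent is $1-\tfrac1m+\tfrac{\epsilon'}{m-1}$; substituting $\epsilon'=\tfrac{1+\epsilon}{2}$ produces $1-\tfrac1m+\tfrac{1+\epsilon}{2m-2}$ and, since the PIR code needs $\epsilon'<\tfrac{m-1}{m}$, the constraint $m>\tfrac{2}{1-\epsilon}$. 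Your claimed constraint derivation from ``$q\ge s\ge 1$'' would only give $m\ge \tfrac{1}{1-\epsilon}$, which is wrong.

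Moreover, your bookkeeping cannot actually land on the stated $\delta(\epsilon)$. You take $n=\binom{d+m-1}{m}=\Theta_m((qs)^m)$ and redundancy $\Theta_m(q^ms^{m-1})$ counted in $\F_q$-symbols, together with $k=n^\epsilon=\Theta_m((q/s)^{m-1})$. Eliminating $q,s$ gives $q=n^{\frac{1}{2m}+\frac{\epsilon}{2m-2}}$, $s=n^{\frac{1}{2m}-\frac{\epsilon}{2m-2}}$, hence redundancy $n^{1-\frac{1}{2m}+\frac{\epsilon}{2m-2}}$, which differs from the claimed $1-\tfrac1m+\tfrac{1+\epsilon}{2m-2}$ by $\tfrac{1}{2m(m-1)}$. (The discrepancy arises because the PIR/batch codes in the paper are counted over the large alphabet $\Sigma=\F_q^{\binom{s+m-1}{m}}$, where $n\approx q^m$, not over $\F_q$ where $n\approx(qs)^m$.) Your smaller exponent would be a claimed \emph{improvement} over the literature, which is exactly the red flag that the underlying direct construction is not available.

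Finally, item~\ref{item:qaryBatchLMC2} is not a specialization of item~\ref{item:qaryBatchMultOld} to $m=2$: for $0<\epsilon<\tfrac12$, item~\ref{item:qaryBatchMultOld} requires $m>\tfrac{2}{1-\epsilon}>2$, and plugging $m=2$ into its formula gives $1+\tfrac\epsilon2>1$ (trivial). Item~\ref{item:qaryBatchLMC2} is a genuinely separate, direct bi-variate batch-code construction from \cite{asi2018nearly,polyanskaya2020binary} --- the one case where disjoint line-bundle routing works --- and the fact that its exponent $\tfrac34+\tfrac\epsilon2$ coincides with your (invalid for general $m$) formula at $m=2$ is a coincidence, not a derivation. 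Items~1 and~\ref{item:qaryBatchFG} are fine as straight citations.
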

 Non-binary batch codes obtained from lifted RS codes, as given in Theorem~\ref{th::asymptotic non-binary batch code}, require the minimal redundancy among all known non-binary $n^{\epsilon}$-batch codes in the range $\epsilon \in (0.432,0.582]$ and those obtained from PIR codes, as given in Theorem~\ref{th::asymptotic non-binary batch code from PIR codes}, are best for $\epsilon\in [0.582,1)$. For a more detailed comparison, see Table~\ref{tab:comparisonNonBinaryBatch} in Appendix~\ref{app:comparison}.
 
 \begin{lemma}\label{lem::knownBatch}
   The redundancy of binary batch codes satisfies:
 \begin{enumerate}
   \item \emph{Array construction~\cite{asi2018nearly,vardy2016constructions}:}\\ $\rB(n,k)=O_k(\sqrt{n})$ for linear batch codes with fixed $k$, $3\le k\le 5$.
    \item \emph{Binary image of  lifted mult. codes with $m=2$~\cite{polyanskaya2020binary}:}\\ $\rB(n,n^{\epsilon}) = O(n^{\log_4(3)+(2-\log_2(3))\epsilon} \log n)$ for $0\!<\!\epsilon\!<\!\frac{1}{2}$. \label{item:binBatchLMC2}
    \item \emph{Binary image of mult. codes~\cite{asi2018nearly}:}\\ $\rB(n,n^{\epsilon}) =  O(n^{\delta(\epsilon)}\log n)$ for $0\leq \epsilon \leq 1$, where $\delta(\epsilon) = \min\limits_{m > \frac{2}{1-\epsilon}} \left\{ 1- \frac{m(1-\epsilon)-2}{4m(m-1)} \right\}$. \label{item:binBatchMultOld}
    \item \emph{Finite geometry design~\cite{polyanskaya2020binary}:}\\ $\rB(n,n^{\epsilon}) = O(n^{\frac{3\epsilon+1}{2}}\log n)$ for  $0<\epsilon<1/3$. \label{item:binBatchFG}
\end{enumerate}
 \end{lemma}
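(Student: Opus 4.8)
The plan is to treat this lemma as a compilation of existing constructions rather than a statement requiring a new argument: each of the four bounds is the binary specialization of a code family already in the literature, so the ``proof'' reduces to identifying the source and, when that source states its bound over a large alphabet, carrying out the standard passage to the binary image.

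First I would dispose of item~1: the array-based batch codes of~\cite{vardy2016constructions} (see also~\cite{asi2018nearly}) are already binary and linear and, for fixed $k\in\{3,4,5\}$, achieve redundancy $O_k(\sqrt n)$, so there is nothing to do.

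Next I would handle items~2,~3 and~4 in a uniform way. Each is obtained from a $q$-ary batch code --- the bi-variate lifted multiplicity codes of~\cite{polyanskaya2020binary} for item~2, the $m$-variate multiplicity codes of~\cite{asi2018nearly} for item~3, and the finite-geometry design of~\cite{polyanskaya2020binary} for item~4 --- of length $N=q^{m}$ (or its analogue) over a field of size $q=\Theta(n^{1/m})$ up to lower-order terms. Replacing every $q$-ary symbol by its $\log q=\Theta(\log n)$-bit representation multiplies both the length and the redundancy by $\Theta(\log n)$ and rescales the dimension polynomially, while the image of a recovering set for a $q$-ary symbol is still a recovering set for any bit of that symbol, so the batch property and the exponent $\epsilon$ in $k=n^{\epsilon}$ survive the reparametrization. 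Pushing this change of variables through the $q$-ary redundancy exponents --- exactly as done for PIR codes in the proof of Theorem~\ref{th::binary disjoint repair group property code} --- produces the binary exponents stated here, the extra $\log n$ factor being the cost of the symbol expansion.

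The main obstacle is nothing more than the exponent bookkeeping in items~2 and~3: after taking the binary image one must re-optimize over the remaining free parameters (the number of variables $m$, the field size, the multiplicity order) so that the minimum collapses to exactly $\log_4 3 + (2-\log_2 3)\epsilon$ in item~2 and to the displayed $\min_{m>2/(1-\epsilon)}\{1-(m(1-\epsilon)-2)/(4m(m-1))\}$ in item~3. I expect this to be the same routine-but-fiddly computation as in Theorem~\ref{th::binary disjoint repair group property code}, now fed the batch-code redundancy exponents rather than the PIR ones; no new idea should be needed.
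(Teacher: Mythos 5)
Lemma~\ref{lem::knownBatch} is a survey statement: the paper offers no independent proof and simply cites \cite{vardy2016constructions,asi2018nearly,polyanskaya2020binary}, with the binary variants obtained from their $q$-ary analogues by the same symbol-expansion-to-bits argument the paper spells out in the remark after Lemma~\ref{lem::known2} and in Theorem~\ref{th::binary batch code}. Your proposal identifies the same sources and the same binary-image reduction (including the extra $\log n$ factor and the need to re-optimize the residual parameters), so it matches the paper's approach.
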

 Binary batch codes obtained from lifted RS codes, as given in Theorem~\ref{th::binary batch code}, require the minimal redundancy among all known binary $n^{\epsilon}$-batch codes in the range $\epsilon \in (0.41,0.648]$ and those obtained from PIR codes, as given in Theorem~\ref{th::binary batch code from PIR codes}, are best for $\epsilon\in [0.648,1)$. A more detailed comparison is given in Table~\ref{tab:comparisonBinaryBatch} in Appendix~\ref{app:comparison}.

On the other hand, the only non-trivial converse bound on the redundancy of systematic linear batch codes, yielding that $\rB_q(n,k)=\Omega(\sqrt{nk})$, was recently shown in~\cite{li2021improved}. %
\begin{figure*}
  \centering

\begin{tikzpicture}[thick,scale=1]
	\pgfplotsset{compat = 1.3}
	\begin{axis}[
		legend cell align={left},
		width = 0.9\textwidth,
		height = 0.63\textwidth,
		xlabel = {$\log_n(k)$},
		ylabel = {$\log_n(\rB_q(n,k))$},
		xmin = 0,
		xmax = 1,
		ymin = 0.48,
		ymax = 1.03,
		legend pos = south east]

		\addplot [loosely dotted, very thick, color=red, mark=none] table[x=eps,y=bin] {upperBoundBatchFile.txt};
		\addlegendentry{Prior work (binary, Lem.~\ref{lem::knownBatch})}%

		\addplot[color= blue, mark=none,dotted,very thick] table[x=eps,y=binNew] {upperBoundBatchFile.txt};
		\addlegendentry{Binary image of lifted RS codes (Th.~\ref{th::binary batch code})}
		
		\addplot[color= cyan, mark=none,densely dotted,very thick] table[x=eps,y=binNewPIR] {upperBoundBatchFile.txt};
		\addlegendentry{Binary image of lifted mult. codes (Th.~\ref{th::binary batch code from PIR codes})}

		\addplot [loosely dashed, very thick, color=purple, mark=none] table[x=eps,y=qary] {upperBoundBatchFile.txt};
		\addlegendentry{Prior work (non-binary, Lem.~\ref{lem::knownBatch non-binary})}%

		\addplot[color=OliveGreen, mark=none, dashed, very thick] table[x=eps,y=qaryNew] {upperBoundBatchFile.txt};
		\addlegendentry{Lifted RS codes (Th.~\ref{th::asymptotic non-binary batch code})}
		
		\addplot[color=green, mark=none,densely dashed, very thick] table[x=eps,y=qaryNewPIR] {upperBoundBatchFile.txt};
		\addlegendentry{Lifted mult. codes (Th.~\ref{th::asymptotic non-binary batch code from PIR codes})}
		
		\addplot[color=black,mark =none,thick] coordinates { (0,0.5) (1,1) };
		\addlegendentry{Lower bound~\cite{li2021improved}}
	\end{axis}
\end{tikzpicture}

  \caption{Comparison of bounds on the parameters of batch codes based on $m$-variate lifted RS and lifted multiplicity codes for different values of $m$ to the upper and lower bounds of \cite{wootters2016linear,rao2016lower,asi2018nearly,vardy2016constructions,polyanskaya2020binary}.
  }
  \label{fig:batch}
\end{figure*}
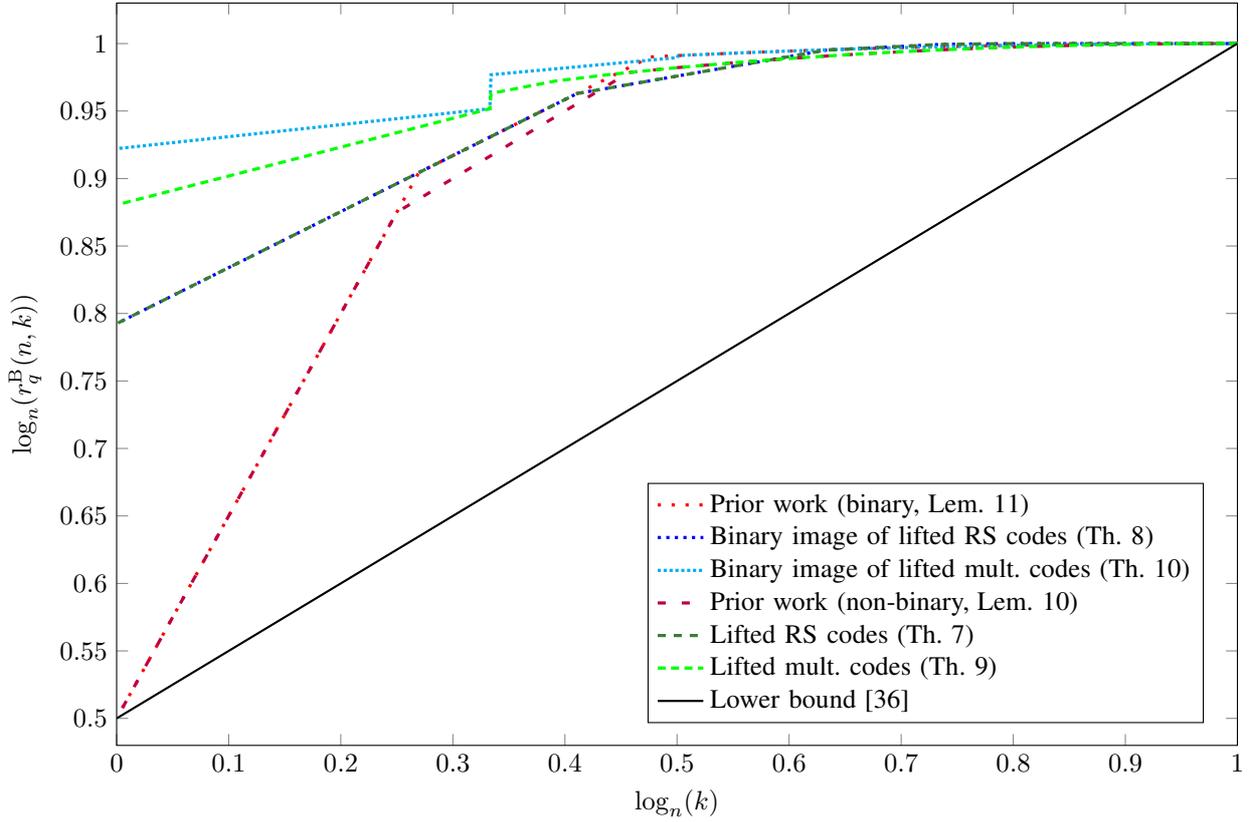

We illustrate the trade-off between parameters of batch codes in Figure~\ref{fig:batch}.

\subsection{Batch codes from lifted RS codes}
In this section, a new construction of binary batch codes is presented.
 To this end, we first provide a construction of non-binary $k$-batch codes of length $n$ based on the $m$-dimensional lift of an RS code. After that, we compute the parameters of this construction in the asymptotic regime for the availability parameter $k=n^{\varepsilon}$ with real $\varepsilon\in[\frac{m-2}{m}, \frac{m-1}{m}]$. Finally, we show how to convert this construction into a binary batch code.

We now provide a one-way connection between lifted RS codes and batch codes.%

\begin{theorem}\label{th:: nonbinary batch code}
Fix integers $q$, $m$ and $r<q$. The $[m,q-r,q]$ lifted RS code has the following properties:
\begin{enumerate}
\item The length of the code is $q^m$.
\item The rate of the code is $1-\Theta\left((q/r)^{\log \lambda_m-m}\right)$ as $q\to\infty$.
\item The code is a $k$-batch code for $k=q^{m-2}r$.
\end{enumerate}
\end{theorem}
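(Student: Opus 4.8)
Items~(1) and~(2) need no new work: the length is $q^m$ by Definition~\ref{def::lifted RS code}, and the rate statement is precisely Theorem~\ref{th:: number of bad monomials}. So the plan is to establish item~(3), and the idea is to reduce the batch property to a purely combinatorial routing statement and then settle that statement greedily; the code is used only through the fact that its restriction to any line has degree below $q-r$. Since the $[m,q-r,q]$ lifted RS code $\C$ is $\F_q$-linear (Definition~\ref{def::lifted RS code}), it admits a systematic encoding $F\colon\F_q^n\to\F_q^{q^m}$ with $n=\dim\C$ whose information symbols occupy some coordinate set $\mathcal{I}\subseteq\F_q^m$, so a request for the $j$-th message symbol is a request for a coordinate $\a^{(j)}\in\mathcal{I}\subseteq\F_q^m$. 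As the recovering sets in Definition~\ref{def::batch code} may be arbitrary subsets of the coordinates, it therefore suffices to prove: for every multiset $\{\a^{(1)},\dots,\a^{(k)}\}$ of $k$ points of $\F_q^m$ there exist pairwise disjoint $R_1,\dots,R_k\subseteq\F_q^m$ so that, for every codeword $\c=(f(\a))_{\a\in\F_q^m}$ of $\C$, the coordinate $c_{\a^{(j)}}=f(\a^{(j)})$ is an $\F_q$-linear function of $(c_\b)_{\b\in R_j}$.

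The recovery primitive is Lagrange interpolation along lines. By Definition~\ref{def::lifted RS code}, $f|_L\in\cF_{q}(q-r)$ for every $L\in\L_m$, so $f|_L$ has degree at most $q-r-1$ and is determined by its values at any $q-r$ of the $q$ points of $L$. Hence, for every $\a\in L$ and every $S\subseteq L\setminus\{\a\}$ with $|S|=q-r$ (such $S$ exist because $|L\setminus\{\a\}|=q-1\ge q-r$), the value $f(\a)$ is a fixed $\F_q$-linear combination of $(f(\b))_{\b\in S}$ with coefficients depending only on $\a$, $L$ and $S$; thus any such $S$ is an admissible recovering set for the coordinate $\a$, and it avoids $\a$ itself. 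Taking only $q-r$ points per set, rather than the whole punctured line, is what produces the factor $r$ in the final bound.

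For the routing I would proceed greedily over $j=1,\dots,k$. Suppose $R_1,\dots,R_{j-1}$ have been chosen; together they occupy at most $(j-1)(q-r)<k(q-r)=q^{m-2}r(q-r)$ coordinates. Call a line through $\a^{(j)}$ \emph{blocked} if it contains at least $r$ occupied coordinates other than $\a^{(j)}$. Each occupied coordinate $\neq\a^{(j)}$ lies on exactly one line through $\a^{(j)}$, so the number of blocked lines is at most $(j-1)(q-r)/r<q^{m-2}(q-r)$, whereas the number of lines through $\a^{(j)}$ is
\begin{align*}
\frac{q^m-1}{q-1}=q^{m-2}(q-r)+\bigl(1+q+\dots+q^{m-3}+(1+r)q^{m-2}\bigr)>q^{m-2}(q-r).
\end{align*}
Hence some line $L_j\ni\a^{(j)}$ is unblocked; it contains at most $r-1$ occupied coordinates among its $q-1$ points $\neq\a^{(j)}$, so at least $q-r$ free ones, and we let $R_j$ be any $q-r$ of them. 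Then $R_j$ is disjoint from $R_1\cup\dots\cup R_{j-1}$ and recovers $c_{\a^{(j)}}$ by the previous paragraph, with $g_j$ the corresponding Lagrange map; iterating over $j$ completes the proof.

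The only delicate step is the feasibility count in the preceding paragraph: one must check that the blocked lines never exhaust all lines through the current target. The displayed identity is exactly what makes this work for $k=q^{m-2}r$, and it shows the argument would fail for substantially larger $k$ (for $m=2$ the spare capacity is only $1+r$ lines). I also note that nothing here uses more than $f|_L\in\cF_{q}(q-r)$, so the same argument applies verbatim to the Reed--Muller code of order $q-r-1$; the point of Theorem~\ref{th:: nonbinary batch code} is that the lifted RS code achieves the same batch parameter while having the far larger rate of Theorem~\ref{th:: number of bad monomials}, which is why the connection between lifted RS and batch codes is only ``one-way''.
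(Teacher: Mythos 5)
Your proposal is correct and follows essentially the same route as the paper: reduce to the stronger claim that any multiset of $k$ codeword coordinates has disjoint line-based recovering sets, build those sets greedily, and at each step use a counting argument (here phrased via "blocked" lines, in the paper via averaging over $q^{m-1}$ lines) to show a line through the next target meets the already-used points in fewer than $r$ places, leaving $q-r$ fresh points for Lagrange interpolation of the degree-$<q-r$ restriction. Your bookkeeping is slightly tighter (counting $q-r$ points per recovering set and the exact number $(q^m-1)/(q-1)$ of lines, vs.\ the paper's looser $qk'$ points and $>q^{m-1}$ lines), but the argument is the same.
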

\begin{proof}[Proof of Theorem~\ref{th:: nonbinary batch code}]
The first property follows from Definition~\ref{def::lifted RS code}. The second property is implied by Theorem~\ref{th:: number of bad monomials}.

To prove the third property, we first note that a lifted RS code is a linear code over $\F_q$ and it can be encoded systematically. Let $\y$ be a codeword of the $[m,d,q]$ lifted RS code. Since every coordinate of $\y$ is simply the evaluation $f(\a)$ for some $\a\in\F_q^m$, we can index coordinates of our code by elements $\a$ from $\F_q^m$.

Now we shall prove a slightly stricter condition than required for $k$-batch codes, namely for every multiset of codeword symbols $\{y_{\a_1},\dots,y_{\a_k}\}$, there exist mutually disjoint sets $R_1,\dots, R_k\subset \F_q^m$ and some functions $g_1,\dots,g_k$ such that $y_{\a_i}=g_i(\y|_{R_i})$.
Let us prove the existence of $R_1,\dots,R_k$ by using the inductive procedure described below.

To reconstruct $y_{\a_1}$, we take an arbitrary line $L_1$ in $\F_q^m$ containing $\a_1$ and let $R_1=L_1\setminus \{\a_1\}$. As the restriction of polynomial $f$ to a line $L_1$ has degree less than $q-r$ by definition of lifted RS codes, we can interpolate $f|_{L_1}$ by reading evaluations of $f$ at some $q-r$ points on the line $L_1$ and evaluate $f|_{L_1}$ at point $\a_1$. Suppose that for $k'<k$, symbols $\{y_{\a_1},\dots,y_{\a_{k'}}\}$ can be already reconstructed by using recovering sets $R_1,\dots, R_{k'}$, where $R_i$ is a subset of a line $L_i$ from the space $\F_q^m$. Since the number of lines passing through the point $\a_{k'+1}$ is larger than $q^{m-1}$ and the total number of points already employed for recovering $\{y_{\a_1},\dots,y_{\a_{k'}}\}$ is at most $qk'$, we conclude that there exists a line $L_{k'+1}$ among $q^{m-1}$ ones such that the cardinality of the intersection
\begin{align*}
\left|L_{k'+1} \bigcap \left\{\bigcup_{i\in[k']}L_{i}\right\}\right|\le \frac{qk'}{q^{m-1}}< \frac{qk}{q^{m-1}}=r.
\end{align*}
Therefore, we can reconstruct $\y_{\a_{k'+1}}$ by reading evaluations of $f$ at some $q-r$ unused points on $L_{k'+1}$, interpolating the univariate polynomial $f|_{L_{k'+1}}$ of degree less than $q-r$ and evaluating the latter at point $\a_{k'+1}$.

Thus, the required multiset of codeword symbols can be determined by this procedure. This completes the proof.
\end{proof}
In the next statement we show a connection between parameters of the non-binary batch code constructed in Theorem~\ref{th:: nonbinary batch code}.
\begin{theorem}\label{th::asymptotic non-binary batch code}
Given a positive integer $m$, for any real $\varepsilon$ with $\frac{m-2}{m}<\varepsilon <\frac{m-1}{m}$ and a sufficiently large power of two $q$,
there exists a $n^\varepsilon$-batch code of length $N=q^m$ and dimension $n$ over $\F_q$
such that the redundancy, $N-n$,   satisfies
\begin{align*}
N-n = O_m\left(n^{(m-\log \lambda_m)\varepsilon + \frac{(m-1)\log \lambda_m}{m} - m + 2}\right).
\end{align*}
\end{theorem}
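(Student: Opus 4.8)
The plan is to instantiate Theorem~\ref{th:: nonbinary batch code} with a maximum degree $q-r$ tuned so that the batch parameter $k=q^{m-2}r$ reaches $n^{\varepsilon}$, and then to read off the redundancy directly from the rate estimate in that theorem. Concretely, fix a large power of two $q$ and set $r=\lceil q^{m\varepsilon-m+2}\rceil$. The hypothesis $\frac{m-2}{m}<\varepsilon<\frac{m-1}{m}$ is precisely the condition $0<m\varepsilon-m+2<1$, so $r\to\infty$ while $r<q$ for all sufficiently large $q$; hence the $[m,q-r,q]$ lifted RS code is well defined and Theorem~\ref{th:: nonbinary batch code} applies to it. Up to the rounding in the definition of $r$ we have $q/r=\Theta_m(q^{\,m-1-m\varepsilon})$, and since $m-1-m\varepsilon>0$ this tends to infinity; combined with $\log\lambda_m-m<0$, part~(2) of Theorem~\ref{th:: nonbinary batch code} gives rate $1-\Theta_m((q/r)^{\log\lambda_m-m})=1-o(1)$, so the dimension is $n=q^m(1-o(1))=\Theta_m(q^m)$ and in particular $q=\Theta_m(n^{1/m})$.

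Next I would verify the availability and compute the redundancy. Since $n\le N=q^m$, we get $n^{\varepsilon}\le q^{m\varepsilon}\le q^{m-2}r=k$, so by part~(3) of Theorem~\ref{th:: nonbinary batch code} the code is in particular an $n^{\varepsilon}$-batch code of length $N=q^m$. For the redundancy,
\[
N-n=q^m-n=q^m\cdot\Theta_m\!\left((q/r)^{\log\lambda_m-m}\right)=\Theta_m\!\left(q^{\,m+(m-1-m\varepsilon)(\log\lambda_m-m)}\right),
\]
and substituting $q=\Theta_m(n^{1/m})$ together with the identity
\[
\frac{m+(m-1-m\varepsilon)(\log\lambda_m-m)}{m}=(m-\log\lambda_m)\varepsilon+\frac{(m-1)\log\lambda_m}{m}-m+2
\]
yields $N-n=\Theta_m\bigl(n^{(m-\log\lambda_m)\varepsilon+\frac{(m-1)\log\lambda_m}{m}-m+2}\bigr)$, which is the claimed $O_m(\cdot)$ bound.

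The argument is essentially bookkeeping once Theorem~\ref{th:: nonbinary batch code} is available, so I do not expect a genuine obstacle. The two points that need care are: (i) the mild circularity that $r$ was chosen in terms of $n$ while $n$ depends on $r$ --- this is harmless because the rate correction is of strictly smaller order than $q^m$, so it is cleanest to fix $r$ as a function of $q$ as above and check $n^{\varepsilon}\le k$ a posteriori, as done; and (ii) tracking the hidden $\Theta_m$ constants when passing between the scales $q$, $N=q^m$, and $n$, which only affects the implied constant in the final estimate. The only slightly delicate inequality is $m-1-m\varepsilon>0$ (equivalently $\varepsilon<\frac{m-1}{m}$), which is exactly what forces $q/r\to\infty$ and hence the rate to tend to one.
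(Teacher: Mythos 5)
Your proposal is correct and follows essentially the same approach as the paper: both instantiate Theorem~\ref{th:: nonbinary batch code} with $r=\lceil q^{m\varepsilon-m+2}\rceil$, note $k=rq^{m-2}\ge q^{m\varepsilon}\ge n^{\varepsilon}$, and read off the redundancy $\Theta_m(r^m(q/r)^{\log\lambda_m}) = \Theta_m(q^m(q/r)^{\log\lambda_m-m})$, substituting $q=\Theta_m(n^{1/m})$ at the end. You supply slightly more bookkeeping than the paper (checking $r<q$, the rate tending to $1$, and the $n$-vs-$q$ circularity), but the argument is the same.
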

\begin{proof}[Proof of Theorem~\ref{th::asymptotic non-binary batch code}]
Let $r= \lceil q^{m\varepsilon-m+2}\rceil \ge n^{\varepsilon-(m-2)/m}$. By Theorem~\ref{th:: nonbinary batch code}, there exists a $k$-batch code with $k=rq^{m-2}\ge q^{m\varepsilon}\ge n^\varepsilon$ over $\F_q$ of length $N=q^m$ and redundancy at most
\begin{align*}
N-n&=O_m\left(r^m \lambda_m^{\ell-\log r}\right)\\
&=O_m\left(2^{\ell m (m\varepsilon-m+2)}\lambda_m^{\ell-\ell (m\varepsilon-m+2)}\right)\\
&=O_m\left(n^{(m-\log \lambda_m)\varepsilon + \frac{(m-1)\log \lambda_m}{m} - m + 2}\right) .
\end{align*}
\end{proof}
\begin{theorem}\label{th::binary batch code}
	Given a positive integer $m$, for any real $\varepsilon$ with $\frac{m-2}{m}<  \varepsilon <\frac{m-1}{m}$ and an integer $n$ sufficiently large,
there exists a binary $n^{\varepsilon}$-batch code of length $N$ and dimension $n$
such that the redundancy, $N-n$, satisfies
\begin{align*}
N-n = n^{(m-\log \lambda_m)\varepsilon + \frac{(m-1)\log \lambda_m}{m} - m + 2+o_m(1)}.
\end{align*}
\end{theorem}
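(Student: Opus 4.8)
The plan is to obtain the binary code by expanding the non-binary $\F_q$-linear batch code of Theorem~\ref{th::asymptotic non-binary batch code} over the binary alphabet, in direct analogy with the passage from Theorem~\ref{th::asymptotic non-binary disjoint repair group code} to Theorem~\ref{th::binary disjoint repair group property code}. Concretely, let $\C$ be the $n^\varepsilon$-batch code of length $N=q^m$ and dimension $n$ over $\F_q$ furnished by Theorem~\ref{th::asymptotic non-binary batch code}; by its construction in Theorem~\ref{th:: nonbinary batch code} it is $\F_q$-linear and admits a systematic encoder. I would fix an $\F_2$-linear isomorphism $\F_q\to\F_2^{\ell}$ with $\ell=\log q$ and apply it coordinate-wise, which produces an $\F_2$-linear, systematically encodable code $\overline{\C}$ of length $\overline{N}=\ell N$, dimension $\overline{n}=\ell n$, and redundancy $\overline{N}-\overline{n}=\ell\,(N-n)$.

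Next I would check that $\overline{\C}$ inherits the $k$-batch property for the same $k$. A requested multiset of $k$ information bits of $\overline{\C}$ lies inside a multiset of at most $k$ information $\F_q$-symbols of $\C$; the batch property of $\C$ then provides pairwise disjoint recovering sets $R_1,\dots,R_k$ for these symbols. Inflating each $R_j$ to the union of the $\ell$-bit blocks indexing its coordinates yields sets $\overline{R}_1,\dots,\overline{R}_k$ that are still pairwise disjoint (the blocks do not overlap), and $\overline{\C}|_{\overline{R}_j}$ determines $\c|_{R_j}$, hence the requested bit. Thus $\overline{\C}$ is a $k$-batch code with $k=n^\varepsilon$.

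The remaining step is to translate the parameters into $\overline{n}$. Because $\varepsilon<\tfrac{m-1}{m}$, the target exponent $\beta\coloneqq(m-\log\lambda_m)\varepsilon+\tfrac{(m-1)\log\lambda_m}{m}-m+2$ is strictly below $1$ (it equals $1$ at $\varepsilon=\tfrac{m-1}{m}$, and its slope $m-\log\lambda_m$ in $\varepsilon$ is positive by~\eqref{eq::bounds on top eigenvalue}), so $N-n=O_m(n^{\beta})=o(N)$, whence $n=\Theta(q^m)$ and $\ell=\log q=\tfrac1m\log N=\Theta(\log n)$. Since $\ell<n^{\delta}$ for every $\delta>0$ and all large $n$, we get $\overline{n}=\ell n=n^{1+o_m(1)}$, hence $n=\overline{n}^{1+o_m(1)}$, $n^\varepsilon=\overline{n}^{\varepsilon+o_m(1)}$, and
\begin{align*}
\overline{N}-\overline{n}=\ell\,(N-n)=O_m\!\left(n^{\beta}\log n\right)=\overline{n}^{\,\beta+o_m(1)},
\end{align*}
which is the asserted redundancy after renaming $\overline{n},\overline{N}$ back to $n,N$; if one insists on precisely $\overline{n}^\varepsilon$ recovering sets it suffices to start from the non-binary code at a parameter $\varepsilon+\eta$ with $\eta\to0$ slowly and invoke continuity of $\beta$ in $\varepsilon$.

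I do not anticipate a substantive obstacle: this is a textbook binary expansion, and essentially all the work is in the $o_m(1)$ bookkeeping. The point that makes it clean — and the thing I would double-check most carefully — is that here the alphabet is small ($\ell=\Theta(\log n)$), unlike the exponentially large alphabet in the lifted multiplicity PIR construction of Theorem~\ref{th::binary disjoint repair group property code}, so the dimension inflates only by a $n^{o_m(1)}$ factor and the redundancy exponent in $\overline{n}$ coincides with the exponent $\beta$ in $n$; the secondary points needing attention are the preservation of disjointness under block-inflation and the identification of the batch parameters $\varepsilon$ and $\overline{\varepsilon}$.
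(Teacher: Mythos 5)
Your proposal follows the same route as the paper: take the $\F_q$-linear batch code from Theorem~\ref{th::asymptotic non-binary batch code}, expand each $\F_q$-symbol into $\log q = \Theta(\log n)$ bits, observe that the (multiset) recovering sets persist because the $\ell$-bit blocks are disjoint, and then absorb the $\log n$ factor into the $o_m(1)$ exponent. Your treatment of the batch-property inheritance is in fact slightly more careful than the paper's (which speaks of recovering sets for a single symbol in the style of the PIR argument rather than explicitly handling the multiset of $k$ requested bits), and you correctly flag the small discrepancy between $n^\varepsilon$ and $\overline{n}^{\varepsilon}$ coming from $\overline{n}=\Theta(n\log n)$ — a point the paper glosses over in the same way. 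Correct, same approach.
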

\begin{proof}[Proof of Theorem~\ref{th::binary batch code}]
	Let $\C$ be a non-binary batch code from Theorem~\ref{th::asymptotic non-binary batch code}. We construct the binary batch code $\C'$ from $\C$ by converting each symbol of the alphabet of size $q$ to $\log q=\log N^{1/m}=\frac{1}{m}\log N=\Theta(\log n)$ bits. Denote the length, dimension of the binary
	code by $N', n'$ respectively. Thus, $n'= \Theta(n \log n)$ and $N'=\Theta(N \log n)$. Therefore, $n=\Theta(n'/\log n')$. Denote by $r' = N'-n'$ the redundancy of the binary code and by $k'$ be the availability parameter of the new code.

	First, we note that the availability parameter of $\C'$ is at least that of $\C$. Indeed, we know that each bit in $\C'$ is a bit among $\log q$ bits representing some symbol in $\C$. For each recovering set of a symbol in $\C$, we have the corresponding recovering set for any bit from the image of this symbol in $\C'$. Therefore, $k'\ge k=n^\epsilon\ge (n'/\log n')^{\varepsilon}$.

	Second, we rewrite the redundancy $r'$ in terms of $n'$ as
	\begin{align*}
	r'&=N'-n'=O((N-n)\log n) \\
	&=O_m\left(n'^{(m-\log \lambda_m)\varepsilon + \frac{(m-1)\log \lambda_m}{m} - m + 2}\log n'\right) .
	\end{align*}
As for any $\delta>0$ and sufficiently large $n$ we have $\log n < n^{\delta}$, the required statement is proved.
\end{proof}

\subsection{Batch codes from PIR codes}
Batch codes from bi-variate lifted multiplicity codes were derived in~\cite{polyanskaya2020binary}, however, it is difficult to obtain such results from lifted multiplicity codes with a larger number of variables (for a similar discussion about batch codes from multiplicity codes, we refer the reader to~\cite{asi2018nearly}). However, by the generic connection between PIR and batch codes we are able to indirectly construct batch code from lifted multiplicity codes.

Recall the result from~\cite[Theorem 30]{asi2018nearly} which relates the redundancy of batch and PIR codes
$$
\rB_q(n,n^{\epsilon})\le \rP_q(n,n^{\frac{1+\epsilon}{2}}).
$$
Combining this bound with Theorems~\ref{th::asymptotic non-binary disjoint repair group code} and~\ref{th::binary disjoint repair group property code} yields the following statements.
\begin{theorem}[Non-binary batch codes]\label{th::asymptotic non-binary batch code from PIR codes}
  Given an integer $m\ge 3$, for any real $\varepsilon$ with $0<\varepsilon <\frac{m-2}{m}$ and a power of two $q$, there exists an  $n^\varepsilon$-batch code of length $N=q^m$ and dimension $n$ over $\Sigma$
  such that the redundancy, $N-n$,   and the alphabet size, $|\Sigma|$, satisfy
  \begin{align*}
    N-n &= O_m\left(n^{\frac{m-1}{m} + \frac{(1+\log\lambda_m -m)(1+\epsilon)}{2m-2}}\right),\\
    |\Sigma|&=	q^{\Theta_m(q^{m-\epsilon m^2/(m-1)})}.
  \end{align*}
  In other words, for $0<\epsilon<1$, the polynomial growth of the minimal redundancy of $n^\epsilon$-batch codes with dimension $n$ is
  \begin{align*}
    \log_n&\left(\rB_{|\Sigma|}(n,n^\epsilon)\right)\\
    &\le \min_{m \ge \lceil 2/ (1-\epsilon) \rceil}\left( \frac{m-1}{m} + \frac{(1+\log \lambda_m - m)(1+\epsilon)}{2m-2} \right).
  \end{align*}
\end{theorem}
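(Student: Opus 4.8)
The plan is to obtain this statement as an essentially immediate corollary of the non-binary PIR-code bound of Theorem~\ref{th::asymptotic non-binary disjoint repair group code} together with the generic reduction
\[
\rB_q(n,n^{\epsilon})\le \rP_q\bigl(n,n^{(1+\epsilon)/2}\bigr)
\]
recalled from~\cite[Theorem~30]{asi2018nearly}. Concretely, I would put $\epsilon' \coloneqq \tfrac{1+\epsilon}{2}$ and apply Theorem~\ref{th::asymptotic non-binary disjoint repair group code} with availability exponent $\epsilon'$: this yields, for every sufficiently large power of two $q$, an $n^{\epsilon'}$-PIR code of length $N=q^m$, dimension $n$, over the same alphabet $\Sigma$ as in Theorem~\ref{th::asymptotic non-binary disjoint repair group code} (of size $q^{\binom{s+m-1}{m}}$ with the order parameter $s$ of that construction), and with redundancy $N-n=O_m\bigl(n^{\frac{m-1}{m}+\frac{1+\log\lambda_m-m}{m-1}\,\epsilon'}\bigr)$. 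Viewing this same code as a batch code through the reduction above produces an $n^{\epsilon}$-batch code with the same length, dimension, and alphabet, and with redundancy bounded by the same quantity.

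Next I would verify that the hypotheses match up. Theorem~\ref{th::asymptotic non-binary disjoint repair group code} requires $0<\epsilon'<\tfrac{m-1}{m}$; substituting $\epsilon'=\tfrac{1+\epsilon}{2}$, this is equivalent to $0<\epsilon<\tfrac{2(m-1)}{m}-1=\tfrac{m-2}{m}$, which is exactly the range claimed, and the interval is non-empty precisely when $m\ge 3$. One also has to check that the internal parameter $s=\Theta_m\bigl(q^{1-\epsilon' m/(m-1)}\bigr)$ remains admissible (a power of two with $m\le s\le q$); this is already ensured for $q$ large enough in Theorem~\ref{th::asymptotic non-binary disjoint repair group code}, and the corresponding alphabet size $q^{\binom{s+m-1}{m}}$ gives the stated bound on $|\Sigma|$.

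Finally, the redundancy exponent follows from the arithmetic simplification
\[
\frac{m-1}{m}+\frac{1+\log\lambda_m-m}{m-1}\cdot\frac{1+\epsilon}{2}
=\frac{m-1}{m}+\frac{(1+\log\lambda_m-m)(1+\epsilon)}{2m-2},
\]
which is the asserted bound; by~\eqref{eq::bounds on top eigenvalue} we have $\log\lambda_m<m$, so this exponent is genuinely below $1$. For the ``in other words'' form one optimizes over all admissible $m$: the constraint $\epsilon<\tfrac{m-2}{m}$ is $m>\tfrac{2}{1-\epsilon}$, i.e.\ $m\ge\lceil 2/(1-\epsilon)\rceil$, and taking the minimum of the exponent over these $m$ yields the displayed estimate for $\log_n\bigl(\rB_{|\Sigma|}(n,n^\epsilon)\bigr)$.

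I do not expect a genuine obstacle: the result is a corollary, and the only points needing care are the translation of the feasible range under $\epsilon\mapsto(1+\epsilon)/2$ and the bookkeeping of the alphabet size inherited from the order-$s$ multiplicity structure. All the substantive content—the rate and redundancy bound for lifted multiplicity codes (Theorems~\ref{th:: code rate of lifted multiplicity codes}--\ref{th::asymptotic non-binary disjoint repair group code}) and the PIR-to-batch reduction—is already in place.
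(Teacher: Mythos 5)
Your proposal is correct and follows the same route the paper takes: the paper explicitly states that Theorem~\ref{th::asymptotic non-binary batch code from PIR codes} is obtained by combining the reduction $\rB_q(n,n^{\epsilon})\le \rP_q(n,n^{(1+\epsilon)/2})$ from~\cite[Theorem~30]{asi2018nearly} with Theorem~\ref{th::asymptotic non-binary disjoint repair group code}, and your substitution $\epsilon'=(1+\epsilon)/2$, the translated feasible range $0<\epsilon<(m-2)/m$ (forcing $m\ge 3$), and the exponent arithmetic all check out. One small bookkeeping caveat: carrying $\epsilon'$ through the construction actually yields an alphabet exponent $m-\epsilon' m^2/(m-1)$ rather than the $m-\epsilon m^2/(m-1)$ printed in the theorem, so the stated $|\Sigma|$ appears to reuse $\epsilon$ loosely; your derivation gives the correct quantity.
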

\begin{theorem}[Binary batch codes]\label{th::binary batch code from PIR codes}
  Given a positive integer $m$, for any real $\varepsilon$ with $0<  \varepsilon <\frac{m-2}{m}$ and an integer $n$ sufficiently large,
  there exists a binary $n^{\varepsilon}$-batch code of length $N$ and dimension $n$
  such that the redundancy, $N-n$, satisfies
  \begin{align*}
    N-n = n^{\frac{2m-1}{2m} + \epsilon \frac{(1 + 2\log \lambda_m - 2m)(1+\epsilon)}{4m-4}+o_m(1)}
  \end{align*}
  In other words, for $0<\epsilon<1$, the polynomial growth of the minimal redundancy of binary $n^\epsilon$-batch codes with dimension $n$ is
  \begin{align*}
    &\log_n\left(\rB(n,n^\epsilon)\right)\\
    &\ \ \le \min_{m \ge \lceil 2/ (1-\epsilon) \rceil}\left( \frac{2m-1}{2m} + \frac{(1+2\log \lambda_m - 2m)(1+\epsilon)}{4m-4} \right).
  \end{align*}
\end{theorem}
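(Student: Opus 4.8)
The plan is to obtain this bound directly from the binary PIR construction of Theorem~\ref{th::binary disjoint repair group property code} by pushing it through the generic batch-to-PIR reduction $\rB_q(n,n^{\epsilon})\le \rP_q(n,n^{(1+\epsilon)/2})$ of~\cite[Theorem 30]{asi2018nearly} recalled just above, specialized to the binary alphabet. Concretely, I would set $\epsilon':=\tfrac{1+\epsilon}{2}$ and first verify that the admissible ranges match: $\tfrac{1+\epsilon}{2}<\tfrac{m-1}{m}$ is equivalent to $\epsilon<\tfrac{m-2}{m}$, so the hypothesis $0<\epsilon<\tfrac{m-2}{m}$ is precisely what is needed in order to apply Theorem~\ref{th::binary disjoint repair group property code} with availability exponent $\epsilon'$.

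Next I would invoke Theorem~\ref{th::binary disjoint repair group property code} to obtain, for $n$ sufficiently large, a binary $n^{\epsilon'}$-PIR code of dimension $n$ whose redundancy is $n^{\frac{2m-1}{2m}+\frac{1+2\log\lambda_m-2m}{2m-2}\epsilon'+o_m(1)}$. Substituting $\epsilon'=\tfrac{1+\epsilon}{2}$ rewrites the exponent as $\frac{2m-1}{2m}+\frac{(1+2\log\lambda_m-2m)(1+\epsilon)}{4m-4}+o_m(1)$. Feeding this code into $\rB(n,n^{\epsilon})\le\rP(n,n^{\epsilon'})$ then yields a binary $n^{\epsilon}$-batch code of the same dimension $n$ whose redundancy $N-n$ is at most the above quantity, which is the first displayed estimate. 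The ``in other words'' statement follows by noting that the constraint $0<\epsilon<\tfrac{m-2}{m}$ rearranges to $m>\tfrac{2}{1-\epsilon}$, hence to $m\ge\lceil 2/(1-\epsilon)\rceil$, so that for fixed $\epsilon\in(0,1)$ the exponent may be optimized over all such $m$.

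I do not expect a genuine obstacle here: the whole argument is the change of variables $\epsilon\mapsto(1+\epsilon)/2$ applied to an already established construction. The only care required is bookkeeping — confirming that the reduction of~\cite{asi2018nearly} preserves both the binary alphabet and the information length $n$ (it acts coordinate-wise over the same field and retains the systematic part), and that the $o_m(1)$ term in Theorem~\ref{th::binary disjoint repair group property code}, which already absorbs the $\Theta(\log n)$ blow-up incurred when taking the binary image of the $q$-ary lifted multiplicity code, is unaffected by the substitution. An entirely analogous argument, with Theorem~\ref{th::asymptotic non-binary disjoint repair group code} in place of Theorem~\ref{th::binary disjoint repair group property code}, gives the non-binary companion statement, Theorem~\ref{th::asymptotic non-binary batch code from PIR codes}.
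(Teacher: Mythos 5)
Your proposal is correct and is exactly the argument the paper intends: the paper itself gives no separate proof of Theorem~\ref{th::binary batch code from PIR codes} beyond the sentence that ``combining [the bound $\rB_q(n,n^{\epsilon})\le \rP_q(n,n^{(1+\epsilon)/2})$] with Theorems~\ref{th::asymptotic non-binary disjoint repair group code} and~\ref{th::binary disjoint repair group property code} yields the following statements,'' and you have carried out precisely that substitution, including the correct range check $\tfrac{1+\epsilon}{2}<\tfrac{m-1}{m}\iff\epsilon<\tfrac{m-2}{m}$ and the exponent bookkeeping.
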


\section{Locally correctable codes}\label{ss:: LCCs}
In this section we show that a lifted multiplicity code is a locally correctable code (LCC) with certain parameters. Specifically, we provide a self-correction algorithm for lifted multiplicity codes.
\subsection{Preliminaries and prior work}
Unlike PIR codes, LCCs \cite{katz2000efficiency} explicitly require locality properties. Informally, a code is said to be locally correctable if given a vector that is sufficiently close to a codeword, each codeword coordinate can be recovered from a small subset of (possibly noisy) other positions with high probability. We give a formal definition of LCCs below.
\begin{defn}[Locally correctable code.] \label{def::LCC}
    A code $\C$ of length $N$ over an alphabet $\Sigma$ is said to be $(r, \delta, \xi)$-locally correctable if there exists a randomized correcting algorithm $\A$ such that
    \begin{enumerate}
    \item For all $\c\in \C$, $i\in [N]$ and all vectors $\y \in \Sigma^N$ such that the relative distance $\Delta(\y, \c)\leq \delta $, we have
      \begin{align*}
\Pr (\A(\y, i)=c_i)\geq 1-\xi.
\end{align*}
        \item $\A$ makes at most $r$ queries to $\y$.
    \end{enumerate}
\end{defn}
LDCs \cite{yekhanin2012locally} are defined similar to LCCs, except that there the algorithm is required to recover message symbols instead of codeword symbols. Note, that for linear codes local correctability is a strictly stronger notion than local decodability, as a systematically encoded LCC is always an LDC.

LCCs have been constructed employing different approaches such as RM codes, lifted RS codes~\cite{guo2013new}, multiplicity codes~\cite{kopparty2014high}, and tensor codes~\cite{ben2006LCC,viderman2015combination}.  One typical question about LCCs is phrased as follows: given the high rate of a code (close to 1), how to get the query complexity as small as possible. The current state-of-the-art construction provided in~\cite{kopparty2017high} has the sub-polynomial (in length) query complexity.
For an extensive discussion about other aspects of LCCs see~\cite{trevisan2004some, yekhanin2012locally, kopparty2017local} and the references therein.

\subsection{LCCs from lifted multiplicity codes} \label{ss::locally correctable codes}

One important ingredient to show the self-correction algorithm for lifted multiplicity codes is the following statement about hypergraphs. Recall that an \textit{$s$-partite hypergraph} $H$ is a pair $H=(V,E)$, where $V$ is the vertex set that can be partitioned into sets $V_1,\ldots,V_s$ so that each edge in the edge set $E$ consists of a choice of precisely one vertex from each part. By $K_l^{(s)}$ denote a \textit{complete $s$-partite hypergraph}, whose parts are all of equal size $l$.

\begin{theorem}[Follows from~{\cite[Theorem~1]{erdos1964extremal}}] \label{th::Turan's type theorem}
 		Let $n>sl$, $l>1$.  Then every $s$-partite hypergraph with $n$ vertexes and at least $n^{s-1/l^{s-1}}$ hyperedges contains a copy of $K^{(s)}_l$.
 \end{theorem}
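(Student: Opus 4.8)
The statement is the $s$-partite form of Erdős's box theorem, so the plan is either to quote that theorem directly or, more self-containedly, to reprove it by induction on the uniformity $s$ via a Kővári–Sós–Turán-type double count. First I would record a structural observation: in an $s$-partite $s$-uniform hypergraph $H$ with parts $V_1,\dots,V_s$, every copy of $K_l^{(s)}$ is automatically \emph{aligned}, i.e.\ its $i$-th part lies inside $V_i$ (fix one edge of the copy, relabel so that its $j$-th vertex lies in $V_j$, and then the remaining $l-1$ choices in each part are forced into the same $V_j$ by inspecting the edges of the copy). Hence it suffices to produce sets $W_1\subseteq V_1,\dots,W_s\subseteq V_s$ with $|W_i|=l$ such that every transversal $(w_1,\dots,w_s)\in W_1\times\cdots\times W_s$ is an edge of $H$; in particular the $s$-partite statement follows from the usual (non-partite) box theorem applied to the underlying $sn$-vertex hypergraph together with this alignment remark.

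For the reproof, interpret ``$n$ vertexes'' as $n$ vertices in each part and induct on $s$. The base case $s=1$ is trivial: a $1$-partite $1$-uniform hypergraph is a subset of $V_1$, so $\ge l$ edges already contain $K_l^{(1)}$, matching $n^{s-1/l^{s-1}}=1$ up to the suppressed constant. For the step, assume the claim for $s-1$, let $H$ have parts of size $n$ and $e\ge n^{s-1/l^{s-1}}$ edges, and for a tuple $\mathbf{e}'=(w_1,\dots,w_{s-1})\in V_1\times\cdots\times V_{s-1}$ let $d(\mathbf{e}')$ denote its codegree, the number of $u\in V_s$ with $\{w_1,\dots,w_{s-1},u\}\in E(H)$; then $\sum_{\mathbf{e}'}d(\mathbf{e}')=e$ and there are exactly $n^{s-1}$ tuples $\mathbf{e}'$. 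Counting pairs $(\mathbf{e}',U)$ with $U\subseteq V_s$, $|U|=l$ and $\mathbf{e}'\cup\{u\}\in E(H)$ for all $u\in U$ in two ways gives
\[
\sum_{U:\,|U|=l} e(H_U)\;=\;\sum_{\mathbf{e}'}\binom{d(\mathbf{e}')}{l},
\]
where $H_U$ is the common link of $U$: the $(s-1)$-partite $(s-1)$-uniform hypergraph on $V_1,\dots,V_{s-1}$ whose edges are the $\mathbf{e}'$ with $\mathbf{e}'\cup\{u\}\in E(H)$ for every $u\in U$.

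Next I would apply convexity of $t\mapsto\binom{t}{l}$ (extended by $0$ on $[0,l-1]$, which preserves convexity) together with Jensen's inequality to get $\sum_{\mathbf{e}'}\binom{d(\mathbf{e}')}{l}\ge n^{s-1}\binom{e/n^{s-1}}{l}$; the hypotheses $e\ge n^{s-1/l^{s-1}}$ and $n>sl$ (with $l>1$) force the average codegree $e/n^{s-1}\ge n^{1-1/l^{s-1}}$ to be comfortably larger than $l$, so $n^{s-1}\binom{e/n^{s-1}}{l}=\Theta_{s,l}\!\left(e^l n^{-(s-1)(l-1)}\right)$. Dividing by $\binom{n}{l}=\Theta_l(n^l)$, some $l$-set $U\subseteq V_s$ satisfies $e(H_U)=\Omega_{s,l}\!\left(e^l n^{-(s-1)(l-1)-l}\right)=\Omega_{s,l}\!\left(e^l n^{(s-1)-sl}\right)$. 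Substituting $e\ge n^{s-1/l^{s-1}}$ and simplifying the exponent, $l\,(s-1/l^{s-1})+(s-1)-sl=(s-1)-1/l^{s-2}$, hence $e(H_U)=\Omega_{s,l}\!\left(n^{(s-1)-1/l^{s-2}}\right)$, which meets the inductive hypothesis for uniformity $s-1$. Thus $H_U$ contains a copy of $K_l^{(s-1)}$ on some $W_1,\dots,W_{s-1}$, and then $W_1,\dots,W_{s-1},U$ form the desired $K_l^{(s)}$ in $H$.

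The only non-trivial part is the bookkeeping around constants: one has to check that the multiplicative constant hidden in ``$\ge n^{s-1/l^{s-1}}$ hyperedges'' survives the $s$-fold induction, and this is exactly where $n>sl$ is used — it keeps the average codegree bounded away from $l$ so that $\binom{e/n^{s-1}}{l}$ is genuinely of order $(e/n^{s-1})^l$ rather than $O(1)$ — and one must be careful to invoke $\binom{t}{l}$ only where it is convex and nonnegative. I expect this constant-tracking to be the main (though entirely routine) obstacle; the combinatorial content is the standard Zarankiewicz/Erdős count, and a reader content with a citation can simply take the statement from~\cite{erdos1964extremal} together with the alignment observation above.
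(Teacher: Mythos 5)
The paper does not prove this statement itself; it cites Erd\H{o}s's 1964 theorem and tacitly relies on the alignment observation you make at the start (in an $s$-partite $s$-uniform hypergraph every copy of $K_l^{(s)}$ is automatically aligned with the parts). Your self-contained sketch follows the standard Erd\H{o}s/K\H{o}v\'ari--S\'os--Tur\'an induction on the uniformity via a codegree double count and Jensen, and both the combinatorial skeleton and the exponent arithmetic $l\bigl(s-1/l^{s-1}\bigr)+(s-1)-sl=(s-1)-1/l^{s-2}$ are correct.

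However, the constant-tracking you defer as routine is exactly where the induction as written fails to close. Using $\binom{\bar d}{l}\ge(\bar d/l)^l$ (valid once $\bar d\ge l$) and $\binom{n}{l}\le n^l/l!$, the averaging step only yields some $U$ with $e(H_U)\ge \tfrac{l!}{l^l}\, n^{(s-1)-1/l^{s-2}}$; since $l!/l^l<1$ for $l>1$, this is strictly weaker than the inductive hypothesis at uniformity $s-1$, and the deficit compounds over the $s-1$ levels. The hypothesis $n>sl$ guarantees the average codegree exceeds $l$ so that the binomial coefficient is positive, but it does not make it $\gg l$ and cannot absorb the compounding $l!/l^l$ losses. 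To make this scheme self-reproducing one needs either a strengthened edge hypothesis $e\ge C(s,l)\,n^{s-1/l^{s-1}}$ with $C(s,l)>1$ tuned to the recursion, or the ``$n$ sufficiently large'' clause of Erd\H{o}s's actual Theorem~1, of which the paper's $n>sl$ is a (benign, in the asymptotic regime where Theorem~\ref{th::Turan's type theorem} is invoked) imprecision. One further mismatch worth flagging: you read ``$n$ vertexes'' as $n$ per part, whereas in the proof of Theorem~\ref{th:: self correction algorithm} the paper takes $n$ to be the total vertex count; your reading is the stronger statement, so no error is introduced, but the two conventions should not be conflated.
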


\begin{theorem}\label{th:: self correction algorithm}
Let $m$ be a fixed positive integer.    For $s^{m-2}=o(\log q)$ and a real $\alpha<1/4$, the $[m, s, qs-r, q]$ lifted multiplicity code is a $((q-1)s^{m-1}, \alpha \Delta_{min}, 2\alpha +o(1))$-locally correctable code, where $\Delta_{min}:=\left \lceil\frac{r-s+1}{s}\right\rceil\frac{q-s}{q^2}$.
\end{theorem}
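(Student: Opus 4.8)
The plan is to mimic the local correction idea underlying both lifted RS codes and multiplicity codes, but to exploit the hypergraph Turán-type bound (Theorem~\ref{th::Turan's type theorem}) to reduce the locality. Fix a coordinate $\w_0\in\F_q^m$ of the codeword that we wish to recover, i.e.\ we want the vector $c_{\w_0}=f^{(<s)}(\w_0)$ for the (unknown) polynomial $f$ that produced the codeword $\c$. The starting point is Lemma~\ref{lem::good direction and multivariate polynomial}: if we knew the values $g_{\v}^{(j)}(0)$ for all $0\le j<s$ and all directions $\v=(1,v_2,\dots,v_m)$ with $v_i\in Q_i$, where $Q_2,\dots,Q_m\subset\F_q$ with $|Q_i|=s$, then we could reconstruct $f^{(<s)}(\w_0)$. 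So it suffices to reliably recover, for each such direction $\v$, the first $s$ Hasse derivatives of $g_{\v}(T)=f|_{L_{\v}}$ at $T=0$, where $L_{\v}$ is the line $\w_0+T\v$. First I would describe, for a single direction $\v$, how to obtain those values from the noisy word $\y$: read the $q-1$ noisy symbols $y$ indexed by the points $\w_0+t\v$, $t\in\F_q\setminus\{0\}$; since $f|_{L_{\v}}$ agrees up to order $s$ with a univariate polynomial $h_{\v}$ of degree at most $qs-r-1$, and each symbol contains the full vector of order-$<s$ derivatives, this is (after a linear change of the derivative data to directional derivatives along $\v$) a noisy codeword of a univariate multiplicity code of length $q-1$; using a unique decoder for that code (as in~\cite{kopparty2014high,chin1976generalized}) we recover $h_{\v}$ and hence $g_{\v}^{(j)}(0)=h_{\v}^{(j)}(0)$, $0\le j<s$, correctly provided the fraction of corrupted points on $L_{\v}$ is below half the relative distance $\Delta_{min}$ of the univariate code. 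The total number of queries is $(q-1)$ per direction times $s^{m-1}$ directions, i.e.\ $(q-1)s^{m-1}$, matching the claimed locality; note all these lines pass through $\w_0$ but are otherwise disjoint, so the query sets are disjoint.

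The probabilistic heart of the argument is choosing the direction sets $Q_2,\dots,Q_m$ at random so that, with high probability, \emph{most} of the $s^{m-1}$ lines are "good" (carry few errors), and then invoking the hypergraph lemma to upgrade "most lines good" into "a full combinatorial cube $Q_2\times\dots\times Q_m$ of good lines exists". Concretely: call a point $\a\ne\w_0$ \emph{bad} if $y_{\a}\ne c_{\a}$; by hypothesis the global fraction of bad points is at most $\alpha\Delta_{min}$. For a uniformly random direction $\v=(1,v_2,\dots,v_m)$ with $v_i\in\F_q$, the expected fraction of bad points on $L_{\v}$ is at most $\alpha\Delta_{min}\cdot\frac{q}{q-1}(1+o(1))$, so by Markov the probability that $L_{\v}$ carries more than a $\tfrac12\Delta_{min}$-fraction of bad points is at most $2\alpha+o(1)$. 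Thus, in expectation, at most a $(2\alpha+o(1))$-fraction of all $q^{m-1}$ candidate directions are bad. Model directions by the $(m-1)$-partite hypergraph on vertex classes $\F_q$ (one per coordinate $v_2,\dots,v_m$) whose hyperedges are the good directions; it has at least $(1-2\alpha-o(1))q^{m-1}$ edges. Since $\alpha<1/4$ gives $1-2\alpha>1/2$, and $s^{m-2}=o(\log q)$ ensures $q^{m-1-1/s^{m-2}}=q^{m-1}/q^{1/s^{m-2}}=o(q^{m-1})$ is far below the edge count, Theorem~\ref{th::Turan's type theorem} applies (with $l=s$, $n=q$, after checking $q>s\cdot s$ and the degenerate $s=1$ case separately) and yields a copy of $K_s^{(m-1)}$, i.e.\ sets $Q_2,\dots,Q_m$ of size $s$ all of whose product directions are good. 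The algorithm therefore decodes each of the $s^{m-1}$ lines in this cube correctly and applies Lemma~\ref{lem::good direction and multivariate polynomial} to output $f^{(<s)}(\w_0)=c_{\w_0}$; the only failure event is that the random choice of $Q_i$'s (or rather, the existence of the cube) fails, which by the above happens with probability at most $2\alpha+o(1)$.

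A couple of routine points need to be handled carefully. One must verify that the per-line univariate decoding indeed only requires that the line carry fewer than $\tfrac12\Delta_{min}$-fraction of errors, i.e.\ that the relative distance of the restricted univariate multiplicity code is $\Delta_{min}=\lceil\frac{r-s+1}{s}\rceil\frac{q-s}{q^2}$ — this follows from the distance analysis in the proof of Theorem~\ref{th:: code rate of lifted multiplicity codes}, which shows that along each line $f|_{L}$ either vanishes identically up to order $s$ or is nonzero with high multiplicity on at most a bounded number of points; one should restrict attention to the $(q-s)q^{m-2}$ lines through $\w_0$ on which $f|_L$ is not order-$s$ equivalent to zero, and note that since $f^{(<s)}(\w_0)\ne0$ (we are in the nonzero case), a positive fraction of cube-directions still land among these. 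Also, the threshold computation giving the $2\alpha$ term should account for the $\frac{q}{q-1}$ discrepancy between counting all of $\F_q^m$ versus the $q-1$ nonzero parameters on a line, the rounding in $\lceil\cdot\rceil$, and the $(q-s)/q$ loss — all of which are absorbed into the $o(1)$.

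The main obstacle I anticipate is making the two "budgets" compatible: the decoder on each line tolerates an error fraction up to roughly $\tfrac12\Delta_{min}$, while Markov's inequality only guarantees that a $\approx(1-2\alpha)$-fraction of lines meet this bound, and then the hypergraph lemma needs the surviving edge count to comfortably exceed $q^{m-1-1/s^{m-2}}$. Both the "$\alpha<1/4$" hypothesis (so that $1-2\alpha>1/2$ beats the $o(1)$ slack) and the "$s^{m-2}=o(\log q)$" hypothesis (so that $q^{1/s^{m-2}}\to\infty$ slower than any power, making $q^{m-1-1/s^{m-2}}=q^{m-1+o(1)}$ negligible against a constant fraction of $q^{m-1}$... more precisely, so that the Turán threshold is $o(q^{m-1})$) are exactly what make this work, and getting the inequalities to line up — rather than any single deep idea — is where the real care is needed.
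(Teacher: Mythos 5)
Your algorithm and the scaffolding around it — reduce to recovering the first-$s$ Hasse derivatives along a cube of directions via Lemma~\ref{lem::good direction and multivariate polynomial}, decode each line with a univariate multiplicity decoder, bound the fraction of bad directions by $2\alpha+o(1)$ via a Markov-type estimate, and model good directions as an $(m-1)$-partite hypergraph on which the Erd\H{o}s Tur\'an-type theorem can bite because $s^{m-2}=o(\log q)$ makes the threshold $((m-1)q)^{(m-1)-1/s^{m-2}}=o(q^{m-1})$ — match the paper's proof in every structural respect. (One small slip: when you plug into Theorem~\ref{th::Turan's type theorem} the vertex count should be $n=(m-1)q$, not $n=q$.)

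The genuine gap is in the last step, where you go from ``at least a $(1-2\alpha-o(1))$ fraction of directions are good'' to ``the algorithm succeeds with probability $1-2\alpha-o(1)$.'' You apply the Tur\'an theorem \emph{once} to conclude that some good cube $Q_2\times\dots\times Q_m$ exists, and then write that the only failure event is ``that the random choice of $Q_i$'s (or rather, the existence of the cube) fails, which by the above happens with probability at most $2\alpha+o(1)$.'' These are two different events and you cannot substitute one for the other. Once $\y$ (and hence the set of bad directions) is fixed, the existence of a single good cube is a deterministic fact; the random algorithm, however, picks $(Q_2,\dots,Q_m)$ uniformly and succeeds only if the \emph{particular} cube it draws is entirely good. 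Knowing one good cube exists tells you nothing by itself about the chance the sampled cube lands on it, and the naive union bound gives failure probability on the order of $2\alpha s^{m-1}$, not $2\alpha$. The paper bridges exactly this gap by \emph{iterating} the Tur\'an theorem: as long as the number of hyperedges exceeds the threshold, extract a copy of $K^{(m-1)}_s$, delete one of its hyperedges, and repeat; this produces at least $q^{m-1}(1-2\alpha+o(1))-((m-1)q)^{m-1-1/s^{m-2}}=q^{m-1}(1-2\alpha+o(1))$ pairwise distinct copies, and the paper then lower-bounds the success probability by that count divided by $q^{m-1}$. Your proposal is missing this counting step entirely; producing one copy is not enough to control the probability over the random $Q_i$'s, and this is where the real work of the proof lies.
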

\begin{remark}
It is worth mentioning that the self-correction algorithm for multiplicity codes from~\cite{kopparty2014high}, which  has the query complexity $(q-1)5^m (s+1)^m$, also works well for lifted multiplicity codes.  In the algorithm proposed in the proof of Theorem~\ref{th:: self correction algorithm}, we impose a stronger requirement on the order of derivatives: $s^{m-2}=o(\log q)$ for our algorithm and $s\le q/5-1$ for the algorithm from~\cite{kopparty2014high}. However, our proposed algorithm has the query complexity $(q-1)s^{m-1}$, which implies a slightly better running time. For instance, the complexity  of our algorithm is $\Theta_m(s)$ times smaller when $m$ is fixed, $q\to\infty$ and $s=(\log q)^{1/(m-1)}$.
\end{remark}
\begin{proof}
  We prove this theorem by presenting a new self-correction algorithm $\A$ for lifted multiplicity codes. Consider a vector $\y=(y_1, \ldots, y_{q^m})=(y_{\a})|_{\a\in \F_q^m}$, which is a noisy version of the evaluation of the polynomial $f$. Say that we want to correct the value $f^{(<s)}(\w_0)$ with some $\w_0\in\F_q^m$.  The algorithm $\A$ consists of three steps. \medskip

  \textbf{Step 1:}
  Choose sets $Q_2, Q_3, \ldots, Q_m$, $Q_i\subset \F_q$, $|Q_i|=s$, independently according to the uniform distribution over all subsets of size $s$. Form a set $V$ of directions $\v=(1, v_2, \ldots, v_m), v_i\in Q_i$. Clearly, $|V|=s^{m-1}$.\medskip

  \textbf{Step 2:}
  For every $\v \in V$ define a polynomial $g_{\v}(T):=f(\w_0+T\v)$. By the definition of lifted multiplicity codes, this polynomial agrees with some univariate polynomial of degree less than $qs-r$ on its first  $s-1$ derivatives. Apply the decoding algorithm for a univariate multiplicity code from~\cite{kopparty2014high, sudan2001ideal} to noisy evaluations of $g_{\v}(T)$ to obtain an estimation $\hat{g}_{\v}(T)$ of the correct polynomial $g_{\v}(T)$. Note that this decoding algorithm can correct up to $\lfloor (d_{min}-1)/2 \rfloor$ errors, where $d_{min}:=\lceil\frac{r+1}{s}\rceil$.\medskip

  \textbf{Step 3:}
  Using Lemma~\ref{lem::good direction and multivariate polynomial} and polynomials $\hat{g}_{\v}(T)$, recover the value $f^{(<s)}(\w_0)$ to obtain $\hat f^{(<s)}(\w_0)$.\medskip

  We now present an analysis of the algorithm. Call a direction $\v$ \textit{good}, if the line $\w_0+T\v$ contains at most $\lfloor (d_{min}-1)/2 \rfloor$ errors. Note that if a direction $\v$ is good, then $\hat{g}_{\v}(T)\equiv_s g_{\v}(T)$. Thus, if all directions from $V$ are good, the algorithm  recovers the symbol correctly, i.e., $\hat f^{(<s)}(\w_0)=f^{(<s)}(\w_0)$. In the following we derive a bound on the probability that all directions from $V$ are good.

  Introduce an $(m-1)$-uniform $(m-1)$-partite hypergraph $H$, each part of which has size $q$. Index the elements within each part of the hypergraph with elements of $\F_q$. For every good direction $\v=(1, v_2, \ldots, v_{m})$, draw a hyperedge $(v_2, \ldots, v_{m})$ in $H$, where $v_i$ is a vertex from the $(i-1)$th part.
  Then the probability of the successful recovery of $f^{(<s)}(\w_0)$ is lower bounded by the number of copies of $K^{(m-1)}_s$ in $H$ divided by $q^{m-1}$.

  The total number of good directions (or hyperedges in $H$) is at least
  \begin{align*}
    q^{m-1}- \frac{\alpha \Delta_{min}q^m}{\lfloor (d_{min}-1)/2 \rfloor}=q^{m-1}(1-2\alpha +o(1)).
  \end{align*}
  We show how we can find a large number of copies of $K^{(m-1)}_s$ in $H$.
 As long as the number of hyperedges in $H$ is greater than $((m-1)q)^{m-1-1/s^{m-2}}$ we can find such a copy by Theorem~\ref{th::Turan's type theorem}. Then, we can spoil this copy by erasing one of its hyperedges and repeat the process for the obtained hypergraph. Obviously, all constructed copies of $K^{(m-1)}_s$ will be distinct. By this procedure, we can find at least
 \begin{align*}
   q^{m-1}(1-2\alpha+o(1))-&((m-1)q)^{m-1-1/s^{m-2}}\\
   &\quad \qquad =q^{m-1}(1-2\alpha+o(1))
 \end{align*}
 copies of $K^{(m-1)}_s$. Therefore, the probability of successful decoding is at least $1-2\alpha+o(1)$.
\end{proof}

\section{Conclusion}\label{ss::conclusion}
In this paper, we have investigated the rate, the distance, the availability and the self-correction properties of lifted Reed-Solomon codes and lifted multiplicity codes based on the evaluations of $m$-variate polynomials and discussed how to use them to construct batch codes, PIR codes, and LCCs. For some parameter regimes, such codes obtained from lifted RS and lifted multiplicity codes are shown to have a better rate/distance/availability/locality trade-off than other known constructions. In particular, our main results are:
\begin{enumerate}[wide]
\item We have improved the estimate on the rate of the $m$-dimensional lifts of RS codes when the field size is large. In particular, we have shown that for $r=O(1)$, the $[m,q-r,q]$ lifted RS code has rate $1-\Theta(q^{\log{\lambda_m}-m})$ as $q\to\infty$.
\item We have continued the study of lifted multiplicity codes initiated for the bi-variate case in~\cite{li2019lifted} for any number of variables $m\geq 3$. Specifically, we show the rate of the $[m,s,sq-r,q]$ lifted multiplicity code to be $1-O_m(s^{-1}(q/r)^{\log \lambda_m - m})$ and its relative distance to be $\Delta_{min}=\frac{r}{qs}(1+o(1))$, by analyzing the code obtained from the span of good monomials. An interesting open problem is to extend this analysis to the code spanned by all good \emph{polynomials}, i.e., the complete lifted multiplicity code.
\item We have proved that an $[m,s,sq-s,q]$ lifted multiplicity  code is a $k$-PIR code of dimension $n=q^m(1+o(1))$ with $k=(q/s)^{m-1}$. This improves the known upper bounds on the redundancy of PIR codes when $k$ is sublinear in $n$ and $k\ge \sqrt{n}$. For small enough $s$ and any constant $\alpha<1/4$, the $[m,s,sq-s,q]$ lifted multiplicity code is shown to be a $\left(qs^{m-1}, \alpha\Delta_{min}, 2\alpha \right)$-locally correctable code.
\item We have shown that an $[m,q-r,q]$-lifted RS code is also a $k$-batch code with $k=rq^{m-2}$ and, by a generic transformation, we provide results on batch codes obtained from lifted multiplicity codes.
This improves the known upper bounds on the redundancy of batch codes in some parameter regimes. On the other hand, there is no lower bound on the redundancy of batch and PIR codes other than that for $k\ge 3$ the redundancy of linear $k$-batch and $k$-PIR codes of length $N$ is $\Omega(\sqrt{kN})$ and $\Omega(\sqrt{N})$, respectively~\cite{li2021improved,rao2016lower,wootters2016linear}. Closing the (large) gap between the lower and upper bounds on the redundancy of both batch and PIR codes remains a major open problem.
\end{enumerate}

	\section{Acknowledgment}
	The authors are grateful to the anonymous reviewers for their    careful    reading    of    the    manuscript    and    their    many    insightful    comments    and    suggestions which improved both the exposition of the paper and the clarity of the proofs.

\newpage

\bibliographystyle{IEEEtran}
\bibliography{lifted}

\appendices

\section{Proof of Proposition~\ref{prop::code cardinality}}\label{ss::injective map}
The proof is twofold, we need to show that
\begin{enumerate}[wide=\parindent]%
    \item[\textit{(Distinction)}] the evaluation of every monomial $\X^{\d}$ with $\deg_q(\d)\leq s-1$, which we refer to as a \textit{type-$s$} monomial, gives a unique word
    \item[\textit{(Inclusion)}] these words are contained in the $[m,s,d,q]$ lifted multiplicity code as in Defintion~\ref{def::lifted mult code} .
\end{enumerate}

To show that the words are distinct, it is sufficient to prove that for an arbitrary non-trivial linear combination, written as $f(\X)$, of type-$s$ monomials, its evaluation is not equal to the all-zero codeword. Our proof is a straightforward generalization of~\cite[Lemma 14]{li2019lifted}.

We prove the proposition by induction on $m$ and $s$. More precisely, we deduce the statement for $(m, s)$ from the cases for $(m-1, s)$ and $(m, s-1)$. The base case $m=1$ is equivalent to~\cite[Lemma~11]{li2019lifted}. In the base case $s=1$ the degree of each variable in $f$ is at most $q-1$. Then the proposition follows from DeMillo–Lipton–Zippel Theorem~\cite{demillo1977probabilistic,zippel1979probabilistic}, which states that such polynomial can't have more than $q^m-(q-(q-1))^m=q^m-1$ zeroes.

Now we prove the inductive step. Assume that $f(\X)$ is a non-trivial linear combination of type-$s$ monomials such that $f(\X)\equiv_s 0$. Consider the polynomial $g(X_1,\ldots,X_{m-1}):=f(X_1,\ldots, X_{m-1}, c)$ in $m-1$ variables, where $c\in \F_q$ is fixed. By the inductive hypothesis, we conclude that $g\equiv_s 0$. Hence, $(X_m-c)$ divides $f(\X)$ for all $c\in \F_q$, so $(X_m^q-X_m)$ divides $f(\X)$. Therefore, $f(\X)$ can be represented as $f(\X)=(X_m^q-X_m)g(\X)$.

It is easy to see that $g(\X)$ is a linear span of type-$(s-1)$ monomials. Taking the $\i$th derivative of $f(\X)$ for any $\i\in\Z_{\ge}^m$ with $i_m\geq 1$ we obtain
\begin{align*}
  f^{(\i)}(\X)=(X_m^q-X_m)g^{(\i)}(\X)-g^{(\j)}(\X),
\end{align*}
where $\j=(i_1,\ldots, i_{m-1}, i_m-1)$. The left-hand side is equal to zero for all $\x\in \F_q^m$ and $\i\in\Z_{\ge}^m$ with $\deg(\i)\le s-1$. The right-hand side equals to $-g^{(\j)}(\x)$ for all $\x\in \F_q^m$ and all $\j\in\Z_{\ge}^m$ with $\sum\limits_{l=1}^{m-1} j_l<s-1$. By the induction hypothesis $g(\X)$ is the zero polynomial, thus, $f(\X)$ is the zero polynomial as well. This concludes the proof of the distinction property.

To show the inclusion, we prove that every $(d,s)^*$-good monomial $f(\X)=\X^\d$ over $\F_q$ satisfies the property that for any line $L\in\L_m$, the restriction $f|_L$ is equivalent up to order $s$ to an univariate polynomial of degree less than $d$. Let a line $L$ be parameterized as $(\w + \v T)|_{T\in\F_q}$ and $\0$ be the all-zero vector. Then, we have that
\begin{align*}
f|_L &=(\w+\v T)^\d \\
&= \sum_{\0\le \i\le \d}\prod_{j=1}^{m}v_j^{i_j}w_j^{d_j-i_j}\binom{d_j}{i_j} T^{i_j}\\
&\equiv_s \sum_{k=0}^{qs-1} c_k T^k := f^*(T),
\end{align*}
where $c_k$ denotes the coefficients of the unique polynomial of degree $\leq qs-1$ that is equivalent to $f|_L$ (cf. Proposition~\ref{pr::reducing the power}). Recall that $s$ and $q$ are powers of $2$. Hence, we have $f|_L(T) = f^*(T) \pmod{T^{qs} + T^s}$ by Proposition~\ref{pr::reducing the power}, so the coefficients $[T^s]f|_L$ that contribute to the coefficient $c_k$ are exactly those for which $s=\deg(\i) \Modsp{q}{s} = k$, and we obtain
\begin{equation}\label{eq::coefficients equivalent polynomial}
c_k\eqdef \sum_{\substack{\0\le \i\le \d \\ \deg(\i)\Modsp{q}{s} = k }}\prod_{j=1}^{m}v_j^{i_j}w_j^{d_j-i_j}\binom{d_j}{i_j}.
\end{equation}
By Definition~\ref{def::bad (d^*,s) monomial}, for $k\ge d$, there is no $\i\in\Z_{qs}^m$ such that $\i\le_2\d$ and $\deg(\i)\Modsp{q}{s} = k$. Thus, for $k\ge d$ and every $\i$ used in the summation of \eqref{eq::coefficients equivalent polynomial}, there exists some coordinate $j\in[m]$ such that $i_j\not\le_2 d_j$. By Lucas's Theorem (e.g., see~\cite{guo2013new,li2019lifted}), for integers $d_j=(d_j^{(\ell-1)},...,d_j^{(0)})_2$ and $i_j=(i_j^{(\ell-1)},...,i_j^{(0)})_2$ it holds that
\begin{equation*}
    \binom{d_j}{i_j} = \prod_{\xi=0}^{\ell-1} \binom{d_j^{(\xi)}}{i_j^{(\xi)}} \mod 2  .
\end{equation*}
It follows that if $i_j\not\le_2 d_j$ the coefficient $\binom{d_j}{i_j}=0$ in $\F_q$ (as $q$ is a power of two) and therefore $c_k=0$ for all $k\ge d$.

We have proved that the restriction of $\X^\d$ to any line is an univariate polynomial of degree at most $d-1$. Therefore, the $[m,s,d,q]$ lifted multiplicity code includes the codewords
\begin{equation*}
\{(\a^\d)|_{\a\in\F_q^m}\ : \ \X^\d \in \mathcal{F}_{q}(m,s,d) \} \ . %
\end{equation*}
The inclusion of their linear combinations over $\F_q$ follows trivially from the proof.
\qed

\section{Lifted multiplicity code and lifted multiplicity monomial code}\label{ss::equivalenceLiftedRS}

We now give an example showing that lifted multiplicity codes are not necessarily spanned by the set of good monomials. %
 Let $d=qs-2$, $s=2$, and $q>2$.
 Denote by $M_1(\X)$ and $M_2(\X)$ the monomials
\begin{align*}
  M_1(\X) &:= \X^{\d^{(1)}} = X_1^{qs-2}X_2\\
  M_2(\X) &:= \X^{\d^{(2)}} = X_1^{(s-1)q-1}X_2^q\ ,
\end{align*}
so $d_{1}^{(i)} = qs-2$, $d_{2}^{(1)} = 1$, $d_{1}^{(2)}=(s-1)q-1$, and $d_{2}^{(2)}=q$.
Both monomials are type-$s$ as
\begin{align*}
 \deg_q(\d^{(1)}) = \deg_q(\d^{(2)}) = qs-1 < 2(q-1) + (s-1)q
\end{align*}
Further, both are $(d,s)^*$-bad, as the vectors $\i^{(1)} = \d^{(1)}$ and $\i^{(2)} = \d^{(2)}$ fulfill Definition~\ref{def::bad (d^*,s) monomial} for each monomial, respectively. Also, their evaluation is not contained in an $[m,s,d,q]$ lifted multiplicity code, since for the line $(0,w_2)+(1,v_2)T \in \L_2$ we have
\begin{align*}
  [T^{qs-1}] M_1(T,w_2+v_2 T) &= v_2\\
  [T^{qs-1}] M_2(T,w_2+v_2 T) &= v_2^q \ .
\end{align*}
However, the evaluation of their sum, i.e., the polynomial
\begin{align*}
  P(\X) := M_{1}(\X) + M_{2}(\X) \ ,
\end{align*}
is contained in the $[m,s,d,q]$ lifted multiplicity code as
\begin{align*}
  [T^{qs-1}] &P(w_1 + v_1T,w_2+v_2 T) \\
             &= [T^{qs-1}]  M_1(w_1 + v_1T,w_2+v_2 T)\\
  &\qquad + [T^{qs-1}]  M_2(w_1 + v_1 T,w_2+v_2 T) \\
  &= v_1^{qs-2}v_2 + \underbrace{v_1^{(s-1)q-1}v_2^q}_{\stackrel{\mathsf{(a)}}{=} v_1^{qs-2}v_2} = 0 \ ,
\end{align*}
where $\mathsf{(a)}$ holds because $v_1, v_2 \in \F_q$.

\section{Examples of rate improvements through lifting}

To provide some intuition and show how lifting can improve the rate of lifted RS codes and lifted multiplicity codes, we provide examples for fixed sets of parameters.

\subsection{RM codes vs. lifted RS codes} \label{ss::improvement RM vs LRS}
Let $f(X_1,X_2)=X_1^2 X_2^2$. Then the $[2,3,4]$ lifted RS code includes the codeword $\c=(f(a_1,a_2))|_{(a_1,a_2)\in\F_4^2}$ as for every line $L$, the degree of $f|_L$ is at most $2<3=d$. Indeed, given a line $L$ parameterized as $(w_1 + v_1 T , w_2 + v_2 T)|_{T\in \F_4}$ in $\F_4^2$, we have
\begin{align*}
f|_L&=f(v_1 T +w_1, v_2 T +w_2)=(v_1 T +w_1)^2 (v_2 T +w_2)^2\\
&\overset{(i)}{=}(v_1^2 T^2 +w_1^2)(v_2^2 T^2 +w_2^2)\\
&\overset{(ii)}{=} (v_1^2w_2^2+v_2^2w_1^2)T^2
+v_1^2v_2^2T+w_1^2 w_2^2,
\end{align*}
where in $(i)$ we used the property $2v = 0$ for any $v\in\F_4$, and $(ii)$ is implied by the fact that $T^4=T$ in $\F_4[T]$. On the other hand, the $2$-variate RM code of order $3$ doesn't contain $\c$ as the degree of $f$ is $4$, which is larger than $3$.

\subsection{Multiplicity codes vs. lifted multiplicity codes} \label{ss::improvement LMC vs MC}

Let $m=s=2$, $q=4$, and $d=qs-1=7$. Consider the monomial $M(\X) := X_1^2 X_2^6$. The degree of this monomial is $\deg(M(\X))=8>d$, so its evaluation is not contained in the $[2,2,7,4]$ multiplicity code, as it only contains evaluations of degree $<d$ polynomials.

By Definition~\ref{def::lifted mult code}, the evaluation of $M(\X)$ is contained in the $[2,2,7,4]$ lifted multiplicity code if for every line $L\in \mathcal{L}_m$ there exists a polynomial $g(T)\in \mathcal{F}_{q}(d)$ such that the restriction of $M(\X)$ to $L$ is equivalent to $g(T)$. First, note that $M(\X)$ is a type-$s$ monomial, as $\deg_q(M(\X)) = 1 \leq s-1$. Its evaluation in an arbitrary line $L\in\mathcal{L}_2$ is given by
\begin{align*}
    &M(\X)|_L \\
    &= (w_1 + v_1T)^2 (w_2+v_2T)^6\\
    &= (w_1^2 + v_1^2T^2)(w_2^6+w_2^4v_2^2 T^2 + w_2^2v_2^4 T^4 + v_2^6T^6) \\
     &= w_1^2w_2^6 + (w_1^2w_2^4v_2^2 + v_1^2w_2^6)T^2 + (w_1^2w_2^2v_2^4 + v_1^2w_2^4v_2^2)T^4 \\
   &\qquad + (w_1^2v_2^6+v_1^2 w_2^2v_2^4)T^6 + v_1^2v_2^6T^8 .
\end{align*}
By Proposition~\ref{pr::reducing the power} and because $s$ and $q$ are powers of $2$, we know that there exists an equivalent polynomial $M^*(T)$ of degree at most $qs-1=7$ such that $M(\X)|_L \equiv_s M^*(T) \pmod{T^{8}+T^2}$. Here, we obtain this polynomial by substracting $v_1^2v_2^6(T^{8}+T^2)$ from $M(\X)|_L$, which gives
\begin{align*}
  M^*(T) &=w_1^2w_2^6 + (w_1^2w_2^4v_2^2 + v_1^2w_2^6+v_1^2v_2^6)T^2 \\
  &\quad + (w_1^2w_2^2v_2^4 + v_1^2w_2^4v_2^2)T^4 + (w_1^2v_2^6+v_1^2 w_2^2v_2^4)T^6 .
\end{align*}
As the degree of this polynomial is $\deg(M^*(T))<d=7$ its evaluation is contained in the $[2,2,7,4]$ lifted multiplicity code, thereby increasing its dimension compared to the $[2,2,7,4]$ multiplicity code.

\section{Comparison of New Bounds to Known Results} \label{app:comparison}

In Tables~\ref{tab:comparisonNonBinaryPIR} we summarize the ranges of $\epsilon$ in which each bound on the required redundancy of $n^{\epsilon}$-PIR and $n^{\epsilon}$-batch codes of dimension $n$ is best among the known results.
\begin{table*}[h]
  \renewcommand\arraystretch{1.3}
  \centering
  \caption{Non-binary PIR codes.}
  \begin{tabular}{cccc}
    Given in &Reference& Based on & Best for $\epsilon$ in \\ \hline
    Lemma~\ref{lem::known1}, Item \ref{item:asi}) & \cite{asi2018nearly}& multiplicity codes & -- \\
    Lemma~\ref{lem::known1}, Item \ref{item:FGW}) & \cite{FGW17} & partially lifted codes & -- \\
    Lemma~\ref{lem::known1}, Item \ref{item:LiLMC}) & \cite{li2019lifted}& lifted mult. codes with $m=2$& $(0,\frac{1}{2}]$ \\
    Lemma~\ref{lem::known1}, Item \ref{item:GuoQaryPIR}) & \cite{guo2013new}& lifted RS codes& -- \\
    Lemma~\ref{lem::known1}, Item \ref{item:LRS}) & \cite{polyanskii2019lifted}& lifted RS codes with $m=3$& $\{2/3\}$ \\ \hdashline
    Theorem~\ref{th::asymptotic non-binary disjoint repair group code} & This work & lifted mult. codes & $(0,1)$
  \end{tabular}
  \label{tab:comparisonNonBinaryPIR}
\end{table*}

\begin{table*}[h]
  \renewcommand\arraystretch{1.3}
  \centering
  \caption{Binary PIR codes.}
  \begin{tabular}{cccc}
    Given in &Reference& Based on & Best for $\epsilon$ in \\ \hline
    Lemma~\ref{lem::known2}, Item \ref{item:binPIRLRS})&\cite{guo2013new} & lifted RS codes & $\{0.5\}$ \\
    Lemma~\ref{lem::known2}, Item \ref{item:binPIRasi})&\cite{LC04, asi2018nearly}&array and one-step majority logic dec. codes & $[0,0.273]\setminus \bigcup_{a\in \mathbb{N}}[\frac{\log(2-2^{-a})}{2a}, \frac{1}{2a}]$ \\
    Lemma~\ref{lem::known2}, Item \ref{item:binPIRmult})&\cite{asi2018nearly}&binary image of mult. codes & -- \\
    Lemma~\ref{lem::known2}, Item \ref{item:binPIRLMC2})&\cite{li2019lifted}&binary image of lifted mult. codes with $m=2$ & $[0.273,0.5)$ \\
    Lemma~\ref{lem::known2}, Item \ref{item:binPIRLRS3})&\cite{polyanskii2019lifted}&binary image of lifted RS codes with $m=3$ & $\{2/3\}$ \\
    Lemma~\ref{lem::known2}, Item \ref{item:binPIRwedge})& \cite{hastings2020wedge} & wedge-lifted codes & $  [0,0.273]\bigcap \left\{\bigcup_{a\in \mathbb{N}}[\frac{\log(2-2^{-a})}{2a}, \frac{1}{2a}]\right\}$\\ \hdashline
    Theorem~\ref{th::binary disjoint repair group property code} & This work & binary image of lifted mult. codes & $[0.273,1)$
  \end{tabular}
  \label{tab:comparisonBinaryPIR}
\end{table*}

\begin{table*}[h]
  \renewcommand\arraystretch{1.3}
  \centering
  \caption{Non-binary batch codes.}
  \begin{tabular}{cccc}
    Given in &Reference& Based on & Best for $\epsilon$ in \\ \hline
    Lemma~\ref{lem::knownBatch non-binary}, Item \ref{item:qaryBatchLMC2}) &  \cite{polyanskaya2020binary,asi2018nearly}&lift. mult. codes and mult. codes with $m=2$ & $[0.25,0.432]$ \\
    Lemma~\ref{lem::knownBatch non-binary}, Item \ref{item:qaryBatchMultOld}) & \cite{asi2018nearly}& PIR codes (mult. codes with $m\ge 3)$ & -- \\
    Lemma~\ref{lem::knownBatch non-binary}, Item \ref{item:qaryBatchFG}) & \cite{polyanskaya2020binary}&finite geometry design & $(0,0.25]$\\ \hdashline
    Theorem~\ref{th::asymptotic non-binary batch code} & This work & lifted RS codes & $[0.432,0.582]$ \\
    Theorem~\ref{th::asymptotic non-binary batch code from PIR codes} & This work & PIR codes (lifted mult. codes) & $[0.582,1)$
  \end{tabular}
  \label{tab:comparisonNonBinaryBatch}
\end{table*}

\begin{table*}[h]
  \renewcommand\arraystretch{1.3}
  \centering
  \caption{Binary batch codes.}
  \begin{tabular}{cccc}
    Given in &Reference& Based on & Best for $\epsilon$ in \\ \hline
    Lemma~\ref{lem::knownBatch}, Item \ref{item:binBatchLMC2}) & \cite{polyanskaya2020binary} & binary image of  lifted mult. codes with $m=2$ & $[0.269,0.41]$ \\
    Lemma~\ref{lem::knownBatch}, Item \ref{item:binBatchMultOld}) & \cite{asi2018nearly} &  PIR codes (binary image of mult. codes with $m\ge 3$) & -- \\
    Lemma~\ref{lem::knownBatch}, Item \ref{item:binBatchFG}) & \cite{polyanskaya2020binary} & finite geometry design &  $(0,0.269]$ \\ \hdashline
    Theorem~\ref{th::binary batch code} & This work & binary image of lifted RS codes & $[0.41,0.648]$ \\
    Theorem~\ref{th::binary batch code from PIR codes} & This work & PIR codes (binary image of lifted mult. codes) & $[0.648,1)$
  \end{tabular}
  \label{tab:comparisonBinaryBatch}
\end{table*}

\end{document}